\newtheorem{theorem}{Theorem}[section]
\newtheorem{lemma}[theorem]{Lemma}
\newtheorem{definition}[theorem]{Definition}
\newtheorem{corollary}[theorem]{Corollary}
\newtheorem{remark}[theorem]{Remark}
\newtheorem{claim}[theorem]{Claim}
\newtheorem{hypothesis}[theorem]{Hypothesis}
\newtheorem{observation}[theorem]{Observation}
\newcommand{\wt}{\widetilde}
\newcommand{\eps}{\epsilon}
\newcommand{\N}{\mathbb{N}}
\newcommand{\A}{\mathcal{A}}
\newcommand{\R}{\mathbb{R}}
\renewcommand{\varepsilon}{\lambda}
\renewcommand{\tilde}{\wt}
\renewcommand{\eps}{\epsilon}
\newcommand{\ip}[2]{\langle {#1} , {#2} \rangle}
\newcommand{\norm}[1]{\| #1 \|}
\DeclareMathOperator*{\Z}{\mathbb{Z}}
\newcommand{\ket}[1]{\left| #1 \right\rangle}
\DeclareMathOperator{\poly}{poly}
\DeclareMathOperator{\polylog}{polylog}
\newcommand{\ceil}[1]{\lceil #1 \rceil}
\newcommand{\floor}[1]{\lfloor #1 \rfloor}
\newcommand{\id}{\mathrm{id}}
\newcommand{\ora}{\mathcal{O}}
\newcommand{\FG}{\mathrm{FG}}
\newcommand{\QFG}{\mathrm{QFG}}
\newcommand{\CP}{\ensuremath{\mathsf{CP}}}
\newcommand{\OV}{\ensuremath{\mathsf{OV}}}
\newcommand{\OVH}{\ensuremath{\mathsf{OVH}}}
\newcommand{\SAT}{\ensuremath{\mathsf{SAT}}}
\newcommand{\BCP}{\ensuremath{\mathsf{BCP}}}
\newcommand{\xiBCP}{(1+\xi)\text{-}\mathsf{BCP}}
\newcommand{\xiCP}{(1+\xi)\text{-}\mathsf{CP}}
\newcommand{\eCP}{\mathsf{CP}_{\epsilon}}
\newcommand{\eBCP}{\mathsf{BCP}_{\epsilon}}
\newcommand{\xieBCP}{(1+\xi)\text{-}\mathsf{BCP}_{\epsilon}}
\newcommand{\ZOV}{\ensuremath{\Z\text{-}\mathsf{OV}}}
\crefname{claim}{Claim}{Claims}
\crefname{procedure}{Procedure}{Procedures}
\definecolor{mygreen}{RGB}{80,180,0}
\definecolor{b2}{RGB}{51,153,255}
\newcommand{\nc}{\newcommand}
\nc{\nnl}{\nn \\ &}  
\nc{\fot}{\frac{1}{2}} 
\nc{\oo}[1]{\frac{1}{#1}} 
\newcommand{\ben}{\begin{enumerate}}
\newcommand{\een}{\end{enumerate}}
\nc{\mc}{\mathcal}
\nc{\onenorm}[1]{\L\| #1 \R\|_1} 
\nc{\Ra}{\Rightarrow}
\nc{\zo}{\{0,1\}}	
\title{On the Quantum Complexity of Closest Pair and Related Problems}
\author{Scott Aaronson\thanks{ Supported by a Vannevar Bush Fellowship from the US Department of Defense, a Simons Investigator Award, and the Simons ``It from Qubit'' collaboration.}}
\author{Nai-Hui Chia\thanks{ Supported by Aaronson's Vannevar Bush Faculty Fellowship from the US Department of Defense.}}
\author{Han-Hsuan Lin$^{\dag}$}
\author{Chunhao Wang$^{\dag}$}
\author{Ruizhe Zhang\thanks{ Supported by NSF Grant CCF-1648712.}}
\affil{Department of Computer Science, University of Texas at Austin.  \\ Email: \{aaronson,nai,linhh,chunhao,rzzhang\}@cs.utexas.edu}
\date{\empty}
\begin{document}

\begin{titlepage}
\maketitle
\begin{abstract}
The closest pair problem is a fundamental problem of computational geometry: given a set of $n$ points in a $d$-dimensional space, find a pair with the smallest distance. A classical algorithm taught in introductory courses solves this problem in $O(n\log n)$ time in constant dimensions (i.e., when $d=O(1)$). This paper asks and answers the question of the problem's quantum {time} complexity. Specifically, we give an $\tilde{O}(n^{2/3})$ algorithm in constant dimensions, which is optimal up to a polylogarithmic factor by the lower bound on the quantum query complexity of element distinctness. The key to our algorithm is an efficient history-independent data structure that supports quantum interference.

In $\polylog(n)$ dimensions, no known quantum algorithms perform better than brute force search, with a quadratic speedup provided by Grover's algorithm. To give evidence that the quadratic speedup is nearly optimal, we initiate the study of quantum fine-grained complexity and introduce the \emph{Quantum Strong Exponential Time Hypothesis (QSETH)}, which is based on the assumption that Grover's algorithm is optimal for \textsf{CNF-SAT} when the clause width is large.  We show that the na\"{i}ve Grover approach to closest pair in higher dimensions is optimal up to an $n^{o(1)}$ factor unless QSETH is false. We also study the bichromatic closest pair problem and the orthogonal vectors problem, with broadly similar results. 
\end{abstract}
\thispagestyle{empty}
\end{titlepage}

\section{Introduction}

In the closest pair problem ($\CP$), we are given a list of points in $\mathbb{R}^d$, and asked to find two that are closest.  (See \cref{fig:cp} for an illustration of this problem.) This is a fundamental problem in computational geometry and has been extensively studied. Indeed, $\CP$ is one of the standard examples in textbooks (such as~\cite{CLRS09} and~\cite{KT06}) to introduce the divide-and-conquer technique. Moreover, $\CP$ relates to problems that have critical applications in spatial data analysis and machine learning, such as empirical risk minimization~\cite{backurs2017fine}, point location~\cite{SH75,bespamyatnikh98}, time series motif mining~\cite{mueen09}, spatial matching problems~\cite{wong07}, and clustering~\cite{nan01}. Therefore, any improvement on $\CP$ may imply new efficient algorithms for related applications. 

\begin{figure}[h]
  \centering
  \includegraphics[width=0.6\textwidth]{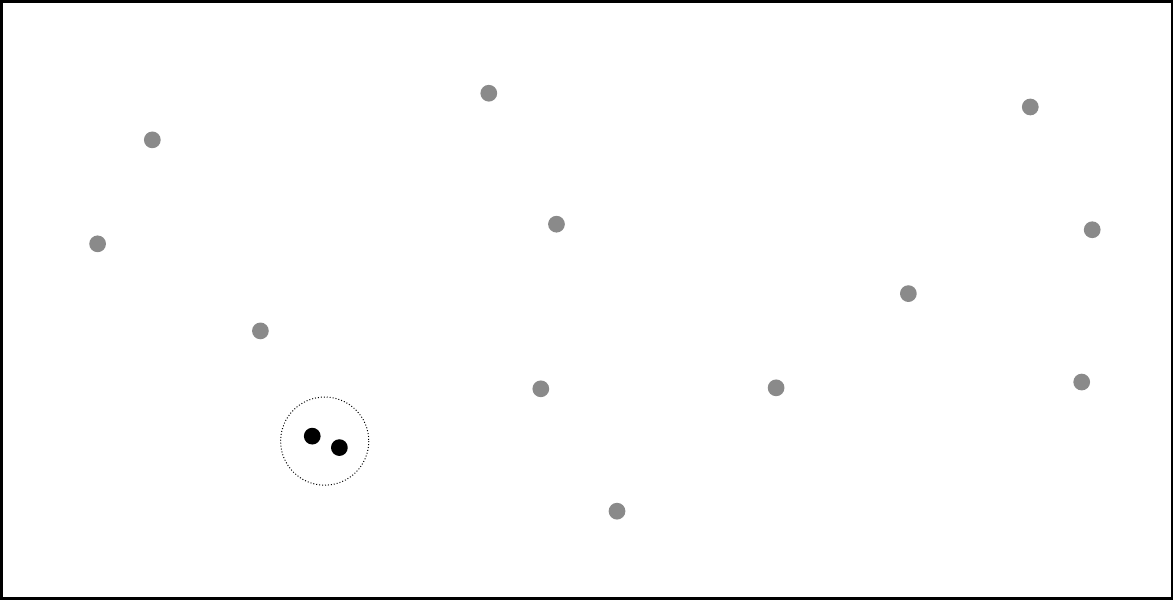}
  \caption{An instance of the $\CP$, where the the closest pair is labeled in the circle.\label{fig:cp}}
\end{figure}

Like with many other geometric problems, the hardness of $\CP$ rises as the dimension $d$ increases. Shamos and Hoey gave the first $O(n\log n)$ deterministic algorithm in $\mathbb{R}^2$ by using Voronoi diagrams~\cite{SH75}, improving on the trivial $O(n^2d)$ upper bound. Then, Bentley and Shamos gave an algorithm with  $2^{O(d)} n\log n$ running time via a divide-and-conquer approach~\cite{bs76}. A randomized algorithm by Khuller and Matias~\cite{KM95,rab76}
takes $2^{O(d)} n$ expected running time. A trivial lower bound for $\CP$ is $\Omega(n)$, since one must read all points to find the closest pair in the worst case. Yao showed an $\Omega(n \log n)$ lower bound for $\CP$ on the algebraic decision tree model~\cite{Yao89}. 

When we consider $\CP$ in $\polylog(n)$ dimensions, the running time of all existing algorithms blows up to $\Omega(n^2)$, and thus it is unknown if there exists an algorithm matching the unconditional lower bounds. Nevertheless, under the \emph{Strong Exponential Time Hypothesis (SETH)}, Karthik and Manurangsi~\cite{km19}, and David et al.~\cite{dkl18}, recently proved a conditional lower bound of $n^{2-o(1)}$ for $\CP$ in $\polylog(n)$ dimensions. This implies that the brute force approach is nearly optimal in $\polylog(n)$ dimensions unless SETH is false.  SETH was introduced by Impagliazzo and Paturi~\cite{IP01}, and is the assumption that for all $\epsilon>0$, there exists an integer $k>2$ such that no algorithm can solve $k$-SAT in time $O(2^{(1-\epsilon)n})$. 
 
The main idea behind the results of~\cite{km19,dkl18} is to prove a ``fine-grained'' reduction from \textsf{CNF-SAT} to $\CP$ in $\polylog(n)$ dimensions. Fine-grained reductions are reductions between computational problems that keep track of the exact polynomial exponents. For instance, \cite{km19} showed that \textsf{CNF-SAT} with $2^{n(1-o(1))}$ time is reducible to $\CP$ in $\polylog n$ dimensions with $n^{2-o(1)}$ time, and thus the lower bound for \textsf{CP} in $\polylog n$ dimensions is $n^{2-o(1)}$ unless SETH is false.

Surprisingly, to our knowledge, the quantum time complexity of $\CP$ was hardly investigated before. The trivial quantum algorithm for $\CP$ is to use Grover's search algorithm on all $n^2$ pairs, which takes $O(nd)$ time. Sadakane et al.~\cite{SST01} sketched a quantum algorithm that runs in  $O(n^{1-1/(4\ceil{d/2})})$ time. Volpato and Moura~\cite{vm10} claimed a quantum algorithm that uses $O(n^{2/3})$ \emph{queries}, but no analysis was given of the \emph{running time}, and as we will see, the conversion from the query-efficient algorithm to a time-efficient algorithm is nontrivial. As for the lower bound, any quantum algorithm for $\CP$ needs $\Omega(n^{2/3})$ time, since Aaronson and Shi~\cite{AS04} proved such a lower bound for element distinctness, and $\CP$ contains element distinctness as a special case, where a closest pair has distance $0$.

In this work, we resolve the quantum time complexity of $\CP$. In constant dimensions, we observe that by using a quantum walk for element distinctness~\cite{ambainis07,MNRS11}, we can achieve $O(n^{2/3})$ queries for $\CP$.  However, to obtain the same time complexity, the algorithm needs some geometric data structure that supports fast updates and checking, and that---crucially---is ``history-independent'', i.e., the data structure is uniquely represented, disregarding the order of insertion and deletion.
History-independence is essential since different representations of the same data would destroy quantum interference between basis states. 

We propose a geometric data structure that is history-independent and that supports fast checking and updates. Our data structure works by discretizing $\mathbb{R}^d$ into hypercubes with length $\epsilon/\sqrt{d}$. Then, we use a hash table, skip lists, and a radix tree to maintain the locations of the points and hypercubes. This data structure is history-independent, and we can easily find pairs with distance at most $\epsilon$ with it.  We then find the closest pair by a binary search. By using our data structure and a quantum walk~\cite{ambainis07,MNRS11}, we achieve quantum time complexity $\widetilde{O}(n^{2/3})$.   

For $\CP$ in $\polylog(n)$ dimensions, one may expect a conditional lower bound under SETH. However, SETH fails when quantum algorithms are considered since a simple application of Grover's search algorithm on all assignments solves \textsf{CNF-SAT} in time $\tilde{O}(2^{n/2})$. Furthermore, existing fine-grained reductions may require time greater than $O(2^{n/2})$. 

In this paper, we introduce the \emph{Quantum Strong Exponential Time Hypothesis (QSETH)} and \emph{quantum fine-grained reductions}. We define QSETH as
follows. 
\begin{definition}[QSETH]
For all $\epsilon>0$, there exists some $k\in \mathbb{N}$ such that there is no quantum algorithm  solving $k$-$\SAT$ in time $O(2^{(1-\epsilon)\frac{n}{2}})$. 
\end{definition}

We then observe that the classical definition of fine-grained reductions cannot capture the features of quantum reductions such as superposed queries and speedups from quantum algorithms. For instance, a fine-grained reduction may reduce problem $\mathsf{A}$ to solving many instances of problem $\mathsf{B}$ and then output the best solution; in this case, one can use Grover's search algorithm to achieve a quadratic speedup.  Therefore, instead of summing the running time over all instances as in~\cref{def:fg}, we use a quantum algorithm which solves all instances in superposition and outputs the answer. We give a formal definition of quantum fine-grained reductions in~\cref{def:qfgr} and show that under QSETH, any quantum algorithm for $\CP$ in $\polylog (n)$ dimensions requires $n^{1-o(1)}$ time. This implies that Grover's algorithm is optimal for the problem up to an $n^{o(1)}$ factor.  

Intuitively,  QSETH is the conjecture that applying Grover's search algorithm over all assignments in superposition is the optimal quantum algorithm for \textsf{CNF-SAT}. This is similar to SETH, which says that a brute force search is optimal for \textsf{CNF-SAT}.  A  series of works on \textsf{CNF-SAT} \cite{sch90,ppsz05,pp10,her15,ss17} shows that for some constant $c \in [1, 2]$, there exist (randomized) algorithms for $n$-variable $k$-\textsf{SAT} that run in time $2^{n(1-c/k)}$. As $k$ grows, the running time of these algorithms approach $2^n$. When $k$ is small, however, there are algorithms with better running times. For instance, when $k=3$, Sch\"oning \cite{sch90} obtained an algorithm  with $O(1.334^n)$ running time, which was later improved to $O(1.308^n)$ by Paturi et al.~\cite{ppsz05}. However, none of the above mentioned algorithms have good running time on larger $k$'s,  so SETH remains a plausible conjecture.

When $k$ is small enough, there are also quantum algorithms for $k$-\textsf{SAT}~\cite{Ambainis04,DKW05} running in time much less than $O(2^{n/2})$. However, these quantum algorithms mainly use Grover search to speed up the classical algorithms of~\cite{sch90,ppsz05}, and thus do not perform well for large $k$, either. Therefore, we conjecture that for large enough $k$, no quantum algorithm can do much better than Grover search.

Finally, we study the bichromatic closest pair problem ($\BCP$) and the orthogonal vector problem ($\OV$). Briefly, $\OV$ is to find a pair of vectors that are orthogonal given a set of vectors in $\{0,1\}^d \in \mathbb{R}^d$, and $\BCP$ is, given two sets $A, B$ (representing two colors) of $n$ points in $\mathbb{R}^{d}$, to find the pair $(a,b)$ of minimum distance with $a\in A$ and $b\in B$. 

We can summarize all of our results as follows.

\begin{theorem}[Informal]\label{thm:informal_1}
Assuming QSETH, there is no quantum algorithm running in time $n^{1-o(1)}$ for $\OV$, $\CP$, and $\BCP$ when $d=\polylog(n)$.
\end{theorem}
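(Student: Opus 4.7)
The plan is to prove each of the three lower bounds by constructing, for each of $\OV$, $\BCP$, $\CP$, a quantum fine-grained reduction (in the sense of \cref{def:qfgr}) from $k$-$\SAT$ with $n$ variables to an instance on $N = 2^{n/2}$ points in dimension $\polylog(N) = \poly(n)$, such that a quantum algorithm running in time $N^{1-\delta}$ for the target problem yields a quantum $k$-$\SAT$ algorithm running in time $2^{(1-\delta)n/2 + o(n)}$, contradicting QSETH for sufficiently large $k$. The three reductions chain as $k$-$\SAT \to \OV \to \BCP \to \CP$.

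Step one is a quantum version of Williams' split-and-list reduction from $k$-$\SAT$ to $\OV$. Assuming $m = O(n)$ via the sparsification lemma, partition the variables into halves $X_1, X_2$ of size $n/2$ and, for each assignment $\alpha \in \zo^{n/2}$ to $X_1$, consider the vector $u_\alpha \in \zo^m$ whose $i$-th coordinate is $1$ iff clause $i$ is not satisfied by $\alpha$; symmetrically define $v_\beta$ for assignments to $X_2$. The formula is satisfiable iff some pair $(\alpha, \beta)$ gives $u_\alpha \cdot v_\beta = 0$, and prefixing $u_\alpha$'s and $v_\beta$'s with two distinct color bits turns this into a single-set $\OV$ instance whose only possible orthogonal pairs are cross-pairs. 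The quantum-specific point is that the $2N$ vectors are never listed explicitly: a quantum $\OV$ algorithm interacts with the instance through an oracle that, in superposition, maps an index $\alpha$ to $u_\alpha$ by evaluating the $m$ clauses in $\polylog(N)$ time. Composing a hypothetical $N^{1-\delta}$-time quantum $\OV$ algorithm with this oracle thus yields a $2^{(1-\delta)n/2 + o(n)}$-time quantum algorithm deciding $\varphi$, contradicting QSETH.

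For $\BCP$ and $\CP$ I would compose this reduction with the classical geometric gadgets of~\cite{km19,dkl18}, but implemented in an oracle-preserving way. For $\BCP$, a standard trick is to pad the Boolean vectors to a common Hamming weight $w$ and then complement one side: the map $(a,b)\mapsto (a,\mathbf{1}-b)$ gives $\|a-(\mathbf{1}-b)\|^2 = d - 2w + 2(a\cdot b)$, so orthogonal pairs realize the unique minimum cross-color distance. For $\CP$, I would further ensure that within-color distances strictly exceed the minimum cross-color distance via an additional embedding that guarantees separation between the two color classes---using the fact that the embedded instance lies in a bounded region of $\mathbb{R}^{\polylog(N)}$, so a single carefully chosen ``color-tagging'' set of extra coordinates suffices. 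Each of these transformations is a fixed circuit applied coordinate by coordinate in $\polylog(N)$ time from the point's index, so quantum oracle access is preserved at every stage of the chain.

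The principal obstacle is not any individual embedding, but rather the faithful composition inside the quantum fine-grained framework. One must verify that a quantum algorithm querying the target instance through its oracle can be composed with the $\polylog(N)$-time reduction circuits to yield a bona fide quantum $k$-$\SAT$ algorithm respecting the exponent budget, and that superposition queries to the implicit instance preserve the downstream algorithm's interference---this holds because the reduction is implemented by reversible, deterministic circuits independent of the algorithm's internal state. Once the framework is in place, the three lower bounds follow uniformly from QSETH.
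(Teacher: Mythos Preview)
Your $\SAT\to\OV$ and $\OV\to\BCP$ steps are fine and match the paper's approach up to the choice of gadget (the paper uses a $5$-bit-per-coordinate encoding rather than your pad-and-complement trick, but either works and both are implementable as a $\polylog(N)$-time reversible oracle circuit).

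The gap is in your $\BCP\to\CP$ step. You assert that ``a single carefully chosen `color-tagging' set of extra coordinates suffices'' to push all within-color distances above the minimum cross-color distance. This is precisely what \emph{cannot} be done in $\polylog(N)$ dimensions: any such tagging $\tau$ must satisfy $\|\tau(a)-\tau(b)\|=\beta$ for every cross pair and $\|\tau(a)-\tau(a')\|,\|\tau(b)-\tau(b')\|>\beta$ for every same-color pair, which is exactly a realization of the complete bipartite graph $K_{N,N}$ in the sense of contact dimension (\cref{def:contact_dim}), and it is known that $\mathrm{cd}(K_{N,N})=\Theta(N)$. So your single-shot embedding would blow the dimension up to $\Theta(N)$, not $\polylog(N)$, and the conclusion would be vacuous.

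The paper (following~\cite{km19}) circumvents this by realizing not $K_{N,N}$ but a \emph{dense} bipartite subgraph $G$ with $\Omega(N^{2-\delta})$ edges and $\mathrm{cd}(G)=(\log N)^{O(1/\delta)}$ (\cref{thm:dense-cd}), then covering $K_{N,N}$ by $k=N^{O(\delta)}$ permuted copies of $G$ (\cref{lem:cover}). This produces not one but $r^2k$ many $\CP$ instances, where $r=N/n'$ and $n'\le N^{0.1}$; the answer to $\BCP$ is the minimum over all of them. Crucially, the quantum reduction does \emph{not} iterate over these instances sequentially---that would cost more than $N$---but instead runs quantum minimum-finding (\cref{thm:mim}) over the instance index, querying the $\CP$ algorithm in superposition and simulating each instance's input oracle on the fly. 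This is where the quantum fine-grained framework genuinely departs from the classical one, and it is the step your proposal is missing.
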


\begin{theorem}[Informal]
The quantum time complexity of $\CP$ in $O(1)$ dimensions\footnote{We actually give a slightly stronger result: the same time complexities still hold when $d=O\left(\frac{\log \log n}{\log \log \log n}\right)$.} is $\tilde{\Theta}(n^{2/3})$\footnote{The $\tilde{\Theta}$ notation is $\Theta$ with logarithmic factors hidden in both upper and lower bounds.}.
\end{theorem}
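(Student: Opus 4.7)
The lower bound $\Omega(n^{2/3})$ is immediate: \CP{} contains element distinctness as a special case (a closest pair at distance $0$ witnesses a collision), so the Aaronson--Shi lower bound on the quantum query complexity of element distinctness transfers without modification. The whole content of the theorem therefore lies in the matching upper bound, which I would establish in two layers. First, I reduce finding the closest pair to $O(\log n)$ decision problems of the form ``does there exist a pair of points within distance $\epsilon$?'' by binary-searching on $\epsilon$ over the $\poly(n)$ possible distance scales. Second, I solve each decision problem in $\tilde O(n^{2/3})$ time with an Ambainis-style quantum walk on the Johnson graph of $n^{2/3}$-element subsets of the input.

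The quantum-walk framework of Ambainis and of Magniez--Nayak--Roland--Santha yields the $n^{2/3}$ query count immediately, because the existence of a close pair is an element-distinctness--style relation. The entire challenge is to implement each setup, update, and checking step in $\polylog(n)$ time, so that the total \emph{time}, not just the query count, is $\tilde O(n^{2/3})$. To this end I would discretize $\mathbb{R}^d$ into an axis-aligned grid of cubes of side length $\epsilon/\sqrt{d}$; any pair of points within distance $\epsilon$ must lie in the same cube or in one of $O(3^d)=O(1)$ adjacent cubes, so checking whether an inserted point has a close neighbour reduces to probing $O(1)$ cells and scanning their small occupancy lists. For this to be compatible with the walk, the backing data structure must be \emph{history-independent}: two executions that arrive at the same subset $S$ via different insert/delete orderings must produce bit-identical memory states, or else the amplitudes attached to different classical paths to the same basis state fail to interfere, and the standard MNRS analysis breaks down.

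The main obstacle, and the crux of the construction, is therefore the design of a history-independent dictionary for the grid that supports $\polylog(n)$-time insert, delete, and neighbour-cell lookup. My plan is to assemble this from three canonical components: a history-independent hash table keyed by cell coordinates, a history-independent skip list (with levels/priorities derived deterministically from a hash of the key rather than from fresh per-operation randomness) to store the points inside each occupied cell, and a radix tree on cell coordinates to support enumeration of the set of non-empty cells. For each component I must argue two things separately: that the in-memory representation is a deterministic function of the underlying set, independent of the insertion/deletion history, and that every operation runs in $\polylog(n)$ expected time. Plugging this data structure into the Ambainis/MNRS walk yields $\tilde O(n^{2/3})$ time per decision instance, and the binary-search wrapper contributes only another logarithmic factor, completing the $\tilde O(n^{2/3})$ upper bound and hence the $\tilde\Theta(n^{2/3})$ characterization.
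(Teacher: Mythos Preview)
Your proposal is correct and follows essentially the same route as the paper: the $\Omega(n^{2/3})$ lower bound via element distinctness, the binary-search reduction to the decision version, the MNRS/Ambainis walk on the Johnson graph, the grid discretization of side $\epsilon/\sqrt d$, and the history-independent data structure built from a hash table, skip lists, and a radix tree are exactly the components the paper assembles. Two small technical points you may want to tighten in the write-up: with boxes of side $\epsilon/\sqrt d$ the number of neighbour cells is $(2\lceil\sqrt d\rceil{+}1)^d$ rather than $3^d$ (irrelevant for constant $d$, but needed for the footnoted extension), and the paper is careful to enforce \emph{worst-case} $\polylog(n)$ per operation (by truncating the skip-list routines and bounding the resulting error) rather than expected time, since unbounded-time branches would spoil reversibility.
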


\begin{theorem}[Informal]
For any $\delta>0$, there exists a quantum algorithm for $\BCP$ with $\tilde{O}(n^{1-\frac{1}{2d}+\delta})$ running time. There exists a quantum algorithm which solves $(1+\xi)$-approximate $\BCP$ in time $\tilde{O}(\xi^{-d}n^{2/3})$.
\end{theorem}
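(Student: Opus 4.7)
The statement has two independent conclusions and I would handle each separately.

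For the exact bound $\tilde{O}(n^{1-1/(2d)+\delta})$, the plan is a two-level Grover scheme layered on top of a spatial decomposition of $A$. First I would reduce the optimization problem to the decision version ``is the bichromatic closest-pair distance at most $r$?'' by binary search over $r$; this contributes only a $\polylog n$ overhead. For a fixed $r$, subdivide $\mathbb{R}^d$ into axis-aligned cells of side $r/\sqrt{d}$, so that any pair at distance $\le r$ lies within $O(1)$ grid steps. Using a balanced $kd$-tree-style decomposition of $A$, each point of $B$ has at most $\tilde{O}(n^{1-1/d})$ candidate $A$-points in its relevant cell-neighbourhood (the standard worst-case range-count bound). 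Then apply Grover minimum-finding as the outer routine over $b \in B$, taking $\tilde{O}(\sqrt{n})$ iterations, and within each call use an inner Grover search over the $\tilde{O}(n^{1-1/d})$ candidates for one within distance $r$, costing $\tilde{O}(n^{1/2-1/(2d)})$. The product is $\tilde{O}(n^{1-1/(2d)})$, and the $n^\delta$ slack absorbs repeated amplitude-amplification success-probability boosting and the binary-search logs.

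For the $(1+\xi)$-approximate bound $\tilde{O}(\xi^{-d}n^{2/3})$, the plan is to re-use the quantum-walk/element-distinctness framework that earlier in the paper delivers the $\tilde{O}(n^{2/3})$ algorithm for exact $\CP$. Again, binary-search over the candidate BCP distance $r$ at $\polylog n$ cost. Run an MNRS-style walk on $n^{2/3}$-sized subsets of $A\cup B$ on the Johnson graph, marking a subset iff it contains a bichromatic pair at distance $\le (1+\xi)r$. The supporting data structure is the same history-independent hash-grid plus skip-list plus radix-tree object built for $\CP$, but with grid side $\xi r/\sqrt{d}$; each stored point now carries a colour tag, and the check predicate must examine the $O(\xi^{-d})$ cells within $O(1/\xi)$ grid steps of any insertion, since a $(1+\xi)$-close bichromatic pair may straddle that many cells in each coordinate. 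Insertion and deletion accordingly cost $\tilde{O}(\xi^{-d})$, and the MNRS framework still gives $O(n^{2/3})$ walk steps, yielding a total of $\tilde{O}(\xi^{-d}n^{2/3})$.

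The main obstacle in part~(i) is engineering the decomposition of $A$ so that the $\tilde{O}(n^{1-1/d})$ candidate bound really holds in the worst case (adversarial clustering can concentrate many $A$-points near a single $B$-point) without preprocessing itself dominating the budget; a recursive median-cut $kd$-tree should suffice but the bookkeeping is finicky in moderate $d$. The main obstacle in part~(ii) is preserving history-independence once colour tags and expanded $O(\xi^{-d})$-neighbourhood flags are attached to each cell: the canonical form of the data structure must still depend only on the point set and not on the insertion order, which is non-trivial when many cells share boundary-adjacent state, and failing this destroys the quantum interference that the MNRS walk relies on.
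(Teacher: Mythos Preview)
Your plan for part~(ii) is essentially what the paper does: reduce to the decision version, walk on the Johnson graph with the history-independent radix-tree/skip-list/hash-table structure, refine the grid to side $\Theta(\xi\epsilon/\sqrt{d})$, tag points by colour, and pay $O((\sqrt{d}/\xi)^d)$ per update to scan the enlarged neighbourhood. The only cosmetic difference is that the paper walks on the \emph{tensor product} of two Johnson graphs (one per colour) rather than a single Johnson graph on $A\cup B$; both give the same $n^{2/3}$ parameters.

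Part~(i), however, does not work as written. The sentence ``each point of $B$ has at most $\tilde{O}(n^{1-1/d})$ candidate $A$-points in its relevant cell-neighbourhood'' is false: nothing prevents the adversary from placing all $n$ points of $A$ in a single $r/\sqrt{d}$-cell, so the candidate count is $n$, not $n^{1-1/d}$. The $O(n^{1-1/d})$ bound you cite is the number of \emph{kd-tree cells} a range query touches, not the number of points in the range, and Grover cannot be run over an adaptive tree traversal in the way you suggest. More fundamentally, building any global decomposition of $A$ (kd-tree, grid index, anything) requires reading all of $A$, which is already $\Omega(n)$ time and blows the $n^{1-1/(2d)}$ budget before the search even starts.

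The paper avoids both issues by never preprocessing all of $A$. It partitions $A$ into $n/r$ blocks of size $r\approx n^{1/d}$; for a single block it builds Clarkson's Voronoi-based nearest-neighbour structure in expected time $O(r^{\lceil d/2\rceil+\delta})$, which answers exact nearest-neighbour queries in $O(\log n)$ (no candidate enumeration needed). An inner quantum minimum-finding over $b\in B$ then costs $\tilde{O}(\sqrt{n})$, and an outer minimum-finding over the $n/r$ blocks costs $\tilde{O}(\sqrt{n/r})$; the block's data structure is rebuilt inside the outer oracle, so its cost is $r^{\lceil d/2\rceil+\delta}+\sqrt{n}$ per outer call. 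Balancing at $r=n^{1/d}$ gives $\tilde{O}(n^{1-1/(2d)+\delta})$. The two ingredients you are missing are (a) subsampling $A$ so preprocessing stays sublinear, and (b) a data structure whose \emph{query} time is polylogarithmic regardless of how the points cluster, which is exactly what a nearest-neighbour structure provides and a grid/kd-tree range count does not.
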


\begin{theorem}[Informal]
The quantum time complexity of  $\OV$ in $O(1)$ dimensions\footnote{The same time complexities still hold when $d=O(\log \log n)$.} is $\Theta(n^{1/2})$.  
\end{theorem}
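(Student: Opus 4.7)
The plan is to establish matching upper and lower bounds of $\Theta(n^{1/2})$ for the quantum time complexity of $\OV$ in constant dimension, via a direct algorithm and a reduction from unstructured search.

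For the upper bound, I exploit the fact that when $d = O(1)$ there are only $2^d = O(1)$ distinct possible vectors in $\{0,1\}^d$, and consequently at most $2^{2d} = O(1)$ pairs $(u,v)$ with $u \cdot v = 0$. The algorithm enumerates all such orthogonal type-pairs; for each, it runs two Grover searches over $[n]$ (the predicate $v_i = u$ is checkable in $O(d) = O(1)$ time): first for an index $i$ with $v_i = u$, then for an index $j \neq i$ with $v_j = v$. If both succeed, output $(i,j)$; if all $O(1)$ candidate pairs fail, report that no orthogonal pair exists. Each Grover invocation costs $O(n^{1/2})$ time, and the outer loop has constant length, yielding total running time $O(n^{1/2})$. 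Correctness is immediate, since an orthogonal pair exists iff some enumerated type-pair is doubly inhabited.

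For the lower bound, I reduce the $n$-bit unstructured-search (OR) problem, whose quantum query complexity is $\Omega(n^{1/2})$ by the BBBV theorem, to $\OV$ in $d = 2$ dimensions. Given oracle access to $b_1, \dots, b_n$, define $v_i := (1,0)$ if $b_i = 1$ and $v_i := (1,1)$ if $b_i = 0$ for $i \leq n$, and $v_{n+1} := (0,1)$. Any two of $v_1, \dots, v_n$ share first coordinate $1$ and thus have inner product at least $1$, while $v_{n+1} \cdot v_i = 0$ iff $v_i = (1,0)$ iff $b_i = 1$. Hence an orthogonal pair exists iff some $b_i = 1$. Since each coordinate of the $\OV$ input can be computed from $b$ with $O(1)$ queries, any $T(n)$-time quantum $\OV$ algorithm yields an $O(T(n))$-query algorithm for OR, forcing $T(n) = \Omega(n^{1/2})$.

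Neither direction is genuinely subtle in $O(1)$ dimensions; the main delicate point is the footnoted extension to $d = O(\log \log n)$. There, the naive enumeration over orthogonal type-pairs carries a $2^{2d} = \polylog(n)$ multiplicative overhead and yields only $\tilde{O}(n^{1/2})$. To recover the clean $\Theta(n^{1/2})$ in this regime, I would use a single amplitude-amplification stage over index--type pairs $(i,t)$ to identify all inhabited types at once in time $O(\sqrt{n \cdot 2^d})$, followed by a purely classical scan of the at most $2^{2d}$ orthogonal pairs among them. This bookkeeping is the only step that requires care; the shape of the argument is unchanged.
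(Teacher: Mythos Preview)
Your proposal is correct and follows essentially the same approach as the paper: a Grover-based upper bound exploiting that there are only $2^d=O(1)$ vector types, and a reduction from unstructured search to $2$-dimensional $\OV$ for the lower bound. The only notable organizational difference is that the paper first runs $2\cdot 2^d$ Grover searches to record which types are inhabited in each set and then checks all $2^{2d}$ type-pairs classically, whereas you enumerate orthogonal type-pairs directly and run two Grover searches per pair; this makes the paper's dependence on $d$ slightly better ($O(2^d\sqrt{n}+2^{2d})$ versus your $O(2^{2d}\sqrt{n})$), but both are $O(\sqrt{n})$ for constant $d$ and $\tilde{O}(\sqrt{n})$ for $d=O(\log\log n)$.
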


\cref{tab:summary} also summarizes what is known about upper and lower bounds on the classical and quantum time complexities of all of these problems. 

\setlength\extrarowheight{4pt}
\begin{table}[]
\centering
\resizebox{\textwidth}{!}{%
\begin{tabular}{|c|l|l|l|l|}
\hline
\multicolumn{1}{|l|}{} & Dimension & \textbf{} & Lower Bound & Upper Bound \\ \hline
\multirow{4}{*}{$\CP$} & \multirow{2}{*}{$\Theta(1)$} & Classical & $\wt{\Omega}(n)$~\cite{Yao89} & $\wt{O}(n)$~\cite{SH75,bs76,KM95} \\ \cline{3-5} 
 &  & Quantum & $\mathbf{\Omega(n^{2/3})}$~\cref{thm:cp_constant} & $\mathbf{\wt{O}(n^{2/3})}$~\cref{cor:main_cp} \\ \cline{2-5} 
 & \multirow{2}{*}{$\polylog n$} & Classical & $n^{2-o(1)}$ (Under SETH)~\cite{km19} & $O(n^{2})$ \\ \cline{3-5} 
 &  & Quantum & $\mathbf{n^{1-o(1)}}$ \textbf{(Under QSETH)}~\cref{thm:qseth} & $\mathbf{\wt{O}(n)}$~\cref{cor:trivial_alg} \\ \hline
\multirow{4}{*}{$\OV$} & \multirow{2}{*}{$\Theta(1)$} & Classical & $\Omega(n)$ & $O(n)$~\cite{CST17} \\ \cline{3-5} 
 &  & Quantum & $\mathbf{\Omega(n^{1/2})}$~\cref{thm:ov_constant} & $\mathbf{O(n^{1/2})}$~\cref{thm:ov_constant} \\ \cline{2-5} 
 & \multirow{2}{*}{$\polylog n$} & Classical & $n^{2-o(1)}$ (Under SETH)~\cite{wil05} & $n^{2-o(1)}$~\cite{AWY15,CW16} \\ \cline{3-5} 
 &  & Quantum & $\mathbf{n^{1-o(1)}}$ \textbf{(Under QSETH)}~\cref{thm:qseth} & $\mathbf{\wt{O}(n)}$~\cref{cor:trivial_alg} \\ \hline
 \multirow{5}{*}{$\BCP$} & \multirow{3}{*}{$\Theta(1)$} & Classical & $\Omega(n)$ & $O\bigl(n^{2-\frac{2}{\ceil{d/2}+1}+\delta}\bigr)$~\cite{AES91} \\ \cline{3-5} 
 &  & Quantum & $\mathbf{\Omega(n^{2/3})}$~\cref{thm:bcp-lo} & \begin{tabular}{@{}l@{}}  $\mathbf{\tilde{O}(n^{1-\frac{1}{2d}+\delta})}$ for $\BCP$~\cref{thm:bcp-exact} \\$\mathbf{\tilde{O}(\xi^{-d}n^{2/3})}$ for $\xiBCP$ \cref{thm:bcp-up} \end{tabular}\\ \cline{2-5} 
 & \multirow{2}{*}{$2^{O(\log^*(n))}$\footnotemark} & Classical & $n^{2-o(1)}$ (Under SETH)~\cite{che18} & $n^{2-o(1)}$~\cite{AWY15,CW16} \\ \cline{3-5} 
 &  & Quantum & $\mathbf{n^{1-o(1)}}$ \textbf{(Under QSETH)}~\cref{thm:qseth_2_bcp} & $\mathbf{\wt{O}(n)}$~\cref{cor:trivial_alg} \\ \hline
\end{tabular}
}
\caption{A summary of our quantum complexity results and comparison to classical results. The bold entries highlight our contributions in this paper.}
\label{tab:summary}
\end{table}
\footnotetext{$\log^* (n):=\log^*(\log n)+1$ for $n>1$ and $\log^* (1) := 0$. Hence, $2^{O(\log^* n)}$ is an extremely slow-growing function.}

\paragraph{Related work}
A recent independent work by Buhrman, Patro and Speelman \cite{buhrman2019quantum} also studied quantum strong exponential time hypothesis. They defined (a variant of) QSETH based on the hardness of testing properties on the set of satisfying assignments of a \textsf{SAT} formula, e.g., the parity of the satisfying assignments. Based on these hardness assumptions extended from the original QSETH, they gave conditional quantum lower bounds for $\OV$, the Proofs of Useful Work \cite{ball2017proofs} and the edit distance problem. In comparison, we formally define the \textit{quantum fine-grained reductions} and prove lower bounds for CP, OV, and BCP under the original form of QSETH by showing the existence of quantum fine-grained reductions from CNF-SAT to the these problems.  

\subsection{Proof overview}

For ease of presentation, some notations and descriptions will be informal here. Formal definitions and proofs will be given in subsequent sections.

We give an optimal (up to a polylogarithmic factor) quantum algorithm that solves $\CP$ for constant dimensions in time $\widetilde{O}(n^{2/3})$. First note that there exists a Johnson graph corresponding to an instance of $\CP$, where each vertex corresponds to a subset of $n^{2/3}$ points of the input of $\CP$, and two vertices are connected when the intersection of the two subsets (they are corresponding to) has size $n^{2/3}-1$. A vertex is marked if the subset it corresponds to contains a pair with distance at most $\epsilon$. Then, the goal is to find a marked vertex on this Johnson graph and use binary search over $\epsilon$ to find the closest pair. Our algorithm for finding a marked vertex is based on the quantum walk search framework by Magniez et al.~\cite{MNRS11}, which can be viewed as the quantum version of the Markov chain search on a graph (in our case, a Johnson graph). The complexity of this quantum walk algorithm is $O(\mathsf{S} + \frac{1}{\sqrt{\varepsilon}}(\frac{1}{\sqrt{\delta}}\mathsf{U} + \mathsf{C}))$, where $\varepsilon$ is the fraction of marked states in the Johnson graph, $\delta$ is its spectral gap, $\mathsf{S}$ is the cost for preparing the algorithm's initial state, $\mathsf{U}$ is the cost for implementing one step of the quantum walk, and $\mathsf{C}$ is the cost for checking the solution. For our Johnson graph, $\varepsilon = n^{-2/3}$ and $\delta = n^{-2/3}$. If we consider only the query complexity, $\mathsf{S} = n^{2/3}$, $\mathsf{U} = O(1)$, and $\mathsf{C} = 0$. However, the time complexity for $\mathsf{C}$ is huge in the straightforward implementation, e.g., storing all points in an array according to the index order, as we need to check all the pairs from the $n^{2/3}$ points, which will kill the quantum speedup. To tackle this, we discretize the space into small hypercubes. With this discretization, it suffices to check $O((\sqrt{d})^d)$ neighbor hypercubes to find a pair with distance at most $\epsilon$. To support the efficient neighborhood search, we need an efficient data structure.

Existing data structures do not meet our need. They either have prohibitive dependence on the dimension, such as $\Omega(n^{\ceil{d/2}})$ time for constructing and storing Voronoi diagrams~\cite{Klee80}, or do not have unique representation (i.e., they are history-dependent), such as fair-split trees and dynamic trees~\cite{bespamyatnikh98}. Note that the requirement of unique representation is due to the fact that different representations of the same data would destroy the interference that quantum computation relies on. To solve this problem, we propose a uniquely represented data structure that can answer queries about $\epsilon$-close pairs and insert/delete points efficiently. This data structure is based on a hash table, skip lists, and a radix tree. With this data structure, $\mathsf{U} = O(\log n)$ and $\mathsf{C} = O(1)$. Hence, we have the desired time complexity (see \cref{sec:one-shot}).  We give another method for solving $\CP$ that only uses a radix tree as the data structure.  With only a radix tree, the algorithm cannot handle cases with multiple solutions, and we need to subsequently reduce the size of the problem until there is at most one solution (see \cref{sec:multi-shots}). These two quantum algorithms have the same time complexity.

Our quantum algorithm for solving approximate $\BCP$ follows the same spirit as that for $\CP$, except that we use a finer discritization of the space (see \cref{sec:bcp-approx}). To solve $\BCP$ exactly, we need a history-independent data structure for nearest-neighbor search, but no such data structure is known. Instead, we adapt the nearest-neighbor search data structure by Clarkson~\cite{Clarkson88} to the quantum algorithm proposed by Buhrman et al.~\cite{bdh01} for element distinctness, which does not require history-independence of the data structure because  in the algorithm of~\cite{bdh01}, no insertions and deletions are performed once the data structure for a set of points is constructed (see \cref{sec:bcp-exact}). Sadakane et al.~\cite{SST01} sketched an algorithm for $\BCP$ with similar ideas and running time, but we give the first rigorous analysis.   

To derive our quantum fine-grained complexity results for $\OV$ and $\CP$ when $d=\polylog n$ under QSETH, we first define quantum fine-grained reductions. In our definition, we consider problems whose input is given in the quantum query model, and allow the reduction to perform superposed queries and run quantum algorithms, e.g., amplitude amplification.  The classical reductions from \textsf{CNF-SAT} to $\CP$~\cite{km19,dkl18} and $\OV$~\cite{WY14} are not ``quantum fine-grained'' under QSETH. These reductions fail because their running time exceeds $2^{n/2(1-\epsilon)}$, which is the conjectured time complexity for \textsf{CNF-SAT} under QSETH. Therefore, we cannot derive from them any non-trivial lower bounds for $\CP$ or $\OV$ based on QSETH.  In the following, we use the advantages of quantum algorithms to make these reductions work. 

There are two main obstacles in ``quantizing'' the fine-grained reductions under QSETH. The first obstacle is that the time cost for preparing the input of the problem we reduce to is already beyond the required running time. For instance, consider the reduction from \textsf{CNF-SAT} to $\OV$. Let $\varphi$ be a \textsf{CNF-SAT} instance on $n$ variables and $m$ clauses. The classical fine-grained reduction divides all $n$ variables into two sets $A$ and $B$ of size $n/2$, and then maps all assignments for variables in $A$ and $B$ to two sets $V_A$ and $V_B$ of $2^{n/2}$ vectors each. It is obvious that the time for writing down $V_A$ and $V_B$ is already $\Theta(2^{n/2})$. Nevertheless, many quantum algorithms achieve sublinear query complexities by querying the input oracle in superposition. Hence, instead of first constructing the input of $\OV$ at once and then running the algorithm, we can simulate it ``on-the-fly'': whenever the $\OV$'s algorithm queries the input oracle with some superposition of indices, we use a quantum subroutine to realize the input oracle by mapping the query indices to the corresponding assignments in \textsf{CNF-SAT}, and then to the corresponding vectors in $V_A$ and $V_B$. This subroutine takes only $O(n)$ time, and therefore the quantum reduction, which has running time $O(n)$ times the running time of the $\OV$ algorithm, is quantum fine-grained.

Another difficulty in quantizing the fine-grained reductions is that  some reduction needs to call the oracle multiple times, and the number of calls exceeds the required running time. However, it is possible to achieve quadratic speedup if these oracle calls are non-adaptive. For the reduction from $\BCP$ to $\CP$, we can reduce a $\BCP$ instance to $n^{1.8+o(1)}\log n$ instances of $\CP$, which is already larger than the conjectured $\Omega(n)$ quantum lower bound of $\BCP$. By further studying the reduction, we find that the solution to $\BCP$ is the minimum of the solutions to the the constructed $\CP$ instances. Therefore, we can use the quantum minimum-finding algorithm to reduce the total time complexity to $\wt{O}(\sqrt{n^{1.8+\epsilon}}\cdot t_{\mathsf{CP}})$, which is enough to show that $\BCP$ is quantum fine-grained reducible to $\CP$.

With the above-mentioned techniques, we quantize the classical fine-grained reductions, and show that \textsf{CNF-SAT}, with conjectured lower bound $\Omega(2^{n/2})$, is quantum fine-grained reducible to $\OV$ and $\CP$ with lower bound $\Omega(n')$\footnote{$n$ is the input size of \textsf{CNF-SAT}, and $n'$ is the input size of $\OV$ and $\CP$.}, when the dimension $d$ is $\polylog(n')$.

\section{Preliminaries}
\begin{definition}[Distance measure]
For any two vectors $a,b\in\R^d$, the distance between them in the $\ell_2$-metric is denoted by $\|a-b\|=\left(\sum_{i=1}^d|a_i-b_i|^2\right)^{1/2}$.
Their distance in the $\ell_0$-metric (Hamming distance) is denoted by 
$\|a-b\|_0=|\{i\in[d]:a_i\neq b_i\}|$, i.e.,
the number of coordinates on which $a$ and $b$ differ.
\end{definition}

\subsection{Quantum query model}

We consider the quantum query model in this work. Let $X:=\{x_1,\dots,x_n\}$ be a set of $n$ input points and $\ora_X$ be the corresponding oracle. We can access the $i$-th data point $x_i$ by making the query 
\begin{align}
    \ket{i}\ket{0} \xrightarrow{\ora_X} \ket{i}\ket{x_i},  
\end{align}
and we can make queries to elements in $X$ in superposition. Note that $\ora_X$ is an unitary transformation in the formula above. Hence, a quantum algorithm with access to $\ora_X$ can be represented as a sequence of unitary transformations. 

Consider a quantum algorithm $\A$ with access to an oracle $\ora$ and a initial state $\ket{0}:= \ket{0}_{Q}\ket{0}_{A}\ket{0}_W$, where the registers $Q$ and $A$ are for the queries and the answers from the oracle, and the register $W$ is the working space which is always hold by $\A$.  Then, we can represent the algorithm as 
\begin{align}
    U_T \ora U_{T-1}\cdots\ora U_1  \ket{0}. 
\end{align}
Let $\ket{\psi}_i= U_i\ora \cdots \ora U_1\ket{0}:= \sum_{i,z} \ket{i}_Q\ket{0}_A\ket{z}_W$ be the state right before applying the $i$-th $\ora$, then
\begin{align}
    \ora \ket{\psi}_i:= \sum_{i,z} \ket{i}_Q\ket{x_i}_A\ket{z}_W.
\end{align}
\subsection{Quantum subroutine for unstructured searching and minimum finding}

\begin{definition}[Unstructured search]
Given a set $P$ of $n$ elements in $\{0,1\}$, decide whether there exists a $1$ in $P$. 
\end{definition}

\begin{theorem}[Grover's search algorithm~\cite{Grover1996,nielsen2002quantum}]\label{thm:grover_algo}
There is a quantum algorithm for unstructured search with running time $O(\sqrt{n})$.
\end{theorem}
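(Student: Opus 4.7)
The plan is to present the standard Grover iteration argument, emphasizing that the running time claim matches the number of oracle calls (each iterate uses $O(1)$ oracle invocations and $O(\log n)$ gates, which is subsumed in the $O(\sqrt{n})$ bound under the usual convention). First I would identify the search problem with the oracle $\ora_P\ket{i}\ket{b}=\ket{i}\ket{b\oplus P(i)}$ and, via a standard phase-kickback trick using an ancilla in $\frac{1}{\sqrt{2}}(\ket{0}-\ket{1})$, convert it to the sign oracle $S_P:\ket{i}\mapsto (-1)^{P(i)}\ket{i}$ that flips the phase of marked indices.

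Next I would define the uniform superposition $\ket{s}=\frac{1}{\sqrt{n}}\sum_{i=1}^{n}\ket{i}$, prepared by Hadamards in $O(\log n)$ gates, and the diffusion $D=2\ket{s}\bra{s}-I$, implementable as $H^{\otimes \log n}(2\ket{0}\bra{0}-I)H^{\otimes \log n}$ in $O(\log n)$ gates. The Grover iterate is $G=D\,S_P$. The key step is to restrict attention to the two-dimensional subspace spanned by the normalized ``good'' state $\ket{M}=\frac{1}{\sqrt{m}}\sum_{i:P(i)=1}\ket{i}$ and ``bad'' state $\ket{M^\perp}=\frac{1}{\sqrt{n-m}}\sum_{i:P(i)=0}\ket{i}$, where $m=|\{i:P(i)=1\}|$; inside this subspace $G$ acts as a rotation by angle $2\theta$ where $\sin\theta=\sqrt{m/n}$, because $S_P$ reflects about $\ket{M^\perp}$ and $D$ reflects about $\ket{s}$.

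Since $\ket{s}=\cos\theta\,\ket{M^\perp}+\sin\theta\,\ket{M}$, after $T$ iterations the amplitude on $\ket{M}$ is $\sin((2T+1)\theta)$, so choosing $T=\lfloor \frac{\pi}{4\theta}\rfloor$ makes this amplitude $\Omega(1)$ and a measurement returns a marked index with constant probability, yielding a correct YES/NO answer after a majority vote. When at least one marked element exists we have $\theta\geq 1/\sqrt{n}$, hence $T=O(\sqrt{n})$.

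The only mild obstacle is that $m$ is unknown a priori: if we fix $T$ we risk overshooting. This is handled by the standard doubling trick of Boyer, Brassard, H{\o}yer, and Tapp: try $T$ drawn uniformly from $\{0,1,\dots,\lceil c^j\rceil\}$ for $c$ slightly bigger than $1$ and $j=0,1,2,\dots$, stopping when a marked item is found or when $c^j$ exceeds $\sqrt{n}$; this retains total cost $O(\sqrt{n})$. Combined with the $O(\log n)$ per-iterate gate cost, we obtain the claimed $O(\sqrt{n})$ running time, completing the proof.
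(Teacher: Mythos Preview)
Your proof is correct and follows the standard textbook argument (phase oracle via kickback, two-dimensional rotation analysis, BBHT doubling for unknown $m$). Note, however, that the paper does not actually prove this theorem: it is stated as a cited result from~\cite{Grover1996,nielsen2002quantum} and used as a black box, so there is no ``paper's own proof'' to compare against. What you have written is essentially the proof one finds in those references.
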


By \cref{thm:grover_algo} and BBBV's argument \cite{BBBV97}, the quantum time complexity of unstructured search is $\Theta(\sqrt{n})$. We can also get a $\tilde{O}(\sqrt{n})$ quantum algorithm for minimum finding by combining Grover's search algorithm and binary search.
\begin{theorem}[Quantum minimum finding~\cite{durr1996quantum}]\label{thm:mim}
There is a quantum algorithm that finds from a set of $n$ elements with values in $\mathbb{R}$, the index of the minimum element of the set, with success probability $\frac{1}{2}$ and run time $\tilde{O}(\sqrt{n})$.
\end{theorem}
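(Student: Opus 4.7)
The plan is to implement the Dürr--Høyer minimum-finding strategy on top of Grover's search. Maintain a classical register holding a current candidate index $y$ and its value $a_y$, initialized by sampling $y$ uniformly at random from $[n]$ and querying $\ora_X$ once to read $a_y$. Each round invokes Grover's search from \cref{thm:grover_algo} with a composite oracle that, on input $i$, queries $\ora_X$ to read $a_i$, compares it with the classically stored $a_y$, and marks $i$ iff $a_i < a_y$; on success the candidate is updated to the returned marked index. After a prescribed budget of rounds, output the final $y$.

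Because the number $t$ of marked indices (those with value strictly smaller than the current $a_y$) is unknown and changes from round to round, I implement each round with the standard exponential-search wrapper around Grover: run Grover with guessed iteration counts $2^0, 2^1, 2^2, \ldots$, stopping as soon as a marked index is produced or a cap of order $\sqrt{n}$ iterations is reached. With this wrapper, a round whose marked set has size $t \ge 1$ produces a marked index in expected $O(\sqrt{n/t})$ queries, and a round with $t=0$ spends $O(\sqrt{n})$ queries before declaring failure.

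The \textbf{main technical step} is to bound the expected total query cost across all rounds by $O(\sqrt{n})$, and this is where I expect the analysis to be most delicate. Sort the inputs and let $r_k$ denote the rank (from the bottom) of the candidate $y_k$ after the $k$-th successful update. Because the initial $y$ is uniform and, by a symmetry argument, every subsequent update lands on a uniformly random element of $\{j : a_j < a_{y_{k-1}}\}$, the expected contribution of the $(k{+}1)$-st round is at most $\sqrt{n/r_k}$. A direct computation --- the one carried out in Dürr--Høyer --- shows $\E \sum_{k} \sqrt{n / r_k} = O(\sqrt{n})$, and choosing the round budget to be a sufficiently large constant multiple of this expectation yields, by Markov's inequality, both that the final $y$ attains the true minimum and that the algorithm halts within its budget with probability at least $1/2$.

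Finally, converting query complexity to time complexity is routine: each invocation of the composite oracle uses $O(1)$ calls to $\ora_X$ plus $O(\log n)$ work for a comparison of $O(\log n)$-bit numbers, and maintaining $(y, a_y)$ in classical registers has only logarithmic overhead. Summing over all rounds gives the claimed $\tilde{O}(\sqrt{n})$ running time with success probability at least $1/2$, as stated.
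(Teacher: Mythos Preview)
Your proposal is correct and in fact reproduces the original D\"urr--H{\o}yer argument that the theorem cites. The paper itself does not give a proof of \cref{thm:mim}; it simply states the result with the citation~\cite{durr1996quantum}. The one sentence preceding the theorem statement hints at an alternative route---``combining Grover's search algorithm and binary search''---which would mean binary-searching over the value domain and, at each threshold $v$, using \cref{thm:grover_algo} to test whether some $a_i < v$ exists. That approach is conceptually simpler (no rank analysis, no expected-cost summation) but incurs an extra logarithmic factor in the value range, which is harmless under the $\tilde O$ notation. Your D\"urr--H{\o}yer analysis is tighter in the query count and matches the cited source exactly, so either is acceptable here.
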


\subsection{Problem definitions}
In this subsection, we first formally define  $\OV$, $\CP$, and $\BCP$. Then we show the folklore algorithms for $\CP$, $\BCP$, and $\OV$ by Grover's algorithm, which run in time $\tilde{O}(n)$. 
\begin{definition}[Orthogonal Vectors, $\OV$]\label{def:OV}
Given two sets $A,B$ of $n$ vectors in $\mathbb \{0,1\}^d$ as input, find a pair of vectors $a\in A$, $b\in B$ such that
$\ip{a}{b}=0$, where the inner product is taken in $\mathbb{Z}$.\footnote{Our definition is slightly different than some of the literature, for example, \cite{chen2019equivalence}, which is searching among pairs inside one set. Those two definitions are equivalent up to constant in complexities.}
\end{definition}

We denote $\OV$ with input length $n$ and dimension $d$ as $\OV_{n,d}$. We will use this notation when we need to specify the parameters in the following sections.  

\begin{definition}[Closest Pair Problem, $\CP$]
Given a set $P$ of $n$ points in $\mathbb R^d$ and a distance measure $\Delta$, find a pair of distinct points $a,b\in P$ such that $\Delta(a,b)$ is the smallest among all distinct pairs in $P$.
\end{definition}
Similar to $\OV$, we denote $\CP$ with input length $n$ and dimension $d$ as $\CP_{n,d}$. We will use this notation when the parameters in the following sections are required to be specified. Note that in this work, we consider $\Delta(a, b) = \norm{a-b}$ as the distance measure for \textsf{CP} and \textsf{BCP}.

\begin{definition}[Bichromatic Closest Pair Problem, \textsf{BCP}]
Given two sets $A,B$ of $n$ points in $\mathbb R^d$ and a distance measure $\Delta$, find a pair of points $a\in A$, $b\in B$ such that 
\begin{align}
    \Delta(a,b)= \min_{a\in A,b\in B} \Delta(a,b).
\end{align}
\end{definition}
We also define an approximate version of $\BCP$ as follows.
\begin{definition}[$(1+\xi$)-approximate Bichromatic Closest Pair Problem, $\xiBCP$]
Given two sets $A,B$ of $n$ points $\in \mathbb R^d$ and a distance measure $\Delta$, find a pair of points $a\in A$, $b\in B$ such that 
\begin{align}
    \Delta(a,b) \leq (1+\xi)\min_{a\in A,b\in B} \Delta(a,b).
\end{align}
\end{definition}
Same as $\CP$, we use $\BCP_{n,d}$ and $\xiBCP_{n, d}$ to specify the parameters. 

\begin{definition}[Element Distinctness Problem, \textsf{ED}]
  Let $f:[n] \rightarrow [m]$ be a given function. Decide whether there exist distinct $i, j\in[n]$ such that $f(i) = f(j)$.
\end{definition}

For this problem, Ambainis \cite{ambainis07} gave a quantum algorithm with time complexity $\tilde{O}(n^{2/3})$, which matches the lower bound proved by Aaronson and Shi \cite{AS04} up to a polylogarithmic factor.

\begin{theorem}\label{cor:trivial_alg}
There are $\tilde{O}(n)$-time quantum algorithms for $\CP$ and $\BCP$ when $d=O(\poly\log n)$.
\end{theorem}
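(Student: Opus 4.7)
The plan is to apply quantum minimum finding (\cref{thm:mim}) directly to the set of all candidate pair distances. For $\CP$, index the $\binom{n}{2}$ unordered pairs $\{i,j\}$ with $i<j$ from $[n]$, and for $\BCP$, index the $n^2$ ordered pairs $(i,j)\in[n]\times[n]$; in both cases we have $N=\Theta(n^2)$ candidate pairs. Define a value function $v$ that maps a pair-index to its (squared) distance: for $\CP$, $v(\{i,j\})=\|x_i-x_j\|^2$, and for $\BCP$, $v(i,j)=\|a_i-b_j\|^2$. Since squaring is monotone on nonnegative reals, minimizing $v$ is equivalent to minimizing $\|\cdot\|$.

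The key step is to argue that evaluating $v$ on any given pair-index can be simulated by a quantum subroutine running in time $\poly\log n$. Given a pair-index we first decode the two point indices $i,j\in[n]$ in $O(\log n)$ bit operations, then issue two queries to $\ora_X$ (or $\ora_A$ and $\ora_B$) to obtain the coordinates of the two points into an ancilla register, and finally compute $\sum_{k=1}^{d}(x_i^{(k)}-x_j^{(k)})^2$ coordinatewise in $O(d)=\poly\log n$ time. Uncomputing the intermediate registers gives a clean unitary implementation of $v$ of cost $\poly\log n$ per call, which is exactly what \cref{thm:mim} requires of its value oracle.

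Plugging this oracle into quantum minimum finding over the $N=\Theta(n^2)$ pair-indices yields a total running time of
\begin{align*}
\tilde{O}\!\left(\sqrt{N}\right)\cdot \poly\log n \;=\; \tilde{O}(n)\cdot \poly\log n \;=\; \tilde{O}(n),
\end{align*}
with constant success probability, which can be boosted by standard repetition at the cost of an extra $\polylog n$ factor absorbed into $\tilde{O}$. This establishes the claimed $\tilde{O}(n)$ bound for both $\CP$ and $\BCP$ whenever $d=O(\poly\log n)$.

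There is no real obstacle here; the only points to be careful about are (i)~keeping the distance-evaluation unitary reversible so that it can be used coherently inside the minimum-finding routine, and (ii)~confirming that the pair-index to point-index map is cheap enough not to dominate the per-step cost. Both are straightforward bookkeeping, which is presumably why the authors call this a folklore bound.
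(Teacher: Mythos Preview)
Your proposal is correct and follows essentially the same approach as the paper: apply quantum minimum finding (\cref{thm:mim}) over all $O(n^2)$ pairs, evaluating each pair's distance in $O(d)$ time, for a total of $\tilde{O}(n\cdot d)=\tilde{O}(n)$ when $d=\polylog n$. Your write-up is more explicit about implementation details (index decoding, reversibility), but the argument is the same as the paper's short proof.
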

\begin{proof}
We can solve $\CP$ and $\BCP$ by searching the minimum distance through all pairs by the algorithm of \cref{thm:mim}. There are $O(n^2)$ pairs and checking each pair took $O(d)$ time, so the total running time is $O(nd)$. For $d=O(\poly\log n)$, the time complexity equals to $\tilde{O}(n)$.
\end{proof}

\subsection{Fine-grained complexity}

As we have mentioned earlier in the introduction, a fine-grained reduction from problem $\mathsf{P}$ to $\mathsf{Q}$ with conjectured lower bounds $p(n)$ and $q(n)$, respectively, has the property that if we can improve the $q(n)$ time for $\mathsf{Q}$, then we can also improve the $p(n)$ time for $\mathsf{P}$. We give the formal definition by Williams~\cite{vw15} in below. 

\begin{definition}[Fine-grained reduction, \cite{vw15}]\label{def:fg}
Let $p(n)$ and $q(n)$ be non-decreasing functions of $n$. Problem $\mathsf{P}$ is ($p,q$)-reducible to problem $\mathsf{Q}$, denoted as $(\mathsf{P},p)\leq_{\FG} (\mathsf{Q},q)$, if for every $\epsilon$, there exist $\delta>0$, an algorithm $R$ for solving $\mathsf{P}$ with access to an oracle for $\mathsf{Q}$, a constant $d$, and an integer $k(n)$, such that for every $n\ge 1$, the algorithm $R$ takes any instance of $\mathsf{P}$ of size $n$ and 
\begin{itemize}
    \item $R$ runs in at most $d\cdot (p(n))^{1-\delta}$-time, 
    \item $R$ produces at most $k(n)$ instances of $\mathsf{Q}$ adaptively, that is, the $j$th instance $X_j$ is a function of $\{(X_i,y_i)\}_{1\leq i<j}$ where $X_i$ is the $i$th instance produced and $y_i$ is the answer of the oracle for $\mathsf{Q}$ on instance $X_i$, and 
    \item the sizes $n_i$ of the instances $X_i$ for any choice of oracle answers $y_i$ obeys the inequality
    \begin{align}\label{eq:fgr}
        \sum_{i=1}^{k(n)} (q(n_i))^{1-\epsilon} \leq d\cdot (p(n))^{1-\delta}.
    \end{align}
\end{itemize}
\end{definition}

Let $(\mathsf{P},p)\leq_{\FG} (\mathsf{Q},q)$ for some non-decreasing function $p(n)$ and $q(n)$. If for every $\eps > 0$, we can solve problem $\mathsf{Q}$ in time $q(n)^{1-\epsilon}$ with probability $1$ for all input length $n$, then there exists a $\delta>0$ such that we can solve the problem $\mathsf{P}$ in time $p(n)^{1-\delta}$ 
by \cref{eq:fgr}.

Here are some known results about fine-grained reductions.
\begin{theorem}[\cite{km19,wil05}]\label{thm:classical_fg}
\begin{align}
    (\mathsf{CNF\text{-}SAT}_{n},2^n)\leq_{\FG} (\OV_{n_1,d_1},n_1^2)\leq_{\FG} (\BCP_{n_2,d_2},n_2^2)\leq_{\FG} (\CP_{n_3,d_3},n_3^2),
\end{align}
where $d_1=\Theta(\log n_1)$, $d_2=\Theta(\log n_2)$ and $d_3=(\log  n_3)^{\Omega(1)}$.
\end{theorem}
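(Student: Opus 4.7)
My plan is to establish each link of the chain separately, verifying for each the three conditions of \cref{def:fg}---bounded preprocessing time, a bounded number of oracle calls, and the inequality \eqref{eq:fgr}---and then compose the three reductions.

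For $(\mathsf{CNF\text{-}SAT}_{n},2^n)\leq_{\FG}(\OV_{n_1,d_1},n_1^2)$ I would reproduce Williams' meet-in-the-middle reduction. After applying the sparsification lemma to assume $\varphi$ has $m=O(n)$ clauses, split its $n$ variables into halves $V_A,V_B$ of size $n/2$, and for every partial assignment $\alpha$ to $V_A$ form the indicator vector $v_\alpha\in\{0,1\}^m$ with $v_\alpha[j]=1$ iff clause $j$ is not satisfied by $\alpha$; construct $u_\beta$ from $V_B$ analogously. Then $\langle v_\alpha,u_\beta\rangle=0$ iff $(\alpha,\beta)$ jointly satisfies $\varphi$, giving one $\OV$ instance with $n_1=2^{n/2}$ and $d_1=m=\Theta(\log n_1)$. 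Since only a single $\OV$ call is produced, \eqref{eq:fgr} becomes $(n_1^2)^{1-\epsilon}=2^{n(1-\epsilon)}\leq c\cdot 2^{n(1-\delta)}$, which holds for $\delta<\epsilon$.

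For $(\OV_{n_1,d_1},n_1^2)\leq_{\FG}(\BCP_{n_2,d_2},n_2^2)$ I would turn orthogonality into small Euclidean distance. First pad every $a\in A$ and every $b\in B$ in disjoint coordinate blocks so that all $a$'s share a common Hamming weight $k_A$ and all $b$'s a common weight $k_B$; this does not change cross inner products. Then replace each $b$ by its bitwise complement $\bar b$. A direct expansion using $a_i,b_i\in\{0,1\}$ gives $\|a-\bar b\|^2=d_2-k_A-k_B+2\langle a,b\rangle$, so minimising $\|a-\bar b\|$ is equivalent to minimising $\langle a,b\rangle$, and an orthogonal pair exists iff the bichromatic minimum equals $\sqrt{d_2-k_A-k_B}$. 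A single $\BCP$ call with $n_2=n_1$ and $d_2=O(d_1)=\Theta(\log n_2)$ therefore suffices. For $(\BCP_{n_2,d_2},n_2^2)\leq_{\FG}(\CP_{n_3,d_3},n_3^2)$ I would follow the gadget construction of Karthik--Manurangsi~\cite{km19}: append to each point a tag of dimension $(\log n_2)^{\Omega(1)}$, designed using a binary error-correcting code, so that in the enlarged space any monochromatic pair lies strictly beyond the bichromatic diameter while bichromatic distances are preserved up to a known offset. The closest pair in the merged set is then forced to be bichromatic. This gives $d_3=(\log n_3)^{\Omega(1)}$, $n_3=2n_2$, and at most $O(\log n_2)$ adaptive $\CP$ calls (to binary-search the threshold), so \eqref{eq:fgr} is again easily satisfied.

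The main obstacle will be the third step: the tagging gadget must simultaneously (i) enforce a strict gap between monochromatic and bichromatic distances, (ii) blow up the dimension by only $(\log n_2)^{O(1)}$, and (iii) be explicit enough that its construction time fits inside the $(p(n))^{1-\delta}$ preprocessing budget. For this I would rely on the explicit code-based construction of~\cite{km19} rather than attempt a new one. Once all three reductions are in place, composing them and tracking the instance sizes through \eqref{eq:fgr} yields the chain stated in the theorem.
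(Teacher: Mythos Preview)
The paper does not give its own proof of this theorem; it is stated as a cited result from~\cite{km19,wil05}. The relevant classical constructions are only recalled later, inside the proofs of the \emph{quantum} reductions (Lemmas~\ref{lem:sat_q_ov}, \ref{lem:qbcp_to_cp}, \ref{lem:ov_q_bcp}), so there is no self-contained argument in the paper to compare line-by-line against.

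Your first two links are fine. The $\SAT\to\OV$ step is the standard split-and-list, matching what the paper sketches in Lemma~\ref{lem:sat_q_ov}. Your $\OV\to\BCP$ step (equalise Hamming weights, complement one side, so that $\|a-\bar b\|^2=d_2-k_A-k_B+2\langle a,b\rangle$) is a valid alternative to the $5$-bit-per-coordinate gadget the paper uses in Lemma~\ref{lem:ov_q_bcp}; both reduce to a single $\BCP$ instance with $d_2=\Theta(\log n_2)$.

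The third link, however, is mischaracterised. You describe the~\cite{km19} reduction as appending a single polylog-dimensional tag so that \emph{every} monochromatic pair becomes farther than \emph{every} bichromatic pair, yielding one $\CP$ instance of size $n_3=2n_2$ plus $O(\log n_2)$ binary-search calls. That is precisely what cannot be done: such a tag is a realisation of $K_{n,n}$ in the sense of \cref{def:contact_dim}, and the paper explicitly notes $\mathrm{cd}(K_{n,n})=\Theta(n)$, so no $(\log n)^{O(1)}$-dimensional gadget exists. The actual~\cite{km19} reduction (summarised around \cref{thm:dense-cd} and \cref{lem:cover}) realises only a \emph{dense} bipartite subgraph with $\Omega(n^{2-\delta})$ edges in dimension $(\log n)^{O(1/\delta)}$, then covers $K_{n,n}$ by $\tilde O(n^{\delta})$ permuted copies of that subgraph, producing polynomially many $\CP$ instances rather than $O(\log n)$. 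Your preprocessing-time and instance-count bookkeeping for \eqref{eq:fgr} in the third step would therefore need to be redone against this multi-instance structure; the ``single tag plus binary search'' plan does not go through.
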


\begin{remark}
The second reduction from $\OV$ to $\BCP$ has been improved to $d_2 = 2^{O(\log^* n)}$ by Chen~\cite{che18}.
\end{remark}

There are several plausible hypotheses in fine-grained complexity, which can imply conditional hardness results for many interesting problems. We first give the definition of the strong exponential time hypothesis (SETH). 
\begin{hypothesis}[Strong Exponential Time Hypothesis, \textsf{SETH}]
For every $\epsilon > 0$, there exists a $k = k(\epsilon) \in \N$ such that no algorithm can solve $k$-$\SAT$ (i.e., satisfiability on a CNF of width $k$) in $O(2^{(1 - \epsilon)m})$ time where $m$ is the number of variables. Moreover, this holds even when the number of clauses is at most $c(\epsilon) \cdot m$ where $c(\epsilon)$ denotes a constant that depends only on $\epsilon$.
\end{hypothesis}

Another popular conjecture is the orthogonal vector hypothesis (\textsf{OVH}):
\begin{definition}[Orthogonal Vector Hypothesis, \textsf{OVH}]
For every $\epsilon > 0$,  
there exists a $c\ge 1$ such that $\OV_{n,d}$ requires $n^{2-\epsilon}$ time when $d=c\log n$.
\end{definition}

\begin{remark}
Under SETH, we can have the following conclusions from \cref{thm:classical_fg}: 
\begin{itemize}
    \item $\OVH$ is true. 
    \item For all $\epsilon>0$, there exists a $c>0$ such that $\BCP_{n,c\log n}$ cannot be solved by any randomized algorithm in time $O(n^{2-\epsilon})$.
    \item For all $\epsilon>0$, there exists a $c>0$ such that $\CP_{n,(\log n)^c}$ cannot be solved by any randomized algorithm in time $O(n^{2-\epsilon})$.
\end{itemize}
\end{remark}

\subsection{The framework for quantum walk search}
\label{sec:qw-framework}
In this subsection, we review the quantum walk framework for the Markov chain search problem and demonstrate how to use it to solve the element distinctness problem. For simplicity, we use the transition matrix $P$ to refer to a Markov chain, where $P = (p_{xy})_{x, y \in X}$ for $X$ being the state space of $P$ and $p_{xy}$ being the transition probability from $x$ to $y$. An irreducible and ergodic Markov chain has a unique stationary distribution $\pi$, which is also the unique eigenvector of $P$ with eigenvalue 1. Let $M \subseteq X$ be a set of marked elements. In the Markov chain search problem, the objective is to find an $x \in M$. We can perform the following actions: setup, sampling from the $\pi$ with cost $\mathsf{S}$; update, making a transition with cost $\mathsf{U}$, and checking whether the current state is marked or not with cost $\mathsf{C}$. To solve the search problem classically, we perform a random walk as follows. Start from a point sampled from $\pi$ and check if it is marked. If not, make a number of transitions on $P$ until it mixes, and then check again. We then repeat this process until a marked state is found. The cost of this random walk algorithm is $O(\mathsf{S} + \frac{1}{\varepsilon}(\frac{1}{\delta}\mathsf{U} + \mathsf{C}))$, where $\varepsilon:= |M|/|X|$ and $\delta$ is the spectral gap of $P$.

Quantum analogues of random walks, namely, quantum walks, have been developed for solving different problems. In 2003, Ambainis~\cite{ambainis07} proposed a quantum walk algorithm for solving the element distinctness problem. His algorithm also solves the Markov chain search problem on the Johnson graph with cost $O(\mathsf{S} + \frac{1}{\sqrt{\varepsilon}}(\frac{1}{\sqrt{\delta}}\mathsf{U} + \mathsf{C}))$. In 2004, Szegedy~\cite{Szegedy04} gave a quantum walk algorithm for more generalized Markov chains with cost $O(\mathsf{S} + \frac{1}{\sqrt{\varepsilon\delta}}(\mathsf{U} + \mathsf{C}))$. We can view Szegedy's quantum walk as a quantum counterpart of a random walk, where one checks the state after each transition. Szegedy's quantum walk only detects the presence of a marked state, but cannot find one without extra costs. In 2006, Magniez et al.~\cite{MNRS11} proposed a quantum walk search framework that unified the advantages of the quantum walks in~\cite{ambainis07} and~\cite{Szegedy04}. In this quantum walk framework, we can perform the following operations:
\begin{itemize}
  \item \textbf{Setup:} with cost $\mathsf{S}$. preparing the initial state $\ket{\pi} = \frac{1}{\sqrt{|X|}}\sum_{x}\sqrt{\pi_x}\ket{x}$.
  \item \textbf{Update:}  with cost $\mathsf{U}$. applying the transformation $\ket{x}\ket{0} \mapsto \ket{x}\sum_{y\in X}\sqrt{p_{xy}}\ket{y}$.
  \item \textbf{Checking:} with cost $\mathsf{C}$, applying the transformation: $\ket{x} \mapsto \big\{\begin{smallmatrix} -\ket{x} \quad &\text{if $x\in M$} \\ \ket{x} \quad &\text{otherwise.}\end{smallmatrix}$
\end{itemize}
The main result of \cite{MNRS11} is summarized as follows.
\begin{lemma}[\cite{MNRS11}]
  \label{lem:framework}
  Let $P$ be an irreducible and ergodic Markov chain $P$ on $X$. Let $M \subseteq X$ be a subset of marked elements. Let $\varepsilon: = |M|/|X|$ and $\delta$ be the spectral gap of $P$. Then, there exists a quantum algorithm that with high probability, determines $M$ is empty or finds an $x \in M$ with cost $O(\mathsf{S} + \frac{1}{\sqrt{\varepsilon}}(\frac{1}{\sqrt{\delta}}\mathsf{U} + \mathsf{C}))$.
\end{lemma}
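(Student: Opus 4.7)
The plan is to prove the lemma by combining Szegedy's quantum walk operator with amplitude amplification, following the construction of Magniez, Nayak, Roland, and Santha. First I would build the walk operator $W(P) = R_B R_A$ on $\mathbb{C}^X \otimes \mathbb{C}^X$, where $R_A$ is the reflection about the span of $\{\ket{x}\otimes\sum_y \sqrt{p_{xy}}\ket{y} : x\in X\}$ and $R_B$ is its analogue with the roles of the two registers swapped. Each application of $W(P)$ can be implemented at cost $O(\mathsf{U})$ from the given update operator together with a register swap. Szegedy's spectral theorem then relates the eigenphases of $W(P)$ to the singular values of the discriminant matrix of $P$: every eigenphase of $W(P)$ outside its $+1$-eigenspace has magnitude at least $\Omega(\sqrt{\delta})$.

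Next I would prepare the initial state $\ket{\pi} = \sum_x \sqrt{\pi_x}\ket{x}\otimes\sum_y \sqrt{p_{xy}}\ket{y}$ at cost $\mathsf{S}+\mathsf{U}$; this is a $+1$-eigenvector of $W(P)$. Let $\ket{\pi_M}$ denote the normalized projection of $\ket{\pi}$ onto the marked states in the first register, so that $|\langle \pi | \pi_M\rangle|^2 \geq \varepsilon$. The checking procedure implements, at cost $\mathsf{C}$, the reflection $\mathrm{Ref}_M$ about unmarked states. The remaining ingredient is a reflection $\mathrm{Ref}_\pi$ about $\ket{\pi}$; since $\ket{\pi}$ is the relevant $+1$-eigenvector of $W(P)$ and all other eigenphases have magnitude $\Omega(\sqrt{\delta})$, phase estimation of $W(P)$ to precision $\Theta(\sqrt{\delta})$ yields an approximate reflection $\widetilde{\mathrm{Ref}}_\pi$ using $O(1/\sqrt{\delta})$ calls to $W(P)$, i.e., cost $O(\mathsf{U}/\sqrt{\delta})$. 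Applying $\Theta(1/\sqrt{\varepsilon})$ iterations of $\widetilde{\mathrm{Ref}}_\pi \cdot \mathrm{Ref}_M$ rotates the initial state within the two-dimensional subspace $\mathrm{span}(\ket{\pi},\ket{\pi_M})$ by angle $\Theta(\sqrt{\varepsilon})$ per step, so after $\Theta(1/\sqrt{\varepsilon})$ steps a measurement of the first register returns an element of $M$ with constant probability. Summing the costs yields the claimed bound $O(\mathsf{S} + \frac{1}{\sqrt{\varepsilon}}(\frac{1}{\sqrt{\delta}}\mathsf{U} + \mathsf{C}))$.

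The main obstacle is the error analysis. The replacement of $\mathrm{Ref}_\pi$ by $\widetilde{\mathrm{Ref}}_\pi$ introduces a small error at each iteration, and a priori these errors could compound across the $\Theta(1/\sqrt{\varepsilon})$ rounds of amplification. I would resolve this by choosing the phase-estimation precision to be a sufficiently small constant multiple of $\sqrt{\delta}$ (with an appropriate number of ancilla qubits), showing that $\widetilde{\mathrm{Ref}}_\pi - \mathrm{Ref}_\pi$ has operator norm $O(\sqrt{\varepsilon})$ when restricted to the invariant subspace spanned by $\ket{\pi}$ and $\ket{\pi_M}$, and invoking the robust version of amplitude amplification to conclude that the total accumulated error is a constant strictly less than one. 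A secondary subtlety is handling the case where $M$ is empty: here the algorithm never rotates the state, so one distinguishes the two cases by measuring and checking whether the observed element of the first register actually lies in $M$.
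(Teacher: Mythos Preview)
The paper does not prove this lemma; it is quoted from \cite{MNRS11} as a black box and only used. Your sketch is essentially the standard Magniez--Nayak--Roland--Santha construction (Szegedy walk operator, phase estimation to approximate the reflection about $\ket{\pi}$, then amplitude amplification), so there is nothing to compare against in the paper itself. One small correction: the per-iteration error of $\widetilde{\mathrm{Ref}}_\pi$ is not naturally $O(\sqrt{\varepsilon})$ on the two-dimensional subspace; in \cite{MNRS11} the phase-estimation precision is chosen so that each approximate reflection deviates by an arbitrarily small constant (independent of $\varepsilon$), and a hybrid argument over the $\Theta(1/\sqrt{\varepsilon})$ rounds, together with a logarithmic number of ancilla qubits, keeps the cumulative error bounded. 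With that adjustment your outline is correct.
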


To solve the element distinctness problem, we define a Markov chain, following the work~\cite{ambainis07,BJLM13,Jeffery2014}. The state space $X$ is all subsets of $[n]$ with size $r$. The Markov chain is based on the Johnson graph on $X$, where an edge is connecting $S$ and $S'$ if and only if $|S \cap S'| = r-1$. The transition probability on each edge is hence $\frac{1}{r(n-r)}$. A state $S$ is marked when there exist distinct $i, j \in S$ such the $i^{th}$ and the $j^{th}$ items are the same. The Markov chain has spectral gap $\delta \geq 1/r$ (see~\cite{Jeffery2014}) and it is easy to verify that $\varepsilon \geq \binom{n-2}{r-2}/\binom{n}{r} = O(r^2/n^2)$. If we only consider the query complexity, the setup procedure costs $r$ queries, the update procedure costs one query, and the checking procedure does not cost any query. Choosing $r = n^{2/3}$ yields the optimal query complexity $O(n^{2/3})$.

\section{Quantum fine-grained complexity}

In this section, we give the formal definitions of the quantum fine-grained reduction and quantum strong exponential time hypothesis (QSETH). Moreover, we show that under QSETH, for $d=\polylog(n)$, the lower bounds for $\CP_{n,d}$ and $\OV_{n,d}$ are $n^{1-o(1)}$, which nearly matches the upper bounds given in \cref{cor:trivial_alg}.  

\subsection{Quantum fine-grained reduction and QSETH}

QSETH is defined based on the assumption that the best quantum algorithm for \textsf{CNF-SAT} is Grover search when the clause width $k$ is large enough. 
\begin{hypothesis}[QSETH]
For every $\epsilon > 0$, there exists a $k = k(\epsilon) \in \N$ such that no quantum algorithm can solve $k$-$\SAT$ (i.e., satisfiability on a CNF of width $k$) in $O(2^{(1/2 - \epsilon)n})$ time where $n$ is the number of variables. Moreover, this holds even when the number of clauses is at most $c(\epsilon) n$ where $c(\epsilon)$ denotes a constant that depends only on $\epsilon$.
\end{hypothesis}

Obviously, the Grover search can solve \textsf{CNF-SAT} in $\tilde{O}(2^{n/2})$. To the best of the our knowledge, there is no quantum algorithm that can do better than $O(2^{n/2})$ for any $k$.

We recall that in the quantum query model, the input of a problem is given by a quantum oracle. Specifically, let $\mathsf{P}$ be a problem, and $X$ be an instance of $\mathsf{P}$ in the classical setting. Then, in the quantum query model, $X$ will be given by an oracle $\ora_X$. We will denote an algorithm or an oracle $\A$ with access to $\ora_X$ by $\A(\ora_X)$. 

We say $\A_{\epsilon}$ is an $\epsilon$-oracle for problem $\mathsf{P}$, if for every instance $\ora_X$, it holds that
\begin{align}
    \Pr[\A_{\epsilon}(\ora_X) = \mathsf{P}(X)] \geq 1-\epsilon,
\end{align}
and the running time is $O(1)$, where $\mathsf{P}(X)$ is the answer of $X$ for problem $\mathsf{P}$. 

\begin{definition}[Quantum oracles]\label{def:oisgood}
Let $X:=\{x_1,\dots,x_n\}$ be an instance of some problem and $\ora_X$ be the corresponding quantum oracle. To realize $\ora_X$, we do not need to write down the whole $X$; instead, we can just design a quantum circuit to realize the mapping 
\begin{align}
    \ket{i}\ket{0} \xrightarrow{\ora_X} \ket{i}\ket{x_i}.
\end{align}  
\end{definition}

\begin{definition}[Quantum fine-grained reduction]\label{def:qfgr}
Let $p(n)$ and $q(n)$ be nondecreasing functions of $n$. Let $\mathsf{P}$ and $\mathsf{Q}$ be two problems in the quantum query model and $\A_{\epsilon}$ be an $\epsilon$-oracle for $\mathsf{Q}$ with error probability $\epsilon\leq 1/3$. $\mathsf{P}$ is quantum ($p,q$)-reducible to $\mathsf{Q}$, denoted as $(\mathsf{P},p)\leq_{\QFG} (\mathsf{Q},q)$, if for every $\epsilon$, there exits a $\delta>0$, and algorithm $R$ with access to $\A_{\epsilon}$, a constant $d$, and an integer $k(n)$, such that for every $n\ge 1$, the algorithm $R$ takes any instance of $\mathsf{P}$ of size $n$ and satisfies the following: 
\begin{itemize}
    \item $R$ can solve $\mathsf{P}$ with success probability at least $2/3$ in time at most $d\cdot p(n)^{1-\delta}$.
    \item $R$ performs at most $k(n)$ quantum queries to $A_\epsilon$. Specifically, in the $j^{th}$ query, let $\mathbf{X}_j:= \{X_{1,j},X_{2,j},\dots\}$ be a set instances of $\mathsf{Q}$. Then, $R$ realizes the oracles $\{\ora_{X_{1,j}},\ora_{X_{2,j}},\dots\}$ in superposition and applies $A_{\eps}$ to solve the instances.
    \item The following inequality holds.
    \begin{align*}
        \sum_{j=1}^{k(n)}c(\mathbf{X}_j)\cdot q(n_j)^{1-\epsilon} \leq d\cdot p(n)^{1-\delta}, 
    \end{align*}
    where $c(\mathbf{X}_j)$ is the time required for $R$ to realize the oracles $\{\ora_{X_{1,j}},\ora_{X_{2,j}},\dots\}$ in superposition and $n_j := \max_{i}|X_{i,j}|$. 
\end{itemize}
\end{definition}

In \cref{def:qfgr}, the input of $A_{\eps}$ is given as a quantum oracle such that $A_{\eps}$ can be a quantum query algorithm with running time strictly less than the input size. Moreover, the quantum reduction $R$ can realize quantum oracles $\{\ora_{X_{1,j}},\ora_{X_{2,j}},\dots\}$ in superposition, and thus the time required is $\max_i c(X_{i,j})$ (where $c(X_{i,j})$ is the time required to realize $\ora_{X_{i,j}}$) instead of $\sum_{i} c(X_{i,j})$. This also allows $R$ to use fast quantum algorithms to process the information of $A_{\epsilon}'s$ output (e.g., amplitude amplification). 

\subsection{Lower bounds for $\CP$, $\OV$, and $\BCP$ in higher dimensions under QSETH}

Here, we give nearly linear lower bounds for $\OV$ and $\CP$ under QSETH by showing that there exist quantum fine-grained reductions from $\textsf{SAT}$ to these problems. 
\begin{theorem}\label{thm:qseth}
Assuming QSETH, for all $\epsilon>0$, there exists a $c$ such that $\OV_{n,c\log n}$ and $\CP_{n,(\log n)^c}$ cannot be solved by any quantum algorithm in time $O(n^{1-\epsilon})$.
\end{theorem}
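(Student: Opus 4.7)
The plan is to establish \cref{thm:qseth} by exhibiting quantum fine-grained reductions (in the sense of \cref{def:qfgr}) along the chain $\mathsf{CNF\text{-}SAT}\leq_{\QFG}\OV\leq_{\QFG}\BCP\leq_{\QFG}\CP$ that mirror the classical chain of \cref{thm:classical_fg}, and then to contrapose: if either $\OV_{n,c\log n}$ or $\CP_{n,(\log n)^c}$ admitted an $O(n^{1-\epsilon})$ quantum algorithm for every $c$, composing with these reductions would give a $2^{(1/2-\epsilon')n}$-time quantum algorithm for $k$-$\SAT$ for arbitrarily large $k$, contradicting QSETH. The two features of \cref{def:qfgr} that we must exploit throughout are: (i) inputs to the sub-problem are supplied as quantum oracles $\ora_X$ per \cref{def:oisgood}, so we pay only the \emph{per-coordinate} construction cost $c(\mathbf{X})$ rather than the total input length; and (ii) when a reduction must query many sub-instances, we are allowed to prepare those oracles in superposition and post-process with quantum subroutines such as Grover search or minimum finding (\cref{thm:grover_algo,thm:mim}).

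First I would handle $\mathsf{CNF\text{-}SAT}\leq_{\QFG}\OV$. Given a $k$-SAT formula $\varphi$ on $n$ variables and $m=O(n)$ clauses, I use Williams's split-and-list construction: partition the variables into halves $A,B$ of size $n/2$, and associate to each partial assignment $\alpha$ the $m$-dimensional Boolean vector $v_\alpha$ whose $j$th coordinate encodes whether $\alpha$ already satisfies clause $j$. Classically writing down the $N=2^{n/2}$ vectors costs $\Theta(N)$, violating the QSETH budget. Instead I construct a quantum circuit that on input $\ket{i}\ket{0}$ computes $\ket{i}\ket{v_{\alpha_i}}$ in $\poly(n)=\polylog(N)$ time by (a) interpreting $i$ as the bit-string $\alpha_i\in\{0,1\}^{n/2}$, then (b) iterating through the $m$ clauses. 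This realizes $\ora_{V_A},\ora_{V_B}$ on the fly, so $c(\mathbf{X})=\polylog(N)$; a single call to the $\OV_{N,m}$ oracle with $m=\Theta(\log N)$ solves $\varphi$, and the inequality in \cref{def:qfgr} reduces to $\polylog(N)\cdot N^{1-\epsilon}\le 2^{(1-\delta)n/2}$, which holds for suitable $\delta$. Contraposing against QSETH yields the $\OV$ lower bound.

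Next, for $\CP$, I chain through $\OV\leq_{\QFG}\BCP\leq_{\QFG}\CP$ using the classical constructions of \cite{che18,km19}, again replacing any explicit materialization of intermediate point sets by on-the-fly oracles: given an index into (say) the vector set produced by the $\OV$-to-$\BCP$ reduction, the corresponding coordinate is computable in time $\polylog$ of the $\OV$ input size, so $c(\mathbf{X})$ remains polylogarithmic. Transitivity of quantum fine-grained reductions (which follows from the definition by composing the oracle implementations) then gives $\mathsf{CNF\text{-}SAT}\leq_{\QFG}\CP_{n,(\log n)^c}$ for an appropriate $c$, and contraposition again yields the claimed $n^{1-\epsilon}$ lower bound for $\CP$ in $\polylog$ dimensions.

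The main obstacle I anticipate is the $\BCP\leq_{\QFG}\CP$ step (needed to deduce the $\CP$ lower bound and reused to get \cref{thm:qseth_2_bcp} for $\BCP$): the classical reduction from \cite{km19} produces roughly $n^{1.8+o(1)}$ separate $\CP$ instances and returns the minimum-distance solution among them, a total count already exceeding the $n$-time QSETH budget. The standard fine-grained accounting in \cref{def:fg} therefore fails. Because the $\CP$ calls are non-adaptive and the post-processing is a $\min$, I would resolve this by applying the quantum minimum-finding algorithm of \cref{thm:mim} over the $n^{1.8+o(1)}$ instances (each oracle $\ora_{X_{i,j}}$ is constructed in superposition in $\polylog(n)$ time), obtaining a $\sqrt{n^{1.8+o(1)}}\cdot t_{\CP}(n)=n^{0.9+o(1)}\cdot t_{\CP}(n)$ bound that fits the budget $d\cdot p(n)^{1-\delta}$ required by \cref{def:qfgr}. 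Once this quantum quadratic speedup in the reduction layer is in place, the rest of the argument is bookkeeping on the parameters $(\epsilon,\delta,k(\epsilon),c(\epsilon))$ coming from QSETH.
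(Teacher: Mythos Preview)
Your proposal is correct and follows essentially the same route as the paper: the chain $\mathsf{CNF\text{-}SAT}\leq_{\QFG}\OV\leq_{\QFG}\BCP\leq_{\QFG}\CP$, with on-the-fly oracle realization for the first two links and quantum minimum finding over the non-adaptive $\CP$ sub-instances for the last, exactly as in \cref{lem:sat_q_ov}, \cref{lem:ov_q_bcp}, and \cref{lem:qbcp_to_cp}.

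One imprecision you should correct in the $\BCP\leq_{\QFG}\CP$ step: the $\approx n^{1.8+o(1)}$ sub-instances produced by the \cite{km19} construction each have size $n'\approx n^{0.1}$, not $n$. The correct accounting is therefore
\[
\wt{O}\bigl(\sqrt{n^{1.8+o(1)}}\bigr)\cdot t_{\CP}(n')\;\approx\; n^{0.9+o(1)}\cdot (n^{0.1})^{1-\epsilon}\;=\;n^{1-\Theta(\epsilon)},
\]
which fits the budget $n^{1-\delta}$. With $t_{\CP}(n)$ as you wrote it, the bound would be $n^{0.9+o(1)}\cdot n^{1-\epsilon}=n^{1.9-\epsilon+o(1)}$, which does \emph{not} fit. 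The paper handles this carefully by first partitioning the $\BCP$ instance into blocks of size $n'$ (so there are $(n/n')^2$ block pairs) and then applying the contact-dimension covering within each block pair; keeping track of the instance sizes is what makes the arithmetic close.
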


We prove \cref{thm:qseth} by showing that there exist quantum fine-grained reductions from \textsf{CNF-SAT} to $\OV$, $\OV$ to $\BCP$, and $\BCP$ to $\CP$ with desired parameters.  We first give the reduction from \textsf{CNF-SAT} to $\OV$ as a warm-up. 

\begin{lemma}\label{lem:sat_q_ov}
\begin{align}
    (\mathsf{CNF\text{-}SAT}_{n},2^{n/2})\leq_{\QFG} (\OV_{n_1,d_1},n_1),
\end{align}
where $n_1=2^{n/2}$ and $d_1=\Theta(n)$.
\end{lemma}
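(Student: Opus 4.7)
My plan is to quantize the classical fine-grained reduction from $\mathsf{CNF\text{-}SAT}$ to $\OV$ (the first inequality in \cref{thm:classical_fg}, due to Williams) by replacing the explicit listing of the $2^{n/2}$ vectors on each side---which by itself would cost $\Theta(2^{n/2})$ and already blow the QSETH budget---by an ``on-the-fly'' unitary realization of the $\OV$ input oracle, as hinted in the proof overview.

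First, I would recall the classical construction. Given a $k$-\textsf{SAT} instance $\varphi$ on $n$ variables with $m=c(\eps)\cdot n$ clauses, I split the variables into two halves $A,B$ of size $n/2$. For each partial assignment $\alpha\in\zo^{n/2}$ to $A$, I define a vector $v_\alpha\in\zo^m$ by setting $v_\alpha[j]=0$ exactly when $\alpha$ already satisfies $C_j$ through some $A$-literal, and $v_\alpha[j]=1$ otherwise; symmetrically I define $w_\beta$ for $\beta\in\zo^{n/2}$ on $B$. A standard calculation shows $\langle v_\alpha,w_\beta\rangle=0$ iff $\alpha\cup\beta$ satisfies $\varphi$, so $\varphi$ is satisfiable iff the resulting $\OV_{n_1,d_1}$ instance $X=(V_A,V_B)$ with $n_1=2^{n/2}$ and $d_1=m=\Theta(n)$ has an orthogonal pair.

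Second, I would realize the $\OV$ oracle $\ora_X$ reversibly, without ever writing down $V_A$ or $V_B$ in their entirety, following \cref{def:oisgood}. On input $\ket{(c,i)}\ket{0}$, where $c\in\zo$ selects the side and $i\in\zo^{n/2}$ encodes the assignment $\alpha$, a classical-to-reversible circuit iterates over the $m$ clauses of $\varphi$ and, for each clause $C_j$, inspects the relevant $A$-literals (respectively $B$-literals) of $\alpha$ and writes the bit ``$\alpha$ does not satisfy $C_j$ on side $c$'' into the $j$-th output coordinate, producing $\ket{(c,i)}\ket{v_\alpha}$ (or $\ket{(c,i)}\ket{w_\alpha}$). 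Since the description of $\varphi$ has size $O(nm)=O(n^2)$ and each clause test is $O(n)$ work, the oracle realization cost is $c(\mathbf{X}_1)=O(nm)=O(n^2)$, which is polynomial in $n$ (hence polylogarithmic in $n_1$).

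Third, I would verify the bookkeeping of \cref{def:qfgr}. The reduction $R$ uses $k(n)=1$ call to $\A_\eps$ with the single set of instances $\mathbf{X}_1=\{(V_A,V_B)\}$ of maximum size $n_1=2^{n/2}$; it outputs ``satisfiable'' iff $\A_\eps$ reports an orthogonal pair. With $p(n)=2^{n/2}$ and $q(n_1)=n_1=2^{n/2}$, the key inequality reads
\begin{align*}
c(\mathbf{X}_1)\cdot q(n_1)^{1-\eps}\;=\;O(n^2)\cdot 2^{(1-\eps)n/2}\;\leq\; d\cdot 2^{(1-\delta)n/2}\;=\;d\cdot p(n)^{1-\delta},
\end{align*}
which holds for any $0<\delta<\eps$ and all sufficiently large $n$; the overall runtime of $R$ is dominated by the same term. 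The step I expect to require the most care, though it is still routine, is ensuring that $\ora_X$ is implemented as a genuine unitary: this is handled by the standard trick of computing $v_\alpha$ into a clean output register and uncomputing all scratch space. The genuinely new quantum-algorithmic content (superposed sub-instances, amplitude amplification, quantum minimum-finding) will appear in the subsequent reductions $\OV\to\BCP\to\CP$ under QSETH, not in this warm-up.
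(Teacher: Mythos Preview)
Your proposal is correct and follows essentially the same approach as the paper: both quantize Williams's split-and-list reduction by realizing the $\OV$ input oracle on-the-fly via a $\poly(n)$-size circuit (the paper states the cost as $O(kmn)$, you as $O(nm)$; both are $\polylog(n_1)$), rather than materializing the $2^{n/2}$ vectors, and then verify the single-call quantum fine-grained inequality in \cref{def:qfgr}. Your explicit mention of uncomputation to make $\ora_X$ unitary and your observation that the nontrivial quantum ingredients only enter in the later reductions are consistent with the paper's treatment.
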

\begin{proof}
Let $\phi$ be a CNF formula with $n$ variables and $m=\Theta(n)$ clauses. Let $\mathcal{A}$ be an algorithm for $\OV$. We first recall the classical reduction. Let $\phi := \phi_1\wedge \cdots \wedge\phi_m$. 
We divide the $n$ variables into two sets $A$ and $B$ with $|A| = |B| = \frac{n}{2}$. Let $A:=\{x_1,\dots,x_{n/2}\}$ and $B:= \{x_{n/2+1},\dots,x_{n}\}$. We let $S_A:=\{a_1,\dots,a_{2^{n/2}}\}$ be all assignments to $A$ and $S_B:=\{b_1,\dots,b_{2^{n/2}}\}$ be all assignments to $B$. We describe two mappings $f_A: S_A\rightarrow \{0,1\}^m$ and $f_B:S_B\rightarrow \{0,1\}^m$ as follows: 
\begin{align}
    &f_A(a_i) = [\phi_1(a_i),\dots,\phi_m(a_i)]^T, \text{ and}\\
    &f_B(b_i) = [\phi_1(b_i),\dots,\phi_m(b_i)]^T,
\end{align}
where $\phi_j(a_i) = 0$ if $a_i$ is a satisfied assignment for $\phi_j$, and  $\phi_j(a_i) = 1$ otherwise; we define $\phi_i(b_i)$ in the same way. Let $F_A:= \{f_A(a_i):\; i\in[2^{n/2}]\}$ and $F_B:= \{f_B(b_i):\; i\in[2^{n/2}]\}$. Then, it is obvious that if there exist $v\in F_A$ and $u\in F_B$ such that $\langle v, u \rangle=0$, then $\phi$ is satisfiable. However, at first glance, this reduction with $O(2^{n/2})$ running time is not fine-grained since we require the cost of the reduction to be at most $2^{n(1-\delta)/2}$ for some $\delta>0$ by \cref{def:qfgr}, but writing down elements in $F_A$ and $F_B$ already takes $\Omega(2^{n/2})$. 

Nevertheless, as in \cref{def:oisgood}, a quantum fine-grained reduction only needs to realize the functions $f_A$ and $f_B$, which takes $O(mkn)$ time where $k$ is the width of clauses. This is much less than $O(2^{n(1-\delta)/2})$. More specifically,  $f_A$ and $f_B$ are oracles for $F_A$ and $F_B$, and for any quantum query to elements in $F_A$ or $F_B$, the reduction can implement oracles $f_A$ and $f_B$:   
\begin{align}
    \ket{e,x}\ket{0}\xrightarrow{f_e} \ket{e,x}\ket{f_e(x)},
\end{align}
where $e\in \{A,B\}$, and the time $c(f_e)$ for the reduction to implement $f_e$ for one quantum query is at most $O(kmn)$. Finally, this reduction only uses one oracle ($F_A,F_B$). If there is an algorithm for $\OV$ which succeeds with probability $2/3$, we can boost the success probability of the reduction by repetition.  Therefore, ($\textsf{CNF-SAT},2^{n/2}$) is quantum reducible to ($\textsf{OV}_{n_1,d_1},n_1$). 
\end{proof}

Then, to prove $(\textsf{CNF-SAT}, 2^{n/2})\leq_{\QFG} (\textsf{CP}_{n_3,d_3}, n_3)$, we show that $(\textsf{BCP}_{n_2,d_2},n_2)\leq_{\QFG} (\CP_{n_3,d_3},n_3)$ and $(\textsf{OV}_{n_1,d_1}, n_1) \leq_{\QFG} (\textsf{BCP}_{n_2,d_2},n_2)$, where $n_2, n_3, d_2,d_3$ are some functions of $n$ specified in the following lemmas.

\begin{lemma}\label{lem:qbcp_to_cp}
For $d=\Theta(\log n)$,
\begin{align}
    (\BCP_{n,d},n)\leq_{\QFG} (\CP_{n',d'},n'),
\end{align}
where $n' = n^{O(1)}$ and $d'=(\log n)^{c}$ for some constant $c$ and all points have $\{0,1\}$ entries with the Hamming metric.
\end{lemma}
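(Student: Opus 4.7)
The plan is to quantize the classical fine-grained reduction from $\BCP$ to $\CP$ underlying the third inequality of \cref{thm:classical_fg}, due to Karthik and Manurangsi. Given a $\BCP_{n,d}$ instance on sets $A, B$ with $d = \Theta(\log n)$, their construction produces $K = n^{1.8+o(1)}\log n$ $\CP$ instances $X_1, \dots, X_K$, each of size $n_j = n^{O(1)}$ in dimension $d' = (\log n)^{c}$ over $\{0,1\}^{d'}$ with the Hamming metric. My first step is to unpack that construction and verify the structural property that the $X_j$'s are defined \emph{non-adaptively} from $A, B$ and that
\[
\min_{a\in A,\, b\in B}\|a-b\|_0 \;=\; \min_{j\in[K]} \mathsf{CP}(X_j).
\]
This ``take the minimum'' property is what makes the reduction amenable to a quantum speedup, as advertised in the proof overview.

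Because $K \gg n$, iterating oracle calls over $j$ would exceed the time budget, so next I will invoke the quantum minimum-finding subroutine of \cref{thm:mim} on $f:[K]\to\R$ defined by $f(j) = \mathsf{CP}(X_j)$. This finds $\arg\min_j f(j)$ using only $\tilde{O}(\sqrt{K}) = \tilde{O}(n^{0.9+o(1)})$ quantum queries to $f$. Each such query is implemented by calling the $\epsilon$-oracle $\A_\epsilon$ for $\CP$ on an appropriate input oracle $\ora_{X_j}$, and I would construct a quantum circuit realizing $\ket{j}\ket{i}\ket{0}\mapsto \ket{j}\ket{i}\ket{X_j[i]}$ in time $c(\mathbf{X}_j) = \poly\log n$, using $O(1)$ queries to $\ora_A, \ora_B$ together with the gadget arithmetic of the Karthik--Manurangsi construction. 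Standard $O(\log n)$-fold majority amplification of $\A_\epsilon$ will keep the cumulative failure probability across all $\tilde{O}(\sqrt{K})$ min-finding queries below $1/3$.

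Finally, I would verify the inequality of \cref{def:qfgr}. With $k(n) = \tilde{O}(n^{0.9+o(1)})$, per-query realization cost $c(\mathbf{X}_j) = \poly\log n$, and $n_j = n^{O(1)}$, the total cost is
\[
\sum_{j=1}^{k(n)} c(\mathbf{X}_j)\cdot q(n_j)^{1-\epsilon} \;=\; \tilde{O}\bigl(n^{0.9+o(1)}\cdot n^{O(1)\cdot(1-\epsilon)}\bigr).
\]
Since the Karthik--Manurangsi construction can be balanced so that the per-instance size $n_j$ is a sufficiently small-exponent polynomial in $n$, for each fixed $\epsilon > 0$ there will be a $\delta = \delta(\epsilon) > 0$ making this bound at most $d\cdot n^{1-\delta}$, which is exactly what \cref{def:qfgr} demands.

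The main obstacle I anticipate is the first step: the Karthik--Manurangsi reduction is presented as a classical Turing reduction, so pinning down (a) the non-adaptive ``$\min$ of $\CP$ answers'' structure, (b) the ability to compute each coordinate of each point of $X_j$ in $\poly\log n$ time from only $\ora_A, \ora_B$, and (c) a balance between $K$ and $n_j$ that lets the final inequality close requires going inside their gadget construction. Once that restructuring is in hand, the quantum ingredients (minimum finding, oracle realization in superposition, and success-probability amplification) are all standard.
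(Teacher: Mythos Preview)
Your proposal is correct and follows essentially the same route as the paper's proof. The paper resolves your anticipated obstacle by fixing the per-instance size $n_j = 2n'$ with $n' \approx n^{0.1}$: the Karthik--Manurangsi combinatorial gadgets (the contact-dimension realization $\tau$ and the $k=\tilde O(n'^{\epsilon/2})$ covering permutations, via \cref{thm:dense-cd} and \cref{lem:cover}) are \emph{precomputed classically once} in time $\tilde O(n'^6)=\tilde O(n^{0.6})$, after which each $\CP$ input-oracle query is realized in $\poly\log n$ time from a single $\ora_{\BCP}$ query plus table lookups into $\tau$ and $\pi_t$---this simultaneously delivers your (a), (b), and (c), and makes the final arithmetic $\sqrt{r^2k}\cdot (n')^{1-\epsilon}\le n\cdot (n')^{-\Theta(\epsilon)}\le n^{1-\delta}$ close.
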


\begin{remark}
The points have coordinate entries in $\{0,1\}$, and the Hamming metric is equivalent to distance in $\ell_2$-metric (up to power of 2) in this case. Therefore, in the proof of Lemma~\ref{lem:qbcp_to_cp}, we can consider the Hamming distance between points instead of $\ell_2$ distance without loss of generality. 
\end{remark}

We first introduce the classical reductions in~\cite{km19} and some results we will use to prove Lemma~\ref{lem:qbcp_to_cp}. 

\paragraph{Classical reduction} 
We can consider an instance of $\BCP$ with two sets of points $A$ and $B$ as a weighted complete bipartite graph $K_{n,n}$, where the vertices are the points in these two sets and edges' weights are equal to the distances between the corresponding points. Then, solving $\BCP$ is equivalent to find an edge with the minimum weight in this graph.  
However, we cannot directly apply the algorithm for $\CP$ on this graph since there could be two points in the same set (no edge connecting them) that have a smaller distance than any pairs of points in two sets (connected by an edge). To overcome this difficulty, we can ``stretch'' the points to make the points in the same set far from each other, which is characterized by the contact dimension of a graph:
\begin{definition}[Contact Dimension]\label{def:contact_dim}
For any graph $G=(V,E)$, a mapping $\tau:V\rightarrow \R^d$ is said to realize $G$ if for some $\beta>0$, the following holds for every distinct vertices $u,v$:
\begin{align}
    \|\tau(u)-\tau(v)\|_2  =&~\beta ~ \text{if~} \{u,v\}\in E,\\
    \|\tau(u)-\tau(v)\|_2  > &~\beta ~ \text{otherwise.}\notag
\end{align}
The contact dimension of $G$, denoted by $\mathrm{cd}(G)$, is the minimum $d\in \N$ such that there exists $\tau:V\rightarrow \R^d$ realizing $G$.
\end{definition}
That is, with the help of $\tau$, we can restrict the optimal solution of $\CP$ to be the points connected by an edge in $G$. But we cannot realize the whole complete bipartite graph since $\mathrm{cd}(K_{n,n})=\Theta(n)$, which makes the dimension of the $\CP$ instance too large. \cite{km19} showed that we can realize a subgraph of $K_{n,n}$ and apply permutations to its vertices such that the union of these subgraphs cover $K_{n,n}$. In this way, $\BCP$ can be computed by solving $\CP$ on each subgraph and outputting the best solution. More specifically, the reduction in \cite{km19} relies on the following theorem:

\begin{theorem}[Theorem 4.2 in \cite{km19}] \label{thm:dense-cd}
For every $0<\delta < 1$, there exists a log-dense sequence $(n_i)_{i\in \N}$ such that, for every $i\in \N$, there is a bipartite graph $G_i=(A_i \dot \cup B_i, E_i)$ where $|A_i|=|B_i|=n_i$ and $|E_i|\ge \Omega(n_i^{2-\delta})$, such that $\mathrm{cd}(G_i)=(\log n_i)^{O(1/\delta)}$. Moreover, for all $i\in \N$, a realization $\tau:A_i\dot \cup B_i\rightarrow \{0,1\}^{(\log n_i)^{O(1/\delta)}}$ of $G_i$ can be constructed in $n_i^{2+o(1)}$ time.
\end{theorem}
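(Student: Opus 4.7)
The plan is to build the family $(G_i)$ by a gadget amplification scheme. First I would construct a ``base'' bipartite graph $G_0$ on a small vertex set that simultaneously has high edge density and admits a realization into $\{0,1\}^{d_0}$ of modest dimension; then I would amplify $G_0$ by an $\ell$-fold AND-tensor construction to obtain graphs $G_i$ whose vertex counts $n_i$ form a log-dense sequence. The number of vertices after amplification grows like $|A_0|^\ell$ and the dimension grows only linearly in $\ell$, so the ratio $d_i/\log n_i$ stays bounded by a polynomial in the base dimension.

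For the base, I would take $A_0=B_0=\binom{[k]}{k/2}$ and declare $S\sim T$ iff $|S\cap T|=t^{*}$ for a fixed target $t^{*}$ near $k/4$; this graph has edge density $\Omega(1/\poly(k))$. To realize it in $\{0,1\}^{d_0}$, the plan is to use a polynomial-method embedding: an approximating univariate polynomial $p$ of degree $D=\wt O(\sqrt{k})$ separates $|S\cap T|=t^{*}$ from $|S\cap T|\neq t^{*}$ with a controlled gap, and encoding each set $S$ by the $\leq D$-degree monomials of its characteristic vector yields an embedding into dimension $d_0=\sum_{j=0}^D\binom{k}{j}=2^{\wt O(\sqrt{k})}$. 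After suitable rounding and padding with constant coordinates, this produces a $\{0,1\}$-valued realization in which all edges sit at a common Hamming distance $\beta$ and all non-edges are strictly farther.

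The amplification step takes the $\ell$-fold AND-tensor: vertices are tuples in $A_0^\ell$ and $B_0^\ell$, and two tuples are adjacent iff they are coordinate-wise adjacent in $G_0$. Concatenating the base realizations gives a $\{0,1\}$-realization in dimension $\ell d_0$, and the threshold becomes $\ell\beta$; every edge sits at distance exactly $\ell\beta$, while a single non-edge coordinate contributes strictly more than $\beta$, so the tensor realization remains valid. Setting $n_i=|A_0|^i$ yields a log-dense sequence. Choosing $k$ to grow with $1/\delta$ so that the edge density $\rho_0^\ell\geq n_i^{-\delta}$ while $\ell d_0=(\log n_i)^{O(1/\delta)}$ delivers the parameters in the theorem; the construction is fully explicit, so all pairwise distances (hence the realization) can be written down in $n_i^{2+o(1)}$ time.

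The main obstacle will be the quantitative tuning of the base gadget: the degree $D$ of the separating polynomial controls $d_0$, while its ``gap'' controls whether non-edges are \emph{strictly} farther than edges after rounding to $\{0,1\}$-coordinates. Balancing these so that the amplified graph $G_i$ achieves edge density $\Omega(n_i^{2-\delta})$ and dimension $(\log n_i)^{O(1/\delta)}$ simultaneously is the technical heart of the argument, essentially a probabilistic-polynomial calculation in the style of Hamming nearest-neighbor hardness proofs.
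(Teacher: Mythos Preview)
The paper does not prove this statement: it is quoted verbatim as Theorem~4.2 of \cite{km19} and used as a black box in the reduction of \cref{lem:qbcp_to_cp}. There is therefore no ``paper's own proof'' to compare against; the authors simply import the result.

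Your outline is in the right spirit for how \cite{km19} actually establishes the theorem --- a small base gadget realized via low-degree polynomials, followed by tensor/concatenation amplification to grow $n$ while keeping the dimension polylogarithmic --- so as a reconstruction of the cited proof it is reasonable. One point to be careful about: the AND-tensor does not automatically preserve a \emph{strict} gap in Hamming distance between edges and non-edges. If in the base gadget non-edges can have distance \emph{less} than $\beta$ as well as greater than $\beta$ (which is typical for intersection-type graphs before you engineer the polynomial carefully), then concatenation can produce a non-edge tuple whose total distance equals $\ell\beta$ by mixing ``too close'' and ``too far'' coordinates. The actual construction in \cite{km19} avoids this by ensuring the base realization has all non-edge distances strictly \emph{above} $\beta$ (a one-sided gap), which your sketch hints at but does not fully justify. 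If you intend to flesh this out, that one-sidedness is the key structural property to secure in the base step.
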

\noindent The log-dense sequence is defined as follows:
\begin{definition}
A sequence $(n_i)_{i\in \N}$ of increasing positive integers is log-dense if there exists a constant $c\ge 1$ such that $\log n_{i+1}\le c\cdot \log n_i$ for all $i\in \N$.
\end{definition}
They also showed that, the permutations for covering the complete bipartite graph can be efficiently found, as shown in the following lemma.
\begin{lemma}[Lemma 3.11 in \cite{km19}] \label{lem:cover}
For any bipartite graph $G(A\dot \cup B, E_G)$ where $|A|=|B|=n$ and $E_G\ne \emptyset$, there exist side-preserving permutations $\pi_1,\dots, \pi_k:A\cup B\rightarrow A\cup B$ where $k\leq \frac{2n^2\ln n}{|E_G|}+1$ such that
\begin{align}
    \underset{i\in [k]}{\bigcup} E_{G_{\pi_i}} = E_{K_{n,n}}.
\end{align}
Moreover, such permutations can be found in $O(n^6\log n)$ time.
\end{lemma}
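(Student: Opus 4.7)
The plan is to establish the cardinality bound via a greedy averaging argument and to obtain the deterministic construction through the method of conditional expectations. Initialize $U_0 = E_{K_{n,n}}$, and at each step $i\geq 1$ pick a side-preserving permutation $\pi_i = (\sigma_i, \tau_i) \in S_A \times S_B$ that covers as many edges of $U_{i-1}$ as possible, where an edge $(u,v)\in K_{n,n}$ is said to be covered by $\pi$ when $(\sigma^{-1}(u), \tau^{-1}(v)) \in E_G$. Drawing $(\sigma,\tau)$ uniformly from $S_A\times S_B$, symmetry makes every fixed $(u,v)$ covered with probability exactly $|E_G|/n^2$, so the expected number of newly covered edges in one step is $|U_{i-1}|\cdot|E_G|/n^2$ and a deterministic $\pi_i$ achieving at least this expectation must exist. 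Iterating yields
\begin{align*}
  |U_k| \leq n^2 (1 - |E_G|/n^2)^k < n^2 \exp(-k|E_G|/n^2),
\end{align*}
which drops strictly below $1$ once $k > 2n^2\ln n/|E_G|$; hence $k = \lceil 2n^2\ln n/|E_G|\rceil \leq 2n^2\ln n/|E_G| + 1$ permutations suffice to force $U_k = \emptyset$.

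For the constructive side I would derandomize each iteration by the method of conditional expectations, fixing the coordinates of $\sigma_i$ one at a time followed by those of $\tau_i$. Once $\sigma_i(a_1),\dots,\sigma_i(a_j)$ have been chosen, the conditional expected coverage decomposes as a sum over edges $(u,v)\in U_{i-1}$ of the probability that $(\sigma^{-1}(u), \tau^{-1}(v))\in E_G$ under a uniform extension. Each such probability reduces to one of a few closed forms depending on whether $u$ has already been hit by the partial $\sigma$: either a term of the form $d_G(a)/n$ when $\sigma^{-1}(u) = a$ is fixed, or a term of the form $|E_G\cap(R_A\times B)|/((n-j)n)$ when $\sigma^{-1}(u)$ remains uniform over the unassigned set $R_A$, and symmetric forms hold once $\tau_i$ is partially built. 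Maintaining these counts incrementally lets us score each of the $O(n)$ extensions at every coordinate step in at most $O(n^2)$ time, for $O(n^4)$ per iteration. Since the worst case is $|E_G| = 1$ with $O(n^2\log n)$ iterations, the total running time is bounded by $O(n^6\log n)$, matching the claim.

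The main obstacle is carrying out this per-iteration bookkeeping without paying a factor of $|U_{i-1}|$, which may be as large as $n^2$: a naive recomputation of the conditional expectation from scratch at every coordinate would inflate the runtime beyond the target. The key observation that makes this go through is that the conditional probability for each uncovered edge depends only on whether each endpoint has been hit by the partial permutation, so the full expectation collapses to a constant number of global counts that can be updated incrementally whenever a coordinate is fixed. Combined with the averaging existence proof, this yields the permutations $\pi_1,\dots,\pi_k$ deterministically within the stated $O(n^6\log n)$ time.
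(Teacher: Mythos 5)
The paper does not prove this lemma---it is quoted from \cite{km19}---so I will judge your argument on its own terms and against the likely source.

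Your existence argument is correct and matches the standard one. For a uniformly random side-preserving pair $(\sigma,\tau)$, each edge $(u,v)\in K_{n,n}$ is covered with probability exactly $|E_G|/n^2$ since $\sigma^{-1}(u)$ and $\tau^{-1}(v)$ are independent uniform. Linearity of expectation, greedy choice, and the geometric decay $|U_k|< n^2 e^{-k|E_G|/n^2}$ then give $k\leq\lceil 2n^2\ln n/|E_G|\rceil$. The $k$-bound arithmetic checks out.

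Your derandomization is a valid route but is almost certainly a heavier one than the original. You propose the method of conditional expectations over the full group $S_A\times S_B$, fixing $\sigma$ then $\tau$ coordinate by coordinate; this requires the bookkeeping you describe (degree-weighted partial sums, incrementally updated edge counts over the unassigned set $R_A$ and $R_B$), and, as you note, the conditional probabilities are not quite ``a constant number of global counts''---the Case-1 contribution $\sum_{u\text{ hit}} d_G(\sigma^{-1}(u))\cdot|\{v:(u,v)\in U_{i-1}\}|/n$ depends on which $a$ each hit $u$ was mapped from, not just whether it was hit. This is still maintainable, and a crude from-scratch recomputation of the expectation per candidate is $O(n^2)$, so $O(n^4)$ per iteration and $O(n^6\log n)$ overall, matching the claim. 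But the cleaner route, which the $O(n^6\log n)$ bound strongly suggests \cite{km19} uses, is to restrict the averaging to the $n^2$ cyclic shift pairs $(\sigma_s,\tau_t)$ with $s,t\in[n]$: every edge $(u,v)$ of $K_{n,n}$ is covered by \emph{exactly} $|E_G|$ of these shifts, so the same average $|U_{i-1}|\cdot|E_G|/n^2$ holds over this polynomial-size family and one simply tries all $n^2$ of them, each evaluated in $O(n^2)$ time, with no conditional-expectation machinery and no incremental-counting subtleties. Both routes reach $O(n^6\log n)$; the shift-family route is simpler and avoids the most delicate part of your write-up.

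One small correction in phrasing: the worst case $|E_G|=1$ actually needs only $n^2$ iterations (each shifted copy of $G$ contributes exactly one new edge), so the $O(n^2\log n)$ iteration count is a loose but valid upper bound from the formula for $k$; your final time bound is unaffected.
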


Now, we are ready to state the quantum fine-grained reduction by ``quantizing'' the classical reduction.
\begin{proof}[Proof of \cref{lem:qbcp_to_cp}]
Let $A,B$ be the two sets of input points of $\textsf{BCP}$. Suppose for \textsf{BCP}, there is an input oracle $\ora_{\BCP}$ which, given an index, returns the corresponding point:
\begin{align}
    \ket{b}\ket{i}\ket{0} \xrightarrow{\ora_{\BCP}}
    \begin{cases}
        \ket{b}\ket{i}\ket{x_i} &\text{if } b= 0,\\
        \ket{b}\ket{i}\ket{y_i} & \text{if } b= 1,
    \end{cases}
\end{align}
where $x_i$ is the $i$-th point in the set $A$ and $y_i$ is the $i$-th point in the set $B$. The sizes of $A$ and $B$ are both equal to $n$ and each point is in $\{0,1\}^{d_1}$, where $d_1=\Theta(\log n)$ is the dimension of $\BCP$. 
 
For $\CP$, suppose there is a quantum algorithm $\mathcal{A}$ such that for $m$ points in $\{0,1\}^{d_2}$ given by an oracle $\mathcal{M}_{CP}$, $\mathcal{A}^{\mathcal{M}_{CP}}$ returns the closest pair of these $n$ points with probability at least $2/3$.

Then we need to transform $\ora_{\BCP}$ to some oracles $\mathcal{M}_i$ for \CP, such that by running $\mathcal{A}$ with $\mathcal{M}_i$ as input oracles, we can get the bichromatic closest pair between $A$ and $B$. The reduction has four steps:

\paragraph*{1. Pre-processing.} We first follow the classical reduction to pre-process the input points of \textsf{BCP}. For some integer $n'\leq n^{0.1}$, we can partition $A$ and $B$ into $n'$-size subsets:
\begin{align}
    A= &~ A_1~\dot\cup ~\cdots ~\dot \cup~ A_{r},\\
    B= &~ B_1 ~\dot\cup ~\cdots ~\dot \cup~ B_{r},\notag
\end{align}
where $r=\lfloor n/n'\rfloor$. Here, we assume that $n$ is divisible by $n'$. It follows that
\begin{align}\label{eq:bcp_bcp}
    \textsf{BCP}(A,B)=\min_{i,j\in [r]} \textsf{BCP}(A_i,B_j).
\end{align}

Then, we use the algorithm in \cite{km19} to construct $k$ mappings $f_1,\dots, f_k:[2n']\rightarrow \{0,1\}^{d'}$ such that 
\begin{align}\label{eq:bcp_cp}
    \textsf{BCP}(A_i,B_j)=\min_{t\in [k]} \textsf{CP}(f_t(A_i)\cup f_t(B_j)) ~~~ \forall i,j\in [\lfloor n/n'\rfloor].
\end{align}

More specifically, we pick $n'$ to be the largest number in a log-dense sequence that is smaller than $n^{0.1}$. Then, we apply~\cref{thm:dense-cd} to classically construct a bipartite graph $G(A\cup B,E)$ with $n'$ vertices in each side and a realization $\tau$.
By choosing $\delta = \epsilon/2$ in \cref{thm:dense-cd}, the graph $G$ has $|E|=\Omega(n'^{2-\epsilon/2})$ edges. And we can get $2n'$ 0/1-strings of length $(\log n')^{O(2/\epsilon)}$:
\begin{align}
    \tau^A_i=\tau(u_i)~~~\forall u_i\in A, \quad\text{ and }\quad
    \tau^B_i=\tau(v_i)~~~\forall v_i\in B.
\end{align}

In order to cover the complete bipartite graph, we run the classical algorithm (\cref{lem:cover}) to find $k$ permutations $\pi_1,\dots, \pi_k:[n']\rightarrow [n']$, where $k$ is a parameter to be chosen later.

Then, we can define the mappings as follows:  
\begin{align}\label{eq:fcn_cp}
    f_t(u)=
    \begin{cases}
        x_v\circ \left(\tau^A_{\pi_t(w)}\right)^{d+1} & \text{if }1\leq u\leq n'\\
        y_v\circ \left(\tau^B_{\pi_t(w)}\right)^{d+1} & \text{if }n'<u\leq 2n'
    \end{cases} ~~~\forall t\in [k], u\in [2n'],
\end{align}
where $\circ$ means string concatenation and $(s)^{d+1}$ denotes $d+1$ copies of the string $s$. For a point $p\in A_i\cup B_j$, $u\in [2n']$ is the index in this union-set,  $v\in [n]$ is the index in the ground set $A$ or $B$, and $w\in [n']$ is the index in the subset $A_i$ or $B_j$. Further, if $1\leq u\leq n'$, then $w:=u$; otherwise, $w:=u-n'$.

\paragraph{2. Oracle construction.}
For $i,j\in [r],t\in [k]$, we then construct the input oracle $\mathcal{M}_{i,j,t}$ for the problem $\textsf{CP}(f_t(A_i)\cup f_t(B_j))$. For a query index $u\in [2n']$, 
\begin{align}\label{eq:oracle_cp}
    M_{i,j,t}\ket{u}\ket{0}=\ket{u}\ket{f_t(u)}.
\end{align}

With the help of the input oracle $\ora_{\BCP}$, we can implement $\mathcal{M}_{i,j,t}$ in the following way: 
\begin{enumerate}
    \item Prepare an ancilla qubit $\ket{b}$ such that $b=1$ if $u>n'$. 
    \item Transform $\ket{u}$ to $\ket{v}$, the index of the point in $A$ or $B$, based on the value of $b$. Note that the index is unique. Hence, this transformation is unitary and can be easily achieved by a small quantum circuit.
    \item Query $\ora_{\BCP}$ with input $\ket{b}\ket{v}$. Assume $b=0$. Then, 
    \begin{align}
        \ket{b}\ket{v}\ket{0}\xmapsto{\ora_{\BCP}}\ket{b}\ket{v}\ket{x_v}.
    \end{align}
    \item Similar to the second step, the index $w$ of the point in $A_i$ and $B_j$ can be computed from $v$ by a unitary transformation:
    \begin{align}
        \ket{b}\ket{v}\ket{x_v}\mapsto \ket{b}\ket{w}\ket{x_v}
    \end{align}
    \item Since each $w$ corresponds to a unique string $\tau^A_{\pi_t(w)}$, we can attach $d+1$ copies of this string to the remaining quantum registers:
    \begin{align}
        \ket{b}\ket{w}\ket{x_v}\mapsto \ket{b}\ket{w}\ket{x_v}\ket{\left( \tau^A_{\pi_t(w)} \right)^{d+1}}.
    \end{align}
    \item By recovering $u$ from $w$, we get the final state:
    \begin{align}
        \ket{u}\ket{f_t(u)}=\ket{u}\ket{x_v,\left( \tau^A_{\pi_t(w)} \right)^{d+1}}.
    \end{align}
\end{enumerate}

\paragraph*{3. Query process}
By \cref{eq:bcp_bcp,eq:bcp_cp}, we have
\begin{align}\label{eq:bcp_all}
    \textsf{BCP}(A,B)=\min_{i,j\in [r],t\in [k]}\textsf{CP}(f_t(A_i)\cup f_t(B_j)).
\end{align}
Hence, we can use quantum minimum-finding algorithm in \cref{cor:trivial_alg} over the sub-problems to find the minimum solution. For each sub-problem, we can run the algorithm for \textsf{CP} with $\mathcal{M}_{i,j,t}$ as the input oracle.

\paragraph*{4. Post-processing.} In case that $n$ is not divisible by $n'$, let the remaining points in $A$ and $B$ be $A_{res}$, $B_{res}$, respectively. Then, we can use Grover search to find the closest pair between $A_{res}$ and $B$, and between $B_{res}$ and $A$. Then, compare the answer to the previously computed result and pick the smaller one.

\paragraph*{Correctness.} In this reduction, we do not change the constructions of the mappings $\{f_i\}_{i\in [k]}$. By~\cite{km19}, \cref{eq:bcp_all} is correct in the classical setting. Hence, it also holds in the quantum setting, and we can use Grover search to find the minimum solution. However, since the algorithm $\mathcal{A}$ for $\CP$  has success probability $2/3$, for each tuple $(i,j,t)\in [r]\times [r]\times [k]$, we need to run $\mathcal{A}^{\mathcal{M}_{i,j,t}}$ $O(\log n)$ times to boost the success probability to at least $1-\frac{1}{n}$. Then, by the union bound, the probability that all queries in the Grover search are correct is at least $99/100$. Hence, by \cref{thm:mim}, the overall success probability is at least $2/3$.

\paragraph*{Running Time of the Reduction.} 

The running time of the pre-processing step consists of two parts: (1) constructing the graph $G$ and its realization $\tau$; (2) finding $k$ permutations. For the first part, by~\cref{thm:dense-cd}, it can be done in $n'^{2+o(1)}$ time. For the second part, we pick $k=O(\frac{2n'^2\log n'}{n'^{2-\epsilon/2}})=O(n'^{\epsilon/2}\log n')$, and by~\cref{lem:cover}, it can be done in $O(n'^6 \log n')$ time. Hence, the total running time of pre-processing step is $n'^{2+o(1)}+O(n'^6\log n')=\wt{O}(n^{0.6})$.

The oracle construction can be done ``on-the-fly''. More specifically, given the strings $\{\tau^A_i,\tau^B_i\}_{i\in [n']}$, and permutations $\{\pi_i\}_{i\in [k]}$, for each query index $u$, we can simulate the oracle $\mathcal{M}_{i,j,t}$ defined in \cref{eq:oracle_cp} in $c(\mathcal{M}_{i,j,t})=O(d_2)=(\log n')^{\Omega(1)}=\wt{O}(1)$ time.

In the query process, for each $\CP$ instance indexed by $(i,j,t)$, suppose $\mathcal{A}^{\mathcal{M}_{i,j,t}}$ gets the answer in time $q(n')=n'$. Moreover, for each time $\mathcal{A}$ querying the input oracle $\mathcal{M}_{i,j,t}$, we need to spend $c(\mathcal{M}_{i,j,t})$ time to simulate the oracle. And we also have $O(\log n)$ runs for each instance. Hence, the total running time for each $\CP$ is at most  
\begin{align}
    n'^{1-\epsilon}\cdot \wt{O}(1)\cdot O(\log n)=\wt{O}(n'^{1-\epsilon}).
\end{align}
Then, we use Grover's search algorithm over $r^2\cdot k$ instances, which can be done by querying $\wt{O}(\sqrt{r^2\cdot k})$ instances by \cref{thm:mim}. Therefore, for any $\epsilon > 0$, we have

\begin{align}\label{eq:bcp_qfgr}
    \wt{O}(\sqrt{r^2 k})~\cdot &~ q(n')^{1-\epsilon}\cdot c(\mathcal{M}_{i,j,t})\cdot O(\log n) =  ~ 
    \wt{O}(\sqrt{(n/n')^2 k}\cdot(n')^{1-\eps}) \\
    \leq &~ \wt{O}(n\cdot (n')^{-\eps})
    \leq \wt{O}(n\cdot n^{-\eps/2})
    \leq n^{1-\delta},
\end{align}
where the first inequality follows from $k=O(n'^{\epsilon / 2}\log n')$ as shown in \cite{km19} and the last inequality follows by setting $\delta = \epsilon / 10$.

For the post-processing step, the sizes of $A_{res}$ and $B_{res}$ are at most $n'$. The running time is 
\begin{align}
    O(\sqrt{n\cdot n'}\cdot \log n) \leq \wt{O}(n^{0.55}).
\end{align}

Therefore, for any $\epsilon>0$, there exists a $\delta>0$ such that the \cref{eq:bcp_qfgr} holds and the total reduction time is $O(n^{1-\delta})$. By \cref{def:qfgr}, $\BCP_{n,d_1}$ can be quantum fine-grained reduced to $\CP_{n,d_2}$. This completes the proof of this lemma. 
\end{proof}

Finally, we show that $(\textsf{OV}_{n,d},n)\leq_{\QFG} (\textsf{BCP}_{n,d'},n)$ by quantizing the reduction in~\cite{km19} following the same idea. 
\begin{lemma}\label{lem:ov_q_bcp}
For $d=\Theta(\log n)$, 
\begin{align} 
    (\OV_{n,d},n)\leq_{\QFG} (\BCP_{n,d'},n),
\end{align}
where $d'=\Theta(\log n)$.
\end{lemma}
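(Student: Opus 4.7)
The plan is to mimic the structure of the proof of~\cref{lem:qbcp_to_cp}: partition the $\OV$ input into small blocks, invoke the classical $\OV$-to-$\BCP$ gadget from~\cite{km19} on each pair of blocks, simulate the resulting $\BCP$ oracles on the fly from the $\OV$ oracle, and aggregate the answers across the many sub-instances with a quantum search.

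First, let $A,B\subseteq\{0,1\}^d$ be the two input sets of $\OV_{n,d}$, both of size $n$, accessed via the oracle $\ora_{\OV}$. Pick $n'$ to be the largest element of the log-dense sequence from~\cref{thm:dense-cd} with $n'\le n^{0.1}$, and partition $A=A_1\dot\cup\cdots\dot\cup A_r$ and $B=B_1\dot\cup\cdots\dot\cup B_r$ with $r=\lfloor n/n'\rfloor$ blocks of size $n'$. Since $\OV(A,B)$ has a solution iff some $\OV(A_i,B_j)$ does, it suffices to decide each $n'$-sized sub-instance. For each such sub-instance, apply the classical $\OV$-to-$\BCP$ reduction of~\cite{km19} together with the covering permutations of~\cref{lem:cover} to produce mappings $f_1,\dots,f_k:A_i\cup B_j\to\{0,1\}^{d'}$ with $d'=\Theta(\log n)$ and $k=O((n')^{\eps/2}\log n')$, such that $(A_i,B_j)$ contains an orthogonal pair iff $\min_{t\in[k]}\BCP(f_t(A_i),f_t(B_j))\le\tau$ for a threshold $\tau$ determined by the gadget. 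The graph, its realization, and the permutations are computed classically once, in $\tilde O(n^{0.6})$ time.

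Next, simulate each $\BCP$ oracle $\mathcal M_{i,j,t}$ on the fly exactly as in~\cref{lem:qbcp_to_cp}: given $\ket{u}\ket{0}$ with $u\in[2n']$, determine whether $u$ is on the $A$-side or the $B$-side, compute the global index of the corresponding point in $[n]$, query $\ora_{\OV}$ to read the underlying $\{0,1\}^d$ vector, and append the short gadget suffix determined by $(t,i,j,u)$ using the precomputed tables. Each simulated query acts unitarily and costs $c(\mathcal M_{i,j,t})=\polylog(n)$. Run the $\BCP$ $\epsilon$-oracle $\A_\epsilon$ on each sub-instance with $O(\log n)$ repetitions to drive the error below $1/n^{O(1)}$, and post-check whether the returned pair achieves distance $\le\tau$ by querying $\ora_{\OV}$ and computing the inner product. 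Finally, apply the quantum search of~\cref{thm:mim} over the $r^2k$ sub-instance indices to decide whether any of them witnesses an orthogonal pair, and handle the $o(n')$ leftover points via two standard Grover searches in $\tilde O(\sqrt{n\cdot n'})=\tilde O(n^{0.55})$ time.

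Correctness follows from the classical correctness of the $\OV$-to-$\BCP$ gadget together with a union bound over the $n^{O(1)}$ boosted sub-instance calls. For the running time, arguing as in~\cref{eq:bcp_qfgr},
\begin{align*}
\tilde O\bigl(\sqrt{r^2 k}\bigr)\cdot q(n')^{1-\epsilon}\cdot c(\mathcal M_{i,j,t})\cdot O(\log n)
\;\le\;\tilde O\bigl((n/n')\cdot (n')^{\epsilon/4}\cdot (n')^{1-\epsilon}\bigr)
\;\le\; n^{1-\delta}
\end{align*}
for some $\delta=\delta(\epsilon)>0$, since $n'\sim n^{0.1}$. The main difficulty I anticipate is step one: confirming that the gadget of~\cite{km19}, when composed with the covering permutations of~\cref{lem:cover}, produces a family of $\BCP$ instances whose coordinates $f_t(u)$ are determined only by $(t,i,j,u)$ together with the precomputed tables, so that $\mathcal M_{i,j,t}$ is realizable as a $\polylog(n)$-gate reversible circuit with no history-dependent state, and that a single $\BCP$ call followed by an inner-product check on the returned pair is enough to certify the presence or absence of an orthogonal pair in the block.
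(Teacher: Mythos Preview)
Your proposal is drastically more complicated than necessary and conflates two different classical reductions. The paper's proof is essentially a one-step argument: the classical $\OV\to\BCP$ gadget of~\cite{km19} is a \emph{local, coordinate-wise} map that sends each $d$-bit vector to a $5d$-bit string (replacing each coordinate by a fixed $5$-bit block depending on its value and its color), producing a \emph{single} $\BCP$ instance of the \emph{same size} $n$ with the property that $\OV(A,B)=1$ iff the bichromatic Hamming minimum equals $2d$. The $\BCP$ oracle is then simulated on the fly from $\ora_{\OV}$ in $O(d)=O(\log n)$ time per query, and the inequality $n^{1-\epsilon}\cdot O(\log n)\le n^{1-\delta}$ is immediate. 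No partitioning, no contact-dimension graphs, no covering permutations, and no outer Grover search are needed.

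The machinery you invoke from \cref{thm:dense-cd} and \cref{lem:cover} belongs to the $\BCP\to\CP$ reduction of \cref{lem:qbcp_to_cp}, where it is genuinely required to keep same-color points far apart in a $\CP$ instance. That obstacle does not arise for $\OV\to\BCP$, because $\BCP$ already respects the bipartition $A$ versus $B$; there is no need to realize a bipartite graph or to cover $K_{n,n}$ by subgraphs, and hence no family $f_1,\dots,f_k$ of mappings arises. Your sentence ``apply the classical $\OV$-to-$\BCP$ reduction of~\cite{km19} together with the covering permutations of~\cref{lem:cover}'' therefore does not correspond to anything in~\cite{km19}. If one strips the spurious permutations from your plan, the partition-and-search shell would still give a valid bound, but it is pure overhead: the direct size-preserving gadget already yields the quantum fine-grained reduction in one call.
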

\begin{proof}
For an $\OV$ instance with sets of vectors $A$ and $B$, let $\ora_{\OV}$ be the input oracle such that:
\begin{align}
    \ora_{\OV}\ket{i}\ket{0}=\begin{cases}
        \ket{i}\ket{a_i} & \text{if }i\in A,\\
        \ket{i}\ket{b_i} & \text{if }i\in B.
    \end{cases}
\end{align}
where $a_i,b_i\in \{0,1\}^d$.

Then, similar to the classical reduction, we can construct mappings $f_A, f_B:\{0,1\}^d\rightarrow \{0,1\}^{5d}$ such that
\begin{align}
    f_A(a_i)_{5j-4:5j}=\begin{cases}
        11000 & \text{if }a_i(j)=0\\
        00110 & \text{if }a_i(j)=1
    \end{cases}~~~\forall j\in [d],
\end{align}
and
\begin{align}
    f_B(b_i)_{5j-4:5j}=\begin{cases}
        10100 & \text{if }b_i(j)=0,\\
        01001 & \text{if }b_i(j)=1.
    \end{cases} ~~~\forall j\in [d].
\end{align}

By the classical reduction, we have
\begin{align}
    \textsf{OV}(A,B)=1 \text{ if and only if }\textsf{BCP}(f_A(A),f_B(B))=2d
\end{align}
under Hamming distance. 

Also, note that we can simulate the input oracle $\ora_{\BCP}$ by first querying the oracle $\ora_{\OV}$ to get the vector, then applying the corresponding mapping $f_A$ or $f_B$, which can be done in $c(\ora_{\BCP})=O(d)$ time. Let the running time of the algorithm for $\textsf{BCP}$ be $q(n)=n$. Then for any $\epsilon>0$,
\begin{align}
    q(n)^{1-\epsilon}\cdot c(\ora_{\BCP})=n^{1-\epsilon}\cdot \Theta(\log n)\leq n^{1-\delta}
\end{align}
for some small $\delta >0$. Hence, by \cref{def:qfgr}, $(\textsf{OV}_{n,d},n)\leq_{\QFG} (\textsf{BCP}_{n,d'},n)$.
\end{proof}
 
\begin{proof}[Proof of \cref{thm:qseth}]

We can prove the theorem by contradiction following \cref{lem:sat_q_ov}, \cref{lem:ov_q_bcp}, and \cref{lem:qbcp_to_cp}. Specifically, suppose that there exists an $\epsilon>0$, for all $d=\Theta(\log n)$, there exists a quantum algorithm which can solve $\OV$ in time $O(n^{1-\epsilon})$. Then, we can obtain a quantum algorithm for \textsf{CNF-SAT}, which runs in time $O(2^{n/2(1-\epsilon)})$ by \cref{lem:sat_q_ov}. This contradicts QSETH. The proof for $\CP$ is the same. 
\end{proof}

\subsection{Quantum lower bound for $\BCP$ in nearly-constant dimensions under QSETH}
A byproduct of the previous subsection is a quantum lower bound for $\BCP$ in higher dimensions (i.e., $d = \polylog(n)$) under QSETH (\cref{lem:ov_q_bcp}). In this subsection, we show that this quantum lower bound for $\BCP$ even holds for nearly-constant dimensions (i.e., $d = c^{\log^*(n)}$). The main result of this subsection is the following theorem.
\begin{theorem}\label{thm:qseth_2_bcp}
Assuming QSETH, there is a constant $c$ such that $\BCP$ in $c^{\log^* (n)}$ dimensions requires $n^{1-o(1)}$ time for any quantum algorithm.
\end{theorem}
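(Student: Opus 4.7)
The strategy is to follow exactly the same pipeline used in \cref{thm:qseth}, but replace the quantum reduction from $\OV$ to $\BCP$ of \cref{lem:ov_q_bcp} (which blows the dimension up to $\Theta(\log n)$) with a quantum version of Chen's improved classical reduction \cite{che18}, mentioned in the Remark after \cref{thm:classical_fg}, whose dimension blow-up is only $2^{O(\log^* n)}$. Concretely, I would state and prove a strengthening of \cref{lem:ov_q_bcp} of the form
\begin{align*}
(\OV_{n,d},n)\leq_{\QFG}(\BCP_{n',d'},n'),
\end{align*}
where $d=\Theta(\log n)$, $n'=\poly(n)$, and $d'=2^{O(\log^*(n'))}$. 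Then chaining with \cref{lem:sat_q_ov} yields $(\mathsf{CNF\text{-}SAT}_n,2^{n/2})\leq_{\QFG}(\BCP_{N,c^{\log^*(N)}},N)$ for $N=\poly(2^{n/2})$, and the theorem follows by contradiction with QSETH exactly as in the proof of \cref{thm:qseth}.

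\textbf{Quantizing Chen's reduction.} The main work is showing that Chen's classical reduction can be realized as a quantum fine-grained reduction. Chen's construction is deterministic and \emph{index-local}: every point of the produced $\BCP$ instance is indexed by a pair $(i,t)$, where $i$ is the index of an original $\OV$ vector and $t$ ranges over a structural parameter set of size $\poly(n)$, and the $(i,t)$-th $\BCP$ point is obtained by applying a fixed gadget-composition $g_t$ of output dimension $2^{O(\log^* n)}$ to the $i$-th $\OV$ vector. Following the oracle-construction template in the proof of \cref{lem:qbcp_to_cp}, the $\BCP$ oracle can therefore be implemented on-the-fly: on a query $\ket{i,t}\ket{0}$ we (i)~apply $\ora_{\OV}$ to obtain the vector $v_i$, (ii)~compute and append $g_t(v_i)$, and (iii)~uncompute the intermediate data. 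Since each gadget layer is a simple table-lookup/concatenation of constant depth and Chen's recursion has only $O(\log^* n)$ layers, the per-query simulation cost is $c(\ora_{\BCP})=2^{O(\log^* n)}\cdot\polylog(n)=\wt O(1)$. Correctness (namely, that a closest pair in the $\BCP$ instance certifies an orthogonal pair) is inherited verbatim from the classical reduction, since we change nothing about the constructed points — only how they are accessed.

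\textbf{Assembling the reduction.} With the above lemma in hand, the quantum fine-grained chain
\begin{align*}
(\mathsf{CNF\text{-}SAT}_n,2^{n/2})\leq_{\QFG}(\OV_{n_1,\Theta(\log n_1)},n_1)\leq_{\QFG}(\BCP_{n_2,c^{\log^*(n_2)}},n_2),
\end{align*}
where $n_1=2^{n/2}$ and $n_2=\poly(n_1)$, is obtained by composing \cref{lem:sat_q_ov} with the new lemma; transitivity of $\leq_{\QFG}$ (an easy check from \cref{def:qfgr}, analogous to the classical case) gives the single reduction from $\mathsf{CNF\text{-}SAT}$ to $\BCP$ in $c^{\log^*(n_2)}$ dimensions. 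If some quantum algorithm solved this $\BCP$ problem in time $O(n_2^{1-\epsilon})$, the reduction would produce a quantum algorithm for $k$-$\SAT$ running in time $O(2^{(1/2-\delta)n})$ for some $\delta>0$, contradicting QSETH.

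\textbf{Main obstacle.} The subtle point is verifying that Chen's recursive dimension-reduction gadgets are truly \emph{index-computable in superposition} with per-query cost matching the output dimension, rather than requiring one to construct the full gadget tables in advance (which would take $\poly(n)$ preprocessing, still within budget but must be performed outside the queries, as in the pre-processing step of \cref{lem:qbcp_to_cp}). Once the recursion is unrolled and each level's gadget is seen to admit an $\wt O(1)$-time, history-independent classical description, the quantum simulation is routine; the rest is a running-time accounting exercise mirroring the inequality \cref{eq:bcp_qfgr}.
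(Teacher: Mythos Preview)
Your approach is essentially the same as the paper's: quantize Chen's \cite{che18} dimension-reducing reduction from $\OV_{n,\Theta(\log n)}$ to $\BCP_{n',2^{O(\log^* n')}}$ via on-the-fly oracle simulation, then chain with \cref{lem:sat_q_ov}. The paper carries this out by explicitly factoring through the intermediate problem $\ZOV$ (orthogonal vectors over $\Z$): first $(\OV_{n,\Theta(\log n)},n)\leq_{\QFG}(\ZOV_{n,2^{O(\log^* n)}},n)$, then $(\ZOV_{n,d},n)\leq_{\QFG}(\BCP_{n,d^2+2},n)$, each proved as a separate lemma.

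One point where your description is slightly off and would need correction: Chen's reduction does \emph{not} produce a single $\BCP$ instance whose points are indexed by $(i,t)$. Rather, it first computes (classically, in $n^{o(1)}$ time) a target set $V_{b,\ell}\subset\Z$ of size $n^{o(1)}$, and then produces one $\ZOV$ instance for each $v\in V_{b,\ell}$; the $\OV$ instance is a YES-instance iff at least one of these $\ZOV$ instances is. So the quantum reduction must iterate over all $|V_{b,\ell}|=n^{o(1)}$ instances and call the $\BCP$ oracle on each (no Grover needed, since $n^{o(1)}\cdot n^{1-\epsilon}\leq n^{1-\delta}$ already). Your single-instance picture with $t$ as a coordinate of the point index would require an argument that merging the instances creates no spurious close pairs across different $t$-blocks, which is not immediate and not how Chen's reduction is structured. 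Once you switch to the multi-instance framing, everything else in your plan---the $\poly(\log n)$ per-query cost of computing $\psi_{b,\ell}$, the $n^{o(1)}$ preprocessing to build $V_{b,\ell}$, and the final running-time inequality---goes through exactly as you outline.
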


We will ``quantize'' the results by Chen~\cite{che18} to prove this theorem. More specifically, we first show a quantum fine-grained self-reduction of $\OV$ from $\log n$ dimensions with binary entries to $2^{\log ^* (n)}$ dimensions with integer entries ($\ZOV$). Then, we give a quantum fine-grained reduction from $\ZOV$ to $\BCP$ in nearly-constant dimensions.

\begin{definition}[Integral Orthogonal Vector, $\ZOV$]
Given two sets $A,B$ of $n$ vectors in $\mathbb \Z^d$, find a pair of vectors $a\in A$ and $b\in B$ such that
$\ip{a}{b}=0$, where the inner product is taken in $\mathbb{Z}$.
\end{definition}

We use $\ZOV_{n,d}$ to denote $\ZOV$  with $n$ vectors of $d$ dimension in each set. We then recap a theorem in \cite{che18}:

\begin{theorem}[{\cite[Theorem 4.1]{che18}}]\label{thm:zov_map}
Let $b,\ell$ be two sufficiently large integers. There is a classical reduction $\psi_{b,\ell}:\{0,1\}^{b\cdot\ell}\rightarrow \Z^\ell$ and a set $V_{b,\ell}\subseteq \Z$, such that for every $x,y\in \{0,1\}^{b\cdot \ell}$,
\begin{align}\label{eq:zov_iff}
    \ip{x}{y}=0~\Leftrightarrow~\ip{\psi_{b,\ell}(x)}{\psi_{b,\ell}(y)}\in V_{b,\ell}
\end{align}
and
\begin{align}
    0\leq \psi_{b,\ell}(x)_i< \ell^{6^{\log^* (b)}\cdot b}
\end{align}
for all possible $x$ and $i\in [\ell]$. Moreover, the computation of $\psi_{b,\ell}(x)$ takes $\poly(b\cdot \ell)$ time, and the set $V_{b,\ell}$ can be constructed in $O\left( \ell^{O(6^{\log^* (b)}\cdot b)}\cdot \poly(b\cdot \ell) \right)$ time.
\end{theorem}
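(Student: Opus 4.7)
My plan is to give a direct, non-recursive construction of $\psi_{b,\ell}$ via a symmetric base-$M$ polynomial encoding of each $b$-coordinate block, designed so that the Boolean inner product is isolated as a single digit of the inner product in the image. I would split an input $x \in \{0,1\}^{b\ell}$ into $\ell$ consecutive blocks $x^{(1)},\ldots,x^{(\ell)} \in \{0,1\}^b$ and define a single-block encoding $\phi\colon \{0,1\}^b \to \Z_{\geq 0}$ by
\begin{align*}
    \phi(u) \;=\; \sum_{i=1}^{b} u_i \bigl( M^{i} + M^{N-i} \bigr),
\end{align*}
with $N := 2b+1$ and $M := 4\ell b + 1$ (the exact value chosen below to kill all carries). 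I then set $\psi_{b,\ell}(x) := (\phi(x^{(1)}),\ldots,\phi(x^{(\ell)})) \in \Z^{\ell}$.

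The central algebraic identity I would establish is that, after expanding $\phi(u)\phi(v)$, the coefficient of $M^N$ equals exactly $2\langle u,v\rangle$. Of the four cross terms $M^{i+j}$, $M^{i+N-j}$, $M^{N-i+j}$, $M^{2N-i-j}$, the first and last can only reach exponent $N$ if $i+j = 2b+1$, which is impossible for $i,j \in [b]$; the middle two reach $N$ precisely when $i=j$, contributing $u_i v_i$ each. Summing over the $\ell$ blocks yields
\begin{align*}
    \langle \psi_{b,\ell}(x),\psi_{b,\ell}(y)\rangle \;=\; \sum_{k=1}^{\ell} \phi(x^{(k)})\phi(y^{(k)}) \;=\; 2\langle x,y\rangle \cdot M^{N} \;+\; \sum_{s \neq N} c_{s}\, M^{s},
\end{align*}
where a short case analysis bounds every $c_s$ by $4\ell b < M$, so the digits do not carry and the coefficient in front of $M^N$ is the genuine base-$M$ digit at position $N$. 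Since $2\langle x,y\rangle \leq 2b\ell < M$, this digit is zero iff $\langle x,y\rangle = 0$, so I would define $V_{b,\ell}$ as the set of non-negative integers below the a priori bound $\ell \cdot (2bM^{N-1})^2$ whose base-$M$ digit at position $N$ vanishes, giving the iff in~\cref{eq:zov_iff}.

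For the coordinate bound, $\phi(u) \leq 2bM^{N-1} \leq 2b(4\ell b+1)^{2b} = \ell^{O(b)}$, which is well inside the target $\ell^{6^{\log^{*}(b)} \cdot b}$ once $\ell$ is at least polynomial in $b$. Each $\phi(x^{(k)})$ is computable in $\poly(b,\log \ell)$ time, and $V_{b,\ell}$ can be enumerated by sweeping every integer up to $\ell \cdot (2bM^{N-1})^2 = \ell^{O(b)}$ and checking one digit, so the total construction time sits inside the stated $\ell^{O(6^{\log^{*}(b)}\cdot b)}\cdot \poly(b\ell)$ budget.

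The main technical obstacle is the no-carry bookkeeping: I must precisely control the maximum contribution at \emph{every} exponent $s \in \{0,\ldots,2N\}$, not just $s = N$, because a carry from a neighboring position would corrupt the identification of $\langle x,y\rangle$ with a single digit, and this forces a careful inspection of all four cross-term families. A subtler issue is the backward direction of the iff: since $V_{b,\ell}$ is fixed before seeing $x,y$, I need to verify that the entire image of $(x,y)\mapsto \langle \psi_{b,\ell}(x),\psi_{b,\ell}(y)\rangle$ lies within the enumerated interval and, conversely, that membership in $V_{b,\ell}$ really forces the $M^N$-digit to vanish, so that no spurious integer outside the achievable range accidentally belongs to $V_{b,\ell}$ and breaks correctness.
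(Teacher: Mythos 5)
Your algebraic core is correct: with $N = 2b+1$ and $M > 4\ell b$, the coefficient of $M^N$ in $\sum_k \phi(x^{(k)})\phi(y^{(k)})$ is exactly $2\langle x,y\rangle$, the per-exponent contributions are bounded by $4\ell b < M$ so no carries occur, and $2\langle x,y\rangle < M$, so the $M^N$ base-$M$ digit faithfully detects orthogonality. The iff then works and $V_{b,\ell}$ can be enumerated as you describe.

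The gap is in the coordinate bound, and it is fatal in exactly the regime the paper needs. Because you must take $M = \Theta(\ell b)$ to prevent carries, each coordinate satisfies only
\begin{align}
  \phi(u) \;\le\; 2b\,M^{N-1} \;=\; b^{\,\Theta(b)} \cdot \ell^{\,\Theta(b)},
\end{align}
and the factor $b^{\Theta(b)}$ cannot be absorbed into the target $\ell^{6^{\log^*(b)}\cdot b}$ unless $\log b = O\bigl(6^{\log^*(b)}\log\ell\bigr)$. You note this yourself by conditioning on ``$\ell$ at least polynomial in $b$,'' but that hypothesis is false in the application: Lemma~\ref{lem:ov_2_zov} instantiates $\ell = 7^{\log^*(n)}$ and $b = \Theta(\log n / 7^{\log^*(n)})$, so $\log b \approx \log\log n$ while $6^{\log^*(b)}\log\ell = \Theta\bigl(6^{\log^*(n)}\log^*(n)\bigr)$; the former eventually dwarfs the latter (e.g.\ for $n$ a tower of height $7$, $\log\log n$ is already astronomically larger than $6^{\log^*(n)}$). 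The same $b^{\Theta(b)}$ factor blows up the enumeration time for $V_{b,\ell}$ past $\ell^{O(6^{\log^*(b)}b)}\poly(b\ell)$. The whole point of the theorem is that the entry magnitude depends on $\ell$ (not on $b$) up to the very slowly growing $6^{\log^*(b)}$ factor, and a single-shot polynomial encoding with base $M \ge \Omega(\ell b)$ is structurally incapable of that; the $6^{\log^*(b)}$ exponent is the signature of a recursion of depth $\log^*(b)$ in Chen's construction, where the block size is repeatedly compressed so that the base never needs to be as large as $\poly(b)$ at any level. Without that recursive compression, your reduction does not meet the theorem's entry bound and cannot be plugged into the proof of Theorem~\ref{thm:qseth_2_bcp}.
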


Note that the size of $V_{b,\ell}$ is at most $\ell^{2\cdot 6^{\log^* (b)}\cdot b+1}$. 
The following lemma gives a quantum fine-grained reduction from $\OV$ to $\ZOV$:
\begin{lemma}\label{lem:ov_2_zov}
For $d=\Theta(\log n)$,
\begin{align}
    (\OV_{n,d}, n)\leq_{\QFG} (\ZOV_{n,d'}, n).
\end{align}
where $d'=2^{O(\log^* n_2)}$.
\end{lemma}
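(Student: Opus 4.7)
The plan is to quantize Chen's self-reduction of $\OV$ from~\cite{che18}, whose centerpiece is the integer-valued map $\psi_{b,\ell}$ of \cref{thm:zov_map}. Given an $\OV_{n,d}$ instance with sets $A,B\subseteq\{0,1\}^d$ and $d=\Theta(\log n)$, I will pick $\ell = c^{\log^*(n)}$ for a sufficiently large constant $c>6$, set $b=\lceil d/\ell\rceil$ (so $b\ell=\Theta(\log n)$), and pad every input vector with zeros into $\{0,1\}^{b\ell}$. By \cref{thm:zov_map}, $\ip{x}{y}=0$ iff $\ip{\psi_{b,\ell}(x)}{\psi_{b,\ell}(y)}\in V_{b,\ell}$. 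To turn membership in $V_{b,\ell}$ into a single ``inner product equals zero'' condition, I split into $|V_{b,\ell}|$ separate $\ZOV$ subproblems indexed by $v\in V_{b,\ell}$: define $\tilde{a}_v := (\psi_{b,\ell}(a),1)$ and $\tilde{b}_v := (\psi_{b,\ell}(b),-v)$, so that $\ip{\tilde{a}_v}{\tilde{b}_v} = \ip{\psi_{b,\ell}(a)}{\psi_{b,\ell}(b)} - v$. The $\OV$ instance has a solution iff at least one of these $\ZOV$ instances does, and each lives in dimension $\ell+1 = 2^{O(\log^*(n))}$ as required.

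Each subproblem's oracle $\mathcal{M}_v$ will be realized on the fly from $\ora_{\OV}$: on a given index, query $\ora_{\OV}$, apply the reversible circuit for $\psi_{b,\ell}$ (computable in $\poly(b\ell)=\polylog(n)$ time by \cref{thm:zov_map}), and finally append the single coordinate $1$ or $-v$ depending on whether the index belongs to $A$ or $B$. Hence $c(\mathcal{M}_v)=\polylog(n)$. To search over $v\in V_{b,\ell}$, I layer Grover's search (\cref{thm:grover_algo}) on top of the assumed $\ZOV_{n,d'}$ algorithm, boosting each per-call success probability to $1-1/n$ by $O(\log n)$ repetitions so that a union bound preserves constant overall success. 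If $\ZOV_{n,d'}$ is solvable in time $n^{1-\epsilon}$, then the overall reduction time is $\tilde{O}\bigl(\sqrt{|V_{b,\ell}|}\cdot n^{1-\epsilon}\cdot c(\mathcal{M}_v)\bigr)$.

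The essential calculation is to show $|V_{b,\ell}|=n^{o(1)}$ so that this fits under $n^{1-\delta}$ for some $\delta>0$. From the bound $\log|V_{b,\ell}| \le (2\cdot 6^{\log^*(b)} b + 1)\log\ell$ noted after \cref{thm:zov_map}, combined with $\log\ell = O(\log^*(n))$, $\log^*(b)\le\log^*(n)$, and $b = O(\log n / c^{\log^*(n)})$, we get $6^{\log^*(b)} \cdot b \le O(\log n) \cdot (6/c)^{\log^*(n)}$. Since $\log^*(n)\to\infty$ and $(6/c)^{k}\cdot k \to 0$ as $k\to\infty$ whenever $c>6$, this yields $\log|V_{b,\ell}| = o(\log n)$. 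Plugging back, the reduction runs in time $n^{1-\epsilon+o(1)}\cdot\polylog(n) \le n^{1-\epsilon/2}$ for sufficiently large $n$, verifying the inequality in \cref{def:qfgr} with $\delta=\epsilon/2$.

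The main obstacle is not any single quantum primitive but the parameter tuning: we need $\ell$ small enough to count as $2^{O(\log^*(n))}$ yet large enough that $b=\Theta(\log n)/\ell$ forces $6^{\log^*(b)}\cdot b = o(\log n/\log^*(n))$. This balance works only because $6^{\log^*(b)}$ is dwarfed by $c^{\log^*(n)}$ once $c>6$, exploiting how slowly $\log^*$ grows. Everything else---the oracle simulation, the case split over $V_{b,\ell}$, and Grover-based search over the $n^{o(1)}$ subproblems---is a direct adaptation of techniques already used in the proofs of \cref{lem:qbcp_to_cp} and \cref{lem:ov_q_bcp}.
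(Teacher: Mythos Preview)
Your proposal is correct and follows essentially the same approach as the paper: the paper also picks $\ell = 7^{\log^* n}$ (your $c>6$), sets $b = d/\ell$, constructs one $\ZOV_{n,\ell+1}$ instance per $v\in V_{b,\ell}$ via the appended coordinate $(1,-v)$, and shows $|V_{b,\ell}|=n^{o(1)}$ by the same calculation $6^{\log^* b}\cdot b\cdot \log\ell = o(\log n)$. The one minor difference is that the paper simply iterates sequentially over all $v\in V_{b,\ell}$ (summing $\sum_v n^{1-\epsilon}\cdot c(\ora_{\ZOV}) = n^{o(1)}\cdot n^{1-\epsilon}$) rather than layering Grover search as you do; since $|V_{b,\ell}|=n^{o(1)}$ already, Grover is unnecessary here and the paper's simpler enumeration suffices.
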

\begin{proof}
Consider an $\OV_{n,d}$ with $d=c\cdot \log n$, where $c$ is an arbitrary constant. We choose $\ell:=7^{\log^* n}$ and $b:=d/\ell$. Then, we can apply~\cref{thm:zov_map} to get the mapping function $\psi_{b,\ell}$ and the set $V_{b,\ell}$. For each $v\in V_{b,\ell}$, we'll construct an instance of $\ZOV_{n,\ell+1}$ as follows:
\begin{enumerate}
    \item Let $\ket{i}$ be the input query index of $\ZOV_{n,\ell+1}$.  
    \item Query $\OV_{n,d}$'s input oracle $\ora_{\OV}$ and get the vector $\ket{i,x}$.
    \item Compute the mapping $\psi_{b,\ell}$ and get $\ket{i,x}\ket{\psi_{b,\ell}(x)}$.
    \item If $x\in A$, then attach 1 to the end of the register: $\ket{i,x}\ket{\psi_{b,\ell}(x),1}$. If $x\in B$, then attach $-v$ to the end: $\ket{i,x}\ket{\psi_{b,\ell}(x),-v}$.
    \item Use $\ora_{\OV}$ to erase $x$ and return the final input state $\ket{i}\ket{\psi_{b,\ell}(x),1}$ or $\ket{i}\ket{\psi_{b,\ell}(x),-v}$.
\end{enumerate}
For each instance, we can use the quantum oracle for $\ZOV_{n,\ell+1}$ to check the orthogonality. $\OV_{n,d}$ is YES if and only if there exists a YES-instance of $\ZOV_{n,\ell+1}$.

\paragraph*{Correctness.} The correctness follows from~\cref{eq:zov_iff}: 
\begin{align}
\ip{x}{y}=0 \Leftrightarrow \ip{\psi_{b,\ell}(x)}{\psi_{b,\ell}(y)}=v\in V_{b,\ell} \Leftrightarrow \ip{[\psi_{b , \ell}(x), ~ 1]}{[\psi_{b,\ell}(y), ~ -v]}=0. 
\end{align}

\paragraph*{Reduction time.} Note that for $\ell=7^{\log^* n}$ and $b=d/\ell$, we have:
\begin{align}
    \log\left( \ell^{O(6^{\log^* (d)}\cdot b)} \right) = &~ \log \ell \cdot O\left( 6^{\log^* (d)}\cdot (d/\ell) \right)\\
    =&~ O\left( \log^* (n)\cdot 6^{\log^* n} \cdot c\log n / 7^{\log^* n} \right)\\
    =&~ o(\log n).
\end{align}
This implies that $|V_{b,\ell}|\leq \ell^{2\cdot 6^{\log^* (b)}\cdot b+1} \leq n^{o(1)}$.
Hence, the number of $\ZOV_{n,\ell+1}$ instances is $n^{o(1)}$ and the running time for compute $V_{b,\ell}$ is $n^{o(1)}$. And for each input query, the oracle for $\ZOV_{n,\ell+1}$ can be simulated in $c(\ora_{\ZOV})=\poly(d)=\poly(\log n)$ time. We can show that for every $\epsilon>0$, if $\ZOV_{n,\ell+1}$ can be decided in $n^{1-\epsilon}$ time, then
\begin{align}
    \sum_{v\in V_{b,\ell}} n^{1-\epsilon}\cdot c(\ora_{\ZOV})=n^{o(1)}\cdot n^{1-\epsilon}\cdot \poly(\log n) \leq n^{1-\delta}
\end{align}
for some $\delta > 0$, which satisfies the definition of quantum fine-grained reduction (\cref{def:qfgr}).

Therefore, $\OV_{n,O(\log n)}$ is quantum fine-grained reducible to $\ZOV_{n,2^{O(\log^* (n))}}$.
\end{proof}

Then, we give a quantum fine-grained reduction from $\ZOV$ to $\BCP$:
\begin{lemma}\label{lem:zov_2_bcp}
For $d=2^{O(\log^* n)}$,
\begin{align}
    (\ZOV_{n,d}, n)\leq_{\QFG} (\BCP_{n,d'}).
\end{align}
where $d'=d^2+2$.
\end{lemma}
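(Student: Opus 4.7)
The plan is to quantize the tensor-squared Euclidean embedding that underlies Chen's classical reduction from $\ZOV$ to $\BCP$~\cite{che18}, and to realize the resulting $\BCP$ input oracle on the fly from $\ora_{\ZOV}$ via the standard compute--uncompute pattern.

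First I would set up the embedding. Let $B$ be the (polynomial in $n$) entry-magnitude upper bound on the vectors of the given $\ZOV_{n,d}$ instance, and fix $M := d^2 B^4$, so that $M \ge \max_v \|v\|^4$ for every input vector $v$. Define $\phi_A,\phi_B: \Z^d \to \R^{d^2+2}$ by
\[
    \phi_A(a) = \bigl((a_i a_j)_{i,j\in[d]},\ \sqrt{M - \|a\|^4},\ 0\bigr),\qquad
    \phi_B(b) = \bigl(-(b_i b_j)_{i,j\in[d]},\ 0,\ \sqrt{M - \|b\|^4}\bigr).
\]
Using the identity $\sum_{i,j}(a_i a_j + b_i b_j)^2 = \|a\|^4 + 2\langle a,b\rangle^2 + \|b\|^4$, a direct expansion gives $\|\phi_A(a) - \phi_B(b)\|^2 = 2M + 2\langle a,b\rangle^2$. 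Hence the bichromatic minimum squared distance equals $2M$ exactly when there is an orthogonal pair, and any minimizer returned by a $\BCP$ solver on $\phi_A(A)\cup\phi_B(B)$ is such an orthogonal pair (which can be post-verified in $O(d^2)$ time).

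Next I would show how to realize the $\BCP$ input oracle $\ora_{\BCP}$ in superposition. On a basis input $\ket{s}\ket{i}\ket{0}\ket{0}$ with side bit $s\in\{A,B\}$ and index $i$, query $\ora_{\ZOV}$ to fetch the raw vector $v$ into the third register, apply an in-place arithmetic circuit using $O(d^2)$ operations that computes $\phi_s(v)$ into the fourth register, and then apply $\ora_{\ZOV}$ once more to uncompute $v$. This gives $c(\ora_{\BCP}) = O(d^2) = 2^{O(\log^* n)} = n^{o(1)}$. The reduction then invokes an $\epsilon$-oracle $\A_\epsilon$ for $\BCP_{n,d'}$ a single time on this simulated oracle (boosted by $O(\log n)$ majority-vote repetitions so that the total failure probability falls below $1/\poly(n)$) and checks whether the returned pair realizes squared distance $2M$. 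Summing costs yields
\[
    O(\log n)\cdot c(\ora_{\BCP})\cdot q(n)^{1-\epsilon} \;=\; n^{o(1)}\cdot n^{1-\epsilon} \;\le\; n^{1-\delta}
\]
for some $\delta>0$, which satisfies the inequality in~\cref{def:qfgr}; hence $(\ZOV_{n,d},n)\leq_{\QFG}(\BCP_{n,d'},n)$ with $d' = d^2+2$.

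The only delicate point is that the two padding coordinates $\sqrt{M - \|\cdot\|^4}$ are generally irrational. However, $\BCP$ is defined over $\R^{d'}$ and only the squared distance $2M + 2\langle a,b\rangle^2 \in \Z$ is used for the final comparison, so representing the padding coordinates with $O(\log M)$ bits of precision---or equivalently treating the last two columns symbolically throughout---suffices without any complexity blowup. With this bookkeeping handled, the remainder of the argument is a routine quantum adaptation of the classical tensor-squared construction in~\cite{che18}.
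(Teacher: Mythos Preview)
Your proposal is correct and follows essentially the same route as the paper: both use Chen's tensor-squared embedding $a\mapsto (a_ia_j)_{i,j}$ (with the sign flip on the $B$-side) padded by two coordinates $\sqrt{M-\|\cdot\|^4}$ to force $\|\phi_A(a)-\phi_B(b)\|^2 = 2M + 2\langle a,b\rangle^2$, and both realize the $\BCP$ oracle on the fly from $\ora_{\ZOV}$ at cost $c(\ora_{\BCP})=O(d^2)=n^{o(1)}$, then verify the quantum fine-grained inequality. Your explicit treatment of the compute--uncompute pattern and the irrational-padding precision issue are extra bookkeeping the paper omits, but they do not change the argument.
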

\begin{proof}
We remark here that this proof closely follows that for Theorem 4.3 in \cite{che18}. We nonetheless give it here as some details are different.

For an $\ZOV_{n,d}$ instance with $(k\cdot \log n)$-bit entries, we construct a $\BCP$ instance as follows:
\begin{enumerate}
    \item For $x\in A$, construct a vector $x'\in \Z^{d^2}$ such that $x'_{i,j}=x_i\cdot x_j$. Here, we index a $d^2$-dimensional vector by $[d]\times [d]$. Similarly, for $y\in B$, construct a vector $y'\in \Z^{d^2}$ such that $y'_{i,j}=-y_i \cdot y_j$.
    \item Choose $W:=(d^2+1)\cdot n^{4k}$. For each $x'$, construct a vector $x''\in \R^{d^2+2}$ such that 
    \begin{align}
        x''=\left[x',~ \sqrt{W-\|x'\|_2^2},~ 0\right].
    \end{align}  For each $y'$, construct a vector $y''\in \R^{d^2+2}$ such that 
    \begin{align}
        y''=\left[y',~ 0, ~ \sqrt{W-\|y'\|_2^2}\right].
    \end{align}
\end{enumerate}
Then, we claim that the $\ZOV$ instance is YES if and only if the $\BCP$ instance has the minimum distance $\leq \sqrt{2W}$.

\paragraph*{Correctness.} First note that $\|x'\|_2^2\leq d^2\cdot (2^{k\log n})^4=d^2\cdot n^{4k}$. Hence, $W-\|x'\|_2^2 > 0$ and $W-\|y'\|_2^2>0$. For any $x''$ and $y''$ in the new constructed instance of $\BCP$, we have
\begin{align}
    \|x''-y''\|_2^2=&~ \|x''\|_2^2+\|y''\|_2^2-2\cdot \ip{x''}{y''}\\
    =&~ 2\cdot W-2\cdot \ip{x'}{y'}\\
    =&~ 2\cdot W - 2\cdot \sum_{(i,j)\in [d]\times [d]}x_i\cdot x_j\cdot (-y_j\cdot y_j)\\
    =&~ 2\cdot W + 2\cdot (\ip{x}{y})^2.
\end{align}
Hence,
\begin{align}
    \ip{x}{y}=0~\Leftrightarrow~\|x''-y''\|_2^2= 2W.
\end{align}

\paragraph*{Reduction time.} We can see from the above description that the input mapping function is simple and can be computed by a small quantum circuit in $O(d^2)=O(2^{O(\log^* (n))})$ time. Hence, we have $c(\ora_{\BCP})=O(2^{O(\log^* (n))})$. Also, by~\cref{def:qfgr}, it's easy to check that this is indeed a quantum fine-grained reduction from $\ZOV$ to $\BCP$.
\end{proof}

Now \cref{thm:qseth_2_bcp} follows immediately from \cref{lem:ov_2_zov} and \cref{lem:zov_2_bcp}:
\begin{proof}[Proof of \cref{thm:qseth_2_bcp}]
Let $\eps > 0$ be some constant. Suppose we can solve $\BCP_{n,c^{\log^* (n)}}$ in $n^{1-\epsilon}$ time for all constant $c>0$. Then, by \cref{lem:ov_2_zov} and \cref{lem:zov_2_bcp}, we can also solve $\OV_{n,c'\log n}$ in $n^{1-\delta}$ time for some $\delta > 0$ and any $c'>0$. However, this contradicts QSETH by~\cref{thm:qseth}. Therefore, assuming QSETH, there exists a constant $c$ such that $\BCP_{n,c^{\log^*( n)}}$ requires $n^{1-o(1)}$ time.
\end{proof}

\section{Closest pair in constant dimension}
\label{sec:cp}

In this section, we show that there exist almost-optimal quantum algorithms for $\CP$ in constant dimension. The main result is the following theorem, which is a direct consequence of \cref{cor:main_cp,thm:cp_constant}.
\begin{theorem}
For any constant dimension, the quantum time complexity for $\CP$ is $\tilde{\Theta}(n^{2/3})$. 
\end{theorem}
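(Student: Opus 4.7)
The plan is to derive the claimed tight bound $\tilde{\Theta}(n^{2/3})$ by proving the matching upper and lower bounds separately, as the statement is essentially a packaging of \cref{cor:main_cp} (upper bound) and \cref{thm:cp_constant} (lower bound) mentioned in the summary table. So my proof amounts to explaining how each of these two ingredients is established, for $d=O(1)$.

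For the upper bound $\tilde{O}(n^{2/3})$, I will instantiate the MNRS quantum-walk framework of \cref{lem:framework} on the Johnson graph $J(n, r)$ with $r = n^{2/3}$: vertices are $r$-subsets $S \subseteq [n]$, and a vertex is \emph{marked} if the subset contains an $\epsilon$-close pair (for some threshold $\epsilon$ I will sweep via binary search). As in the element-distinctness analysis at the end of \cref{sec:qw-framework}, $\varepsilon = \Omega(r^2/n^2)$ and $\delta = \Omega(1/r)$, so the walk costs $O(\mathsf{S} + \frac{1}{\sqrt{\varepsilon}}(\frac{1}{\sqrt{\delta}} \mathsf{U} + \mathsf{C})) = O(r + \frac{n}{r}(\sqrt{r}\,\mathsf{U} + \mathsf{C}))$, which is minimized at $r=n^{2/3}$ to yield $\tilde{O}(n^{2/3})$ provided $\mathsf{S} = \tilde{O}(r)$, $\mathsf{U} = \polylog(n)$, and $\mathsf{C} = \polylog(n)$.

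Achieving these costs in \emph{time} (not just queries) is the main technical obstacle, and this is exactly where history-independence matters: a single basis state $\ket{S}$ must correspond to a unique bit-string representation, or interference between walk branches that visit the same $S$ via different insertion orders will be destroyed. I will build such a data structure by discretizing $\R^d$ into hypercubes of side-length $\epsilon/\sqrt{d}$ so that any two points at distance $\le \epsilon$ must lie in the same or adjacent hypercubes; since $d = O(1)$, each cell has only $O((\sqrt{d})^d) = O(1)$ neighbors. Storing the (non-empty) cells and their point lists in a composition of a hash table, skip lists, and a radix tree (the construction flagged in the proof overview) gives a canonical, uniquely-represented layout supporting insertion/deletion/neighborhood-lookup in $\polylog(n)$ time, so $\mathsf{U} = \polylog(n)$ and $\mathsf{C} = O(1)$ after the structure is built, and the setup $\mathsf{S} = \tilde{O}(r)$ follows from building it on $r$ points. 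A final outer binary search over $O(\log n)$ candidate values of $\epsilon$ recovers the exact closest-pair distance, multiplying the cost by only a $\polylog(n)$ factor and preserving $\tilde{O}(n^{2/3})$.

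For the lower bound $\Omega(n^{2/3})$, I will reduce \textsf{ED} to $\CP$ in any dimension $d \ge 1$: given $f:[n] \to [m]$, form the $1$-dimensional point set $P = \{f(1),\dots,f(n)\} \subset \R$ (embedding trivially into $\R^d$ by padding with zeros). Then $\min_{i\ne j} \|f(i)-f(j)\| = 0$ iff $f$ has a collision, so any quantum algorithm solving $\CP$ in $\R^d$ in time $T(n)$ solves \textsf{ED} in time $T(n) + O(n)$ queries after the trivial oracle translation; by Aaronson--Shi's $\Omega(n^{2/3})$ query lower bound for \textsf{ED}, $T(n) = \Omega(n^{2/3})$. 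Combining the two directions yields the tight $\tilde{\Theta}(n^{2/3})$ bound claimed in the theorem, and (as noted in the footnote of the statement) the same argument goes through as long as the data-structure operations remain $\polylog(n)$, which extends to $d = O(\log\log n / \log\log\log n)$.
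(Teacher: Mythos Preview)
Your proposal is correct and follows essentially the same approach as the paper: the upper bound via the MNRS walk on $J(n,n^{2/3})$ together with the history-independent hash-table/skip-list/radix-tree data structure over a hypergrid discretization (this is precisely \cref{cor:main_cp}, derived from \cref{thm:ecp1} or \cref{thm:ecp2} plus the binary search of \cref{lemma:cp-ecp}), and the lower bound via the one-line reduction from \textsf{ED} (this is \cref{thm:cp_constant}). One small wording issue: in the lower-bound direction you wrote that a $\CP$ algorithm with time $T(n)$ solves \textsf{ED} in ``$T(n)+O(n)$ queries''; the reduction is in fact query-preserving (each $\CP$ oracle call is simulated by a single $f$-query), so the $+O(n)$ is spurious---but this does not affect the $\Omega(n^{2/3})$ conclusion.
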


Our approach to solve \textsf{CP} is first  reducing to the decision version of the problem, and then apply quantum walk algorithms to solve the decision version. We define the decision version of \textsf{CP},  $\eCP$, as follows. 
\begin{definition}[$\eCP$]
Given a set of points $P\subset \mathbb{R}^{d}$ and $\epsilon\in \mathbb{R}$, find a pair $a,b\in P$ such that $\|a-b\|\leq \epsilon$ if there is one and returns \textsf{no} is no such pair exists. 
\end{definition}
The reduction from $\CP$ to $\eCP$ is given by the following lemma.
\begin{lemma}
  \label{lemma:cp-ecp}
Let $m$ be the number of bits needed to encode each coordinate as a bit string and $d$ be the dimension.
Given an oracle $\ora$ for $\eCP$, there exists an algorithm $A^{\ora}$  that runs in time and query complexity $O(m+\log d )$ that solves the $\CP$. 
\end{lemma}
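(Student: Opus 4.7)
The plan is to reduce $\CP$ to $\eCP$ via binary search over the possible values of the closest-pair distance. First I would observe that, because each coordinate is an $m$-bit string, the squared distance $\|a-b\|^2 = \sum_{i=1}^d (a_i - b_i)^2$ between any two input points is a non-negative integer bounded above by $d\cdot 2^{2m}$. Consequently, the set of distinct squared pairwise distances lies in an integer range of size at most $d\cdot 2^{2m}+1$, and the closest-pair squared distance is an element of that range.

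Given the $\eCP$ oracle $\ora$, I would maintain an interval $[L,R]\subseteq\{0,1,\dots,d\cdot 2^{2m}\}$ of candidate values for the squared closest-pair distance, initialised to the full range. At each iteration I compute $t = \lfloor (L+R)/2 \rfloor$ and invoke $\ora$ with $\epsilon = \sqrt{t}$. If $\ora$ returns a pair $(a,b)$, then I set $R = t$ (and remember $(a,b)$ as the current best candidate); otherwise I set $L = t+1$. After $\lceil\log_2(d\cdot 2^{2m}+1)\rceil = O(m+\log d)$ iterations the interval collapses, and the last successful response of $\ora$ is a pair achieving the minimum distance, which I return.

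Correctness is immediate from monotonicity: $\ora$ returns a pair for threshold $\epsilon$ iff the minimum pairwise distance is at most $\epsilon$, so the binary search converges on the true minimum squared distance, and the associated returned pair is a valid closest pair. For the complexity bound, each iteration makes one call to $\ora$ and performs $O(m+\log d)$ bit operations to compute the midpoint and the square root threshold (only a rational/integer comparison against $t$ is actually needed, which the oracle can implement because $\|a-b\|\le \sqrt{t}$ iff $\|a-b\|^2 \le t$). Summing over $O(m+\log d)$ iterations yields total time and query complexity $O(m+\log d)$, as claimed. No step is really an obstacle here; the only subtlety is to phrase the threshold comparison in terms of integer squared distances so that the bit-length bound $2m+\log d$ drives the iteration count, rather than the continuous real-valued distance.
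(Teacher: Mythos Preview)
Your proposal is correct and follows the same approach as the paper: a binary search over the threshold $\epsilon$, using the $\eCP$ oracle at each step, with the iteration count bounded by the bit-length of the squared-distance range. If anything, your version is more careful than the paper's own proof, since you make explicit that squared distances are integers in $\{0,\dots,d\cdot 2^{2m}\}$, which is exactly what justifies the $O(m+\log d)$ bound on the number of iterations; the paper's proof simply asserts this count without spelling out the integrality argument.

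One small slip: you write that each iteration does $O(m+\log d)$ bit operations and then conclude that the total time is $O(m+\log d)$; taken literally that product is $O((m+\log d)^2)$. But the paper's statement is equally informal on this point---the intended reading, both there and in your argument, is that the \emph{query} complexity (number of oracle calls) is $O(m+\log d)$, which is what actually matters when this lemma is invoked downstream.
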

\begin{proof}
Let $(P,\delta)$ be an instance of the $\CP$.  We first pick an arbitrary pair $a_0,b_0\in P$ and compute $\Delta(a_0,b_0)$. Then, we set $\epsilon$ to be $\Delta(a_0,b_0)/2$ and run the oracle $\ora$ to check whether there exists a distinct pair with distance less than $\Delta(a_0,b_0)/2$ or not. If there exists such a pair, which we denote as $(a_1,b_1)$, then we set $\epsilon=\Delta(a_1,b_1)$ and call $\ora$ to check again. If there is no such pair, then we set $\epsilon=3\Delta(a_0,b_0)/4$ and call $\ora$. We run this binary search for $m+\log d$ iterations. Finally, the algorithm outputs the closest pair.  
\end{proof}

In classical setting, point location is an important step in solving the closest-pair problem, especially the dynamic version. For the quantum algorithm, as walking on the Markov chain, we repeatedly delete a point and add a new point. Hence, in each step, the first thing is to determine the location of the new added point.    

For simplicity, we assume that $m=O(\log n)$, which is the number of digits of each coordinate of the points. By translation, we can further assume that all the points are lying in $[0,L]^d$, where $L=O(2^m) = \poly(n)$. 

Since we are considering $\eCP$, one simple way of point location is to discretize the whole space into a hypergrid, which is defined as follows: 

\begin{definition}\label{def:hypergrid}
Let $d, \epsilon, L>0$. A hypergrid $G_{d,\epsilon,L}$ in the space $[0,L]^d$ consists of all  $\eps$-boxes 
\begin{align}
    g := [a_1,b_1)\times [a_2,b_2)\times \cdots \times [a_d,b_d), 
\end{align}
such that
$b_1-a_1=\cdots =b_d-a_d=\epsilon/\sqrt{d}$ \footnote{The diagonal length of an $\eps$-box is $\epsilon$.}, and $a_i$ is divisible by $\epsilon$ for all $i\in [d]$. 
\end{definition}

For each point $p_i\in [0,L]^d$, we can identify the $\epsilon$-box that contains it using the function $\id(p_i):[0,L]^d\rightarrow \{0,1\}^{d\log (L/\epsilon)}$:
\begin{align}\label{eq:id_fcn}
    \id(p_i)=\big(\left\lfloor p_i(1)/w\right\rfloor, \left\lfloor p_i(2)/w\right\rfloor, \dots, \left\lfloor p_i(d)/w\right\rfloor\big),
\end{align}
where $w=\frac{\epsilon}{\sqrt{d}}$ is the width of the $\eps$-box. The number of bits to store $\id(p_i)$ is $d\cdot \log(L/w)=O(d\cdot \log(L))$.
Since all the points in an $\epsilon$-box have the same $\id$, we also use this $g(\id(p))$ to denote this $\epsilon$-box containing $p$.

For the ease of our analysis, we define the neighbors of a hypergrid.
\begin{definition}\label{defn:kneighbor}
  Let $\epsilon\in \mathbb{R}$. Let $g_1, g_2$ be two $\eps$-boxes in a hypergrid where $\id(g_1) = (x_1,\dots,x_d)$ and $\id(g_2) = (x'_1,\dots,x'_d)$. We say that $g_1$ and $g_2$ are each other's \emph{$\epsilon$-neighbor} if 
\begin{align}
    \sqrt{\sum_{i=1}^d \|x_i-x'_i\|^2} \leq \epsilon
\end{align}
\end{definition}

Note that the number of $\epsilon$-neighbors of a $\epsilon$-box is at most $(2\sqrt{d}+1)^d$. We also have the following observation:
\begin{observation}\label{obs:2step}
Let $p_1,p_2\in [0,L]^d$ be any two distinct points.  
\begin{itemize}
    \item If $p_1$ and $p_2$ are in the same $\epsilon$-box, then $\Delta(p_1, p_2) \leq \epsilon$.
    \item If $\Delta(p_1, p_2)\leq \epsilon$, then $g(\id(p_1))$ must be an $\epsilon$-neighbor of $g(\id(p_2))$. 
\end{itemize} 
\end{observation}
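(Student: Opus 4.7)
The plan is to prove the two bullets by direct $\ell_2$ arithmetic on the coordinates, using only that each $\eps$-box is an axis-aligned cube of side length $w := \eps/\sqrt{d}$.

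For the first bullet, suppose $p_1,p_2 \in [a_1,b_1)\times\cdots\times[a_d,b_d)$ with $b_j - a_j = w$ for all $j$. Then for every coordinate $j$, both $p_1(j)$ and $p_2(j)$ lie in the interval $[a_j, a_j + w)$, so $|p_1(j)-p_2(j)| < w$. Summing squares over $j$,
\[
\|p_1 - p_2\|^2 \;=\; \sum_{j=1}^{d} \bigl(p_1(j)-p_2(j)\bigr)^2 \;<\; d\cdot w^2 \;=\; d\cdot \frac{\eps^2}{d} \;=\; \eps^2,
\]
which gives $\Delta(p_1,p_2) \le \eps$ as claimed.

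For the second bullet, set $x_j := \lfloor p_1(j)/w\rfloor$ and $x'_j := \lfloor p_2(j)/w\rfloor$, so that $\id(p_1)=(x_1,\dots,x_d)$ and $\id(p_2)=(x'_1,\dots,x'_d)$. Because $|\lfloor u\rfloor - \lfloor v\rfloor| \le |u-v| + 1$ for real $u,v$, each coordinate satisfies $|x_j - x'_j| \le |p_1(j)-p_2(j)|/w + 1$. Combining with the hypothesis $\|p_1-p_2\| \le \eps$ (which forces $|p_1(j) - p_2(j)| \le \eps$) and the identity $\eps/w = \sqrt{d}$, this bounds the per-coordinate index difference by $\sqrt{d}+1$, and bounding $\sqrt{\sum_j(x_j-x'_j)^2}$ coordinatewise then yields that $g(\id(p_1))$ is an $\eps$-neighbor of $g(\id(p_2))$ in the sense of \cref{defn:kneighbor} (giving the $(2\sqrt{d}+1)^d$ neighbor count remarked after the definition).

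There is no real obstacle in either part; the only subtlety is the $+1$ slack from the floor function in the index map, which one must carry through to make sure the $\ell_2$ bound in index space is tight enough to fit inside the $\eps$-neighbor definition. Accordingly, I would state the per-coordinate floor estimate as a short lemma and then apply it coordinatewise, keeping the final step purely a Cauchy--Schwarz/quadratic summation.
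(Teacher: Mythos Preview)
The paper states this as an observation without proof, so there is no paper argument to compare against; I assess yours on its own.

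Your first bullet is correct as written.

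Your second bullet has a real gap at the last step. From the per-coordinate estimate $|x_j - x'_j| \le \sqrt{d}+1$ you say that ``bounding $\sqrt{\sum_j(x_j-x'_j)^2}$ coordinatewise'' yields the $\eps$-neighbor condition of \cref{defn:kneighbor}. But summing $d$ squares each at most $(\sqrt d+1)^2$ only gives
\[
\sqrt{\textstyle\sum_{j=1}^d (x_j-x'_j)^2}\ \le\ \sqrt{d}\,(\sqrt{d}+1)\ =\ d+\sqrt{d},
\]
which is not $\le \eps$ in general, so the ``Cauchy--Schwarz/quadratic summation'' you plan does not close the argument. Worse, the literal inequality in \cref{defn:kneighbor} compares an integer-valued index-space norm with the length $\eps$: for any $\eps<1$, two points at distance $\eps$ in adjacent boxes have index difference $1>\eps$, so the condition as written cannot be what is meant, and you should not try to hit it exactly.

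What your floor estimate \emph{does} deliver is the operationally relevant fact. Since $|p_1(j)-p_2(j)|\le \|p_1-p_2\|\le\eps$ gives $|p_1(j)/w-p_2(j)/w|\le \sqrt d$, and $|\lfloor u\rfloor-\lfloor v\rfloor|<|u-v|+1$ with an integer left side, you get $|x_j-x'_j|\le \lceil\sqrt d\rceil$ for every $j$. That coordinatewise bound is exactly what produces the $(2\sqrt d+1)^d$ neighbor count quoted after \cref{defn:kneighbor}, and it is all the algorithm ever uses. So drop the attempt to land on the $\ell_2$ inequality in the definition, state the conclusion as the $\ell_\infty$ bound $|x_j-x'_j|\le \lceil\sqrt d\rceil$, and remark that this recovers the neighbor count; that is the content the rest of the paper relies on.
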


To solve $\eCP$ with quantum walk, we need data structures to keep track of the pairs that have distance at most $\epsilon$. The desired data structure should have size $\tilde{O}(n^{2/3})$, insertion/deletion time $O(\log n)$, and one should be able to check whether there exist pairs of distance at most $\epsilon$ in time $O(\log n)$. In addition, as pointed out in~\cite{ambainis07}, the data structure should have the following two properties:
\begin{itemize}
  \item the data structure should have the bounded \emph{worst-case} performance rather than \emph{average-case} performance;
  \item the representation of the data structure should be history-independent, i.e., the data is uniquely represented regardless of the order of insertions and deletions.
\end{itemize}

We need the first property since the data structure may take too long for some operations, and this is not acceptable. The second property is required because, otherwise, the interference of quantum states would be messed up. In~\cite{ambainis07}, a hash table and a skip list is used to for solving the element distinctness problem using quantum walks. In~\cite{BJLM13}, a simpler data structure, namely, a radix tree, is used to achieve the same performance. More details of using a radix tree to solve the element distinctness can be found in~\cite{Jeffery2014}. Similar to the quantum data structure model in~\cite{ambainis07,BJLM13,Jeffery2014}, we need the \emph{quantum random access gate} to efficiently access data from a quantum memory, whose operation is defined as:
\begin{align}
    \label{eq:qram-gate}
    \ket{i, b, z_1,\dots, z_m} \mapsto \ket{i, z_i, z_1, \ldots, z_{i-1}, b, z_{i+1}, z_m},
\end{align}
where $\ket{z_1,\dots, z_m}$ is some data in a quantum memory with $m$ qubits. We assume this operation takes $O(\log m)$ time.

In the remainder of this section, we present two quantum algorithms for solving $\eCP$. The data structures of both versions are based on the \textit{augmented radix tree}, which we discuss in detail in the following subsection.

\subsection{Radix tree for at most one solution}
\label{sec:radixtree}
The purpose of the augmented radix tree is to quickly locate the points in an $\epsilon$-box given its id. An ordinary radix tree is a binary tree that organizes a set of keys which are represented as binary strings. Each edge is labeled by a substring of a key and each leaf is labeled by a key such that concatenating all the labels on the path from the root to a leaf yields the key for this leaf. In addition, for each internal node, the labels of the two edges connecting to two children start with different bit. Note that in this definition, we implicitly merge all internal nodes that have only one child. The radix tree is uniquely represented for any set of keys. An example of a radix tree is shown as \cref{fig:radix-tree}.

\begin{figure}[t]
  \centering
  \includegraphics[width=0.3\textwidth]{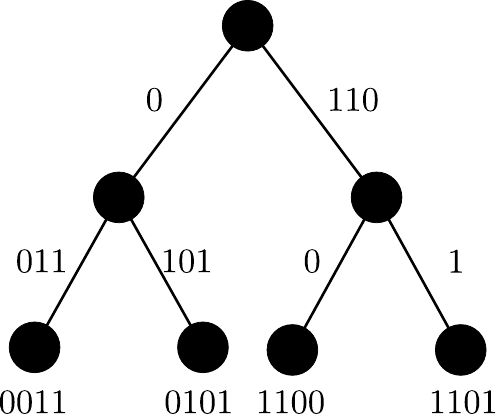}
  \caption{The uniquely represented radix tree that stores the keys $\{0011, 0101, 1100, 1101\}$.} \label{fig:radix-tree}
\end{figure}

Our basic radix tree is essentially the one in~\cite{BJLM13, Jeffery2014} with modification on the nodes' internal structure. We highlight the extra information stored in the radix tree.
First we use a \emph{local counter} to store the number of points in this $\epsilon$-box; second, we use a \emph{flag} in each leaf node to indicate whether there is a point in this $\epsilon$-box that is in some pair with distance at most $\epsilon$. The flag bit in an internal node is the $\mathrm{OR}$ of the ones in its children. The local counter in each internal node is the sum of the local counters in its children. We also store at most two points that are in the $\epsilon$-box corresponding to this node. More precisely, let $S$ be a subset of indices of the input points. We use $\tau(S)$ to denote the radix tree associated with $S$.
Then, $\tau(S)$ consists of at most $r\ceil{\log r}$ nodes. Each node consists of the following registers:
\begin{align}
  \mathcal{D} \times \mathcal{M}_1 \times \mathcal{M}_2 \times \mathcal{M}_3 \times \mathcal{C} \times \mathcal{F} \times \mathcal{P}_1 \times \mathcal{P}_2,
\end{align}
where $\mathcal{D}$ stores the id of an $\epsilon$-box for a leaf (and a substring of an id for an internal node) using $O(d\log(L/\epsilon))$ bits. $\mathcal{M}_1, \mathcal{M}_2$, and $\mathcal{M}_3$ use $O(\log n)$ bits to store the pointers to its parent, left child, and right child, respectively as well as the labels of the three edges connecting them to this node, $O(\log n)$ bits to store the labels of the three edges incident to it. $\mathcal{C}$ uses $O(\log n)$ bits to store the local counter. $\mathcal{F}$ stores the flag bit. $\mathcal{P}_1$ and $\mathcal{P}_2$ stores the coordinates of at most two points in this $\epsilon$-box, which takes $O(d\log L)$ bits. The two points are stored in ascending order of their indices.

We need to pay attention to the layout of $\tau(S)$ in memory. We use three times more bits than needed to store $\tau(S)$, this will ensure that there are always more than $1/3$ of the bits that are free. We divide the memory into cells where each cell is large enough to store one leaf node of $\tau(S)$. Besides $\tau(S)$, we also store a bitmap $\mathcal{B}$, which takes $O(\log n)$ bits to encode the current free cells (with ``1'' indicating occupied and ``0'' indicating free). To make the radix tree history-independent, we use a quantum state which is the uniform superposition of basis states $\ket{\tau(S), B}$ for all possible valid layout of $\tau(S)$ and it corresponds to the bitmap $\mathcal{B}$.

Insertion and deletion from $\tau(S)$ takes $O(\log n)$ time. Checking the presence of an $\epsilon$-close pair takes constant time --- we just need to read the flag bit in the root. Preparing the uniform superposition of all $i \in S$ can be done in $O(\log n)$ time by performing a controlled-rotation on each level of the radix tree where the angles are determined by the local counters in the two children of a node.

In the following subsections, we present the two versions of our algorithms. The first version invokes the quantum walk framework only once and its data structure maintains the existence of an $\epsilon$-close pair. The second version uses a much simpler data structure, but it is only capable of handling $\eCP$ with a unique solution. Hence it requires invoking the quantum walk framework multiple times to solve the general $\eCP$. These two quantum algorithms have almost the same time complexity. 

\subsection{Single-shot quantum walk with complicated data structure}
\label{sec:one-shot}
To handle multiple solutions, our data structure is a composition of an augmented radix tree, a hash table, and a skip list. We give a high-level overview of our data structure as follows. Recall that by the discretization of the space into $\epsilon$-boxes,  it is possible that a pair of points in different $\epsilon$-boxes have distance at most $\epsilon$, but one only needs to check $(2\sqrt{d}+1)^d$ $\epsilon$-neighbors to detect such a case. We maintain a list of points for each nonempty $\epsilon$-box in an efficient way. A hash table is used to store the tuple $(i, p_i)$ which is used to quickly find the point $p_i$, given its index $i$. The points are also stored in a skip list for each nonempty $\epsilon$-box, ordered by its index $i$, which allows for quick insertion and deletion of points. Each $\epsilon$-box is encoded into a unique key, and a radix tree is used to store such key-value pairs, where the value is associated with a skip list. The flag bits in this radix tree maintain the presence of an $\epsilon$-close pair.

In the following, we present the details of the data structure and show it has all the desired properties.

\paragraph{Hash table.}
The hash table we use is almost the same as the one used in~\cite{ambainis07}, except that we do not store the $\floor{\log r}$ counters in each bucket to facilitate the diffusion operator (which is handled easily here in the quantum walk on a Johnson graph). Our hash table has $r$ buckets, where each bucket contains $\ceil{\log n}$ entries. We use a fixed hash function $h(i) = \floor{ir/n}+1$ to hash $\{1, \ldots, n\}$ to $\{1, \ldots, r\}$. That is, for $j\in [r]$, the $j$-th bucket contains the entries for $(i, p_i)$ in ascending order of $i$, where $i \in S$ and $h(i) = j$. 

The entry for $(i, p_i)$ contains the tuple $(i, p_i)$ and $\ceil{\log n} + 1$ pointers to other entries. These pointers are used in the skip list which we will describe below. The memory size of each entry is hence $O(\log^2n + d\log L)$ and there are $O(r\log n)$ entries. Therefore, the hash table uses $O(rd\log^3(n + dL))$ qubits. 

It is possible that more than $\ceil{\log n}$ points are hashed into the same bucket. However, as shown in~\cite{ambainis07}, this probability is small.

\paragraph{Skip list.}
The skip list we use closely follows that in~\cite{ambainis07}, except that the elements $p_i$ in our skip list is ordered by its index $i$. We construct a skip list for each $\epsilon$-box containing at least one point to store the points in it. For each $i \in S$, $p_i$ belongs to exactly one skip list. 
Also, for $i \in S$, we randomly assign a level $\ell_i \in [0, \ldots, \ell_{\max}]$ where $\ell_{\max} = \ceil{\log n}$. The skip list associated with a $\epsilon$-box has $\ell_{\max}+1$ lists, where the level-$\ell$ list consists of all $i \in S$ such that $\ell_i \geq \ell$ and $p_i$ is in this $\epsilon$-box. Hence, the level-0 list consists of all $i \in S$ for $p_i$ in this $\epsilon$-box. Each element of the level-$\ell$ list has a specific pointer to the next element in this level, or to 0 if there is no next element. Each skip list contains a start entry that does not contain any $(i, p_i)$ information but $\ell_{\max}+1$ pointers to the first element of the each level. This start entry is stored in a leaf node of the augmented radix tree (which we will describe below) corresponding to this $\epsilon$-box. In each skip list, we do not allocate memory for each node. Instead, each pointer is pointing to an entry of the hash table. The pointers are stored in the hash table (for the internal entries of each level) and in the radix tree (for the start entry). An example of a skip list is shown in \cref{fig:skip-list}.

\begin{figure}[t]
  \centering
  \includegraphics[width=0.6\textwidth]{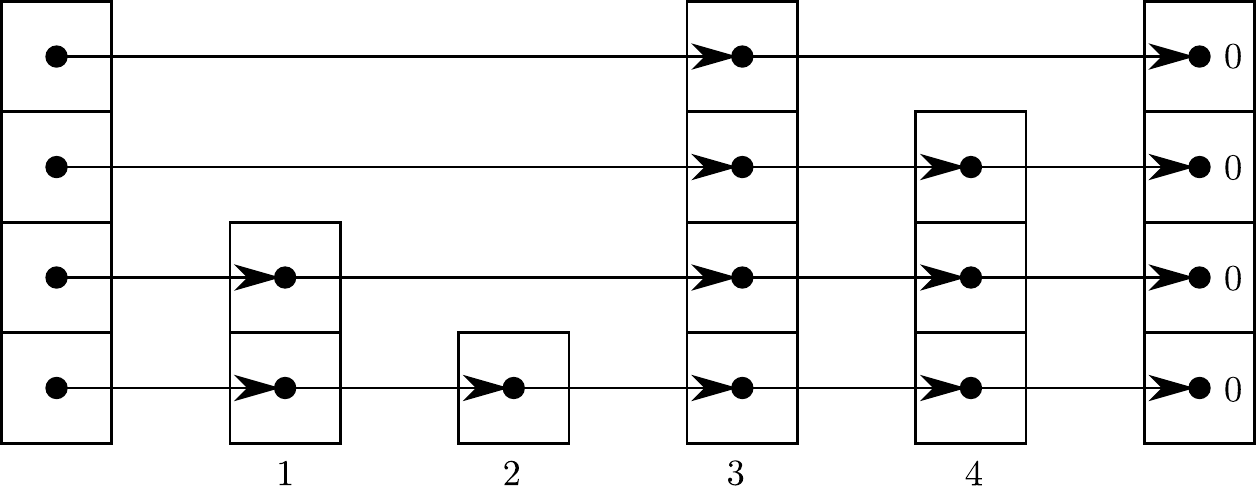}
  \caption{An example of a skip list that stores $\{1, 2, 3, 4\}$.} \label{fig:skip-list}
\end{figure}

Given $i\in S$, we can search for $p_i$ as follows. We start from the start entry of the level-$\ell_{\max}$ list and traverse each element until we find the last element $j_{\ell_{\max}}$ such that $j_{\ell_{\max}} < i$. Repeat this for levels $\ell_{\ell_{\max}-1}, \ldots, \ell_0$ and at each level start from the element that ended the previous level. At level-$0$, we obtain the element $j_0$. Then, the next element of $j_0$ is where $p_i$ should be located (if it is stored in this skip list) or be inserted.

Each $i \in S$ is randomly assigned a level $\ell_i$ at the beginning of computation that does not change during the computation. More specifically, $\ell_i = \ell$ with probability $1/2^{\ell+1}$ for $\ell < \ell_{\max}$ and with probability $1/2^{\ell_{\max}}$ for $\ell = \ell_{\max}$. This can be achieved using $\ell_{\max}$ hash functions $h_1, \ldots, h_{\ell_{\max}} : [n] \rightarrow \{0, 1\}$. In this way, each $i \in [n]$ has level $\ell<\ell_{\max}$ if $h_1(i) = \cdots = h_\ell(i) = 1$ but $h_{\ell+1}(i) = 0$; and it has level $\ell_{\max}$ if $h(i) = \cdots = h_{\ell_{\max}}(i) = 1$. In this quantum algorithm, we use an extra register to hold the state $\ket{h_1,\ldots,h_{\ell_{\max}}}$ which is initialized to a uniform superposition of all possible such functions from a $d$-wise independent family of hash functions (see~\cite[Theorem 1]{ambainis07}) for $d = \ceil{4 \log n + 1}$. During the execution of the quantum algorithm, a hash function from the hashing family is chosen depending on the state in this register.

At first glance, the skip list has the same role as the hash table -- finding $p_i$ given index $i$. 
However, they have very different purposes in our algorithm. Recall that each nonempty $\epsilon$-box is associated with a skip list, which is used to quickly insert and delete a point in this $\epsilon$-box. The number of points in this $\epsilon$-box can be as small as one and as large as $r$ (in the extreme case where all the points are in the same $\epsilon$-box).
Hence, we cannot afford to have a fixed length data structure (such as a hash table or a sorted array) to store these points. In addition, to support quick insertion and deletion, a skip list is a reasonable choice (against an ordinary list). The purpose of the hash table can be viewed as a uniquely represented memory storing all the $r$ points that can be referred to by the skip lists.

\paragraph{Augmented radix tree.}
We augment the radix tree described in \cref{sec:radixtree} to handle multiple solution. In this augmented radix tree, we do not need the registers $\mathcal{P}_1$ and $\mathcal{P}_2$. Instead, we use $\ceil{\log n}$ pointers $\mathcal{L}_1, \ldots, \mathcal{L}_{\ceil{\log n}}$ as the start entry of a skip list. These pointers uses $O(\log^2 n)$ bits. In addition, we use an \emph{external counter} in the leaf nodes to record whether there is a point in other $\epsilon$-boxes that is at most $\epsilon$-away from a point in this $\epsilon$-box, which uses $O(\log n)$ bits. More formally, let $\tau'(S)$ be the augmented radix tree associated with $S$. 
Each node of $\tau'(S)$ consists of the following registers
\begin{align}
  \mathcal{D} \times \mathcal{M}_1 \times \mathcal{M}_2 \times \mathcal{M}_3 \times \mathcal{E} \times \mathcal{C} \times \mathcal{F} \times \mathcal{E} \times \mathcal{L}_1 \times \cdots \times \mathcal{L}_{\ceil{\log n}}.
\end{align}

Next, we present how to perform the required operations on $S$ with our data structure.

\paragraph{Checking for $\epsilon$-close pairs.}To check the existence of an $\epsilon$-close pair, we just read the flag in the root of the radix tree. If the flag is set, there is at most one $\epsilon$-close pair in $S$, and no such pairs otherwise. This operation takes $O(1)$ time.

\paragraph{Insertion.} 
Given $(i, p_i)$, we perform the insertion with the following steps:
\begin{enumerate}
  \item Insert this tuple into the hash table.
  \item Compute the id, $\id(p_i)$, of the $\epsilon$-box which $p_i$ belongs to. Denote this $\epsilon$-box by $g(\id(p_i))$.
  \item Using $\id(p_i)$ as the key, check if this key is already in $\tau'(S)$, if so, insert $i$ into the skip list corresponding to $g(\id(p_i))$; otherwise, first create a uniform superposition of the addresses of all free cells into another register, then create a new tree node in the cell determined by this address register and insert it into the tree. The pointers for the start entry of the skip list is initially set to 0. Insert $i$ into this skip list. Let $\tau'(S, g(\id(p_i))$ denote the leaf node in $\tau'(S)$ corresponding to $g(\id(p_i))$.
  \item Increase the local counter $\mathcal{C}$ in $\tau'(S, g(\id(p_i)))$ by 1.
  \item Use \cref{proc:update_insertion} to update the external counters $\mathcal{E}$ and flags $\mathcal{F}$ in $\tau'(S, g(\id(p_i)))$ as well as in the leaf nodes corresponding to the neighbor $\epsilon$-boxes of $g(\id(p_i))$.
\end{enumerate}

Note that the first step takes at most $O(\log n)$ time. The second step can be done in $O(d)$ time. In \cref{line:neighbors,line:neighbors2}, the number of $\epsilon$-neighbors to check is at most $(2\sqrt{d}+1)^d$.

\begin{procedure}
  \SetKwInOut{Input}{input}\SetKwInOut{Output}{output}
  \Input{$(i, p_i)$, The leaf node in $\tau'(S)$ corresponding to the $\epsilon$-box $g(\id(p_i))$, denoted by, $\tau'(S, g(\id(p_i)))$.}
  \uIf{the local counter $\mathcal{C}=1$ in $\tau'(S, \id(p_i))$}{
    \For{all $\epsilon$-box $g'$ that is a $\epsilon$-neighbor (see \cref{defn:kneighbor}) of $g(\id(p_i))$ where the local counter $\mathcal{C}=1$ in $\tau'(S, g')$ and the distance between $p_i$ and the point in $g'$ is at most $\epsilon$ \label{line:neighbors}} {
      Increase the external counter $\mathcal{E}$ of $\tau'(S, g')$ by 1\;
      Increase the external counter $\mathcal{E}$ of $\tau'(S, g(\id(p_i)))$ by 1\;
      \If{the external counter $\mathcal{E}$ in $\tau'(S, g')$ was increased from 0 to 1} {
        Set the flag $\mathcal{F}$ in $\tau'(S, g')$ \;
        Update the flag $\mathcal{F}$ in the nodes along the path from $\tau'(S, g')$ to the root of $\tau'(S)$ \;
      }
    }
    \If{the external counter $\mathcal{E} > 1$ in $\tau'(S, g(\id(p_i)))$ } {
      Set the flag $\mathcal{F}$ in $\tau'(S, g(\id(p_i)))$ \;
      Update the flag $\mathcal{F}$ in the nodes along the path from $\tau'(S, \id(p_i))$ to the root of $\tau'(S)$ \;
    }
  }
  \ElseIf{the local counter $\mathcal{C}=2$ in $\tau'(S, \id(p_i))$}{
    Set the flag $\mathcal{F}$ in $\tau'(S, g(\id(p_i)))$ \;
    Update the flag $\mathcal{F}$ in the nodes along the path from $\tau'(S, g(\id(p_i)))$ to the root of $\tau'(S)$ \;
    Set the external counter $\mathcal{E}=0$ in $\tau'(S, \id(p_i))$ \;
    Let $i'$ be the other index (than $i$) stored in the skip list corresponding to $g(\id(p_i))$ 
    \;
    \For{all $\epsilon$-box $g'$ that is a $\epsilon$-neighbor of $g(\id(p_i))$ where the local counter $\mathcal{C}=1$ in $\tau'(S, g')$ and the distance between $p_{i'}$ and the point in $g'$ is at most $\epsilon$ \label{line:neighbors2}} {
      Decrease the external counter of $\tau'(S, g')$ by 1\;
      \If{the external counter $\mathcal{E}$ in $\tau'(S, g')$ was decreased from 1 to 0} {
        Unset the flag $\mathcal{F}$ in $\tau'(S, g')$ \;
        Update the flag $\mathcal{F}$ in the nodes along the path from $\tau'(S, g')$ to the root of $\tau'(S)$ \;
      }
    }
  }
  \caption{Updating nodes for insertion.()}
  \label[procedure]{proc:update_insertion}
\end{procedure}

To obtain a uniform superposition of the addresses of all free cells, we first create a uniform superposition of all possible addresses to access to the bitmap $\ket{\mathcal{B}}$. We also use an auxiliary register that is initialized to $\ket{0}$. Then, the quantum random access gate defined in \cref{eq:qram-gate} is applied on the register holding the uniform superposition of all addresses, the auxiliary register, and the bitmap register, which is effectively a SWAP operation on the second register and the corresponding bit in $\ket{\mathcal{B}}$. The auxiliary register remains $\ket{0}$ if and only if the address in the first register is free. Since the size of memory space is chosen so that the probability of seeing a free cell is at least $1/3$ (and we know exactly this probability based on how many cells have already been used), we can add an extra register and apply a one-qubit rotation to make the amplitude of the second register being $\ket{0}$ exactly $1/2$. Hence, using \emph{one} iteration of the oblivious amplitude amplification (which is a generalized version of Grover's search algorithm. See~\cite{BCCKS2017} and~\cite{MW2005}) with the second register being the indicator, we obtain the uniform superposition of the addresses of all free cells. This cost if $O(\log n)$.

In~\cite{ambainis07}, it was shown that with high probability, insertion into the skip list can be done in $O(d+\log^4(n+L))$ time. Hence, with high probability, the insertion costs $O(d+\log^4(n+L) + d(2\sqrt{d}+1)^d)$ time, where $O(d(2\sqrt{d}+1)^d)$ is the time for checking neighbors. 
To further reduce the running time, we can just stop the skip list's insertion and deletion procedures after $O(d+\log^4(n+L))$ time, which only causes little error (see \cref{lemma:dt-error}).

\paragraph{Deletion.}
Given $(i, p_i)$, we perform the following steps to delete this tuple from our data structure.
\begin{enumerate}
  \item Compute the id, $\id(p_i)$, of the $\epsilon$-box which $p_i$ belongs to, and denote this $\epsilon$-box by $g(\id(p_i))$.
  \item Using $\id(p_i)$ as the key, we find the leaf node in $\tau'(S)$ that is corresponding to $g(\id(p_i))$.
  \item Remove $i$ from the skip list, and decrease the local counter $\mathcal{C}$ in $\tau'(S, g(\id(p_i)))$ by 1.
  \item Use \cref{proc:update_deletion} to update the external counters $\mathcal{E}$ and flags $\mathcal{F}$ in $\tau'(S, g(\id(p_i)))$ as well as in leaf nodes corresponding to the neighbor $\epsilon$-boxes of $g(\id(p_i))$.
  \item If the local counter $\mathcal{C}=0$ in this leaf node, remove $\tau'(S, g(\id(p_i)))$ from $\tau'(S)$, and update the bitmap $\mathcal{B}$ in $\tau'(S)$ that keeps track of all free memory cells. 
  \item Remove $(i, p_i)$ from the hash table.
\end{enumerate}

Note that the first step can be done in $O(d)$ time. The second step can be done in $O(\log n)$ time. \cref{proc:update_deletion} has the same time complexity with \cref{proc:update_insertion}. Hence, the cost for the deletion procedure is the same as that for insertion.

\begin{procedure}
  \SetKwInOut{Input}{input}\SetKwInOut{Output}{output}
  \Input{$(i, p_i)$, The leaf node in $\tau'(S)$ corresponding to the $\epsilon$-box $g(\id(p_i))$, denoted by, $\tau'(S, g(\id(p_i)))$.}
  \uIf{the local counter $\mathcal{C}=0$ in $\tau'(S, \id(p_i))$}{
    Unset the flag $\mathcal{F}$ in $\tau'(S, g(\id(p_i)))$ \;
    Update the flag $\mathcal{F}$ in the nodes along the path from $\tau'(S, \id(p_i))$ to the root of $\tau'(S)$ \;
    Set the external counter $\mathcal{E} = 0$ in $\tau'(S, \id(p_i))$ \;
    \For{all $\epsilon$-box $g'$ that is a $\epsilon$-neighbor (see \cref{defn:kneighbor}) of $g(\id(p_i))$ where the local counter $\mathcal{C}=1$ in $\tau'(S, g')$ and the distance between $p_i$ and the point in $g'$ is at most $\epsilon$ \label{line:neighbors3}} {
      Decrease the external counter $\mathcal{E}$ of $\tau'(S, g')$ by 1\;
      \If{the external counter $\mathcal{E}$ in $\tau'(S, g')$ was decreased from 1 to 0} {
        Unset the flag $\mathcal{F}$ in $\tau'(S, g')$ \;
        Update the flag $\mathcal{F}$ in the nodes along the path from $\tau'(S, g')$ to the root of $\tau'(S)$ \;
      }
    }
  }
  \ElseIf{the local counter $\mathcal{C}=1$ in $\tau'(S, \id(p_i))$}{
    Let $i'$ be the only index stored in the skip list corresponding to $g(\id(p_i))$ \;
    \For{all $\epsilon$-box $g'$ that is a $\epsilon$-neighbor of $g(\id(p_i))$ where the local counter $\mathcal{C}=1$ in $\tau'(S, g')$ and the distance between $p_{i'}$ and the point in $g'$ is at most $\epsilon$ \label{line:neighbors4}} {
      Increase the external counter $\mathcal{E}$ of $\tau'(S, g')$ by 1\;
      Increase the external counter $\mathcal{E}$ of $\tau'(S, g(\id(p_i)))$ by 1\;
      \If{the external counter $\mathcal{E}$ in $\tau'(S, g')$ was increased from 0 to 1} {
        Set the flag $\mathcal{F}$ in $\tau'(S, g')$ \;
        Update the flag $\mathcal{F}$ in the nodes along the path from $\tau'(S, g')$ to the root of $\tau'(S)$ \;
      }
    }
    \If{the external counter $\mathcal{E}=0$ in $\tau'(S, g(\id(p_i)))$} {
      Unset the flag $\mathcal{F}$ in $\tau'(S, g(\id(p_i)))$ \;
      Update the flag $\mathcal{F}$ in the nodes along the path from $\tau'(S, \id(p_i))$ to the root of $\tau'(S)$ \;
    }
  }
  \caption{Updating nodes for deletion.()}
  \label[procedure]{proc:update_deletion}
\end{procedure}

\paragraph{Finding a $\epsilon$-close pair.} We just read the flag in the root of the radix tree and then go to a leaf whose flag is $1$. Check the local counter $\mathcal{C}$ of the node. if it is at least $2$, output the first two elements in skip list. Otherwise, we find the $\epsilon$-neighbor of the current node whose $C=1$ and then output the points in that $\epsilon$-neighbor and the current node.

\paragraph{Uniqueness.}
The uniqueness of our data structure follows from the analysis of~\cite{ambainis07,BJLM13,Jeffery2014}. More specifically, the hash table is always stored in the same way, as each $i \in S$ is stored in the same bucket for the fixed hash function and in each bucket, elements are stored in ascending order of $i$. The skip list is uniquely stored once the hash functions $h_1, \ldots, h_{\ell_{\max}}$ is determined. The shape of the radix tree is unique for $S$, but each node can be stored in different locations in memory. We use a uniform superposition of all possible memory organizations (by keeping track of the bitmap for free cells) to keep the quantum state uniquely determined by $S$.

\paragraph{Correctness.}
In the following, we argue that our data structure has the desired properties. First, we prove the correctness.

\begin{lemma}
  The flag bit in the root of $\tau'(S)$ is set if and only if there exist  distinct $i, j \in S$ such that $|p_i - p_j| \leq \epsilon$.
\end{lemma}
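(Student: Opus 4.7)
The plan is to establish the lemma in two directions by unwinding the recursive definition of the flag bit and then verifying a leaf-level invariant maintained by \cref{proc:update_insertion,proc:update_deletion}. Since the flag at any internal node is the OR of the flags of its children, the root flag is set iff the flag of some leaf is set. So it suffices to prove that for each leaf $\tau'(S, g)$ corresponding to a nonempty $\epsilon$-box $g$, the flag is set iff $g$ contains a point $p_i$ that belongs to some $\epsilon$-close pair $(p_i, p_j)$ with $j \in S \setminus \{i\}$. Call this the \emph{leaf invariant}.

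First I would show that the leaf invariant implies the lemma, using \cref{obs:2step}: if distinct $i, j \in S$ satisfy $\|p_i - p_j\| \leq \epsilon$, then $p_i$ and $p_j$ lie either in the same $\epsilon$-box $g$ or in $\epsilon$-neighbor boxes, so at least one of the corresponding leaves has its flag set by the invariant, and this propagates up to the root. Conversely, a set root flag forces a set leaf flag, which by the invariant yields a pair witnessing $\|p_i - p_j\| \leq \epsilon$.

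Next I would prove the leaf invariant by induction on the sequence of insertions and deletions starting from the empty set (for which the invariant holds vacuously since there are no leaves). The two claims to carry through the induction are: (a) the local counter $\mathcal{C}$ of a leaf equals the number of $i \in S$ with $p_i$ in the corresponding $\epsilon$-box, and (b) the external counter $\mathcal{E}$ of a leaf equals the number of distinct $p_j$ lying in $\epsilon$-neighbor boxes with $\|p_i - p_j\| \leq \epsilon$, where $p_i$ is the unique point in this box when $\mathcal{C} = 1$ (and $\mathcal{E}$ is conventionally $0$ when $\mathcal{C} \neq 1$). Given (a) and (b), the leaf's flag should be set iff $\mathcal{C} \geq 2$ (two points in the same box have distance $\leq \epsilon$ by \cref{obs:2step}) or $\mathcal{E} \geq 1$; this is precisely the condition under which \cref{proc:update_insertion,proc:update_deletion} set the flag.

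The main obstacle is a careful case analysis verifying that \cref{proc:update_insertion,proc:update_deletion} restore (a), (b), and the flag condition after each operation. The cases split on the value of $\mathcal{C}$ before and after the update: when $\mathcal{C}$ transitions $0 \to 1$ or $1 \to 0$ in some leaf, both that leaf and all its $\epsilon$-neighbor leaves must have their $\mathcal{E}$ adjusted, whereas when $\mathcal{C}$ transitions $1 \to 2$ or $2 \to 1$ the neighbors must \emph{undo} or \emph{redo} contributions previously made against the other resident point. The update procedures are specifically designed to handle these transitions: upon insertion they walk over the $\leq (2\sqrt d + 1)^d$ neighbor boxes of $g(\id(p_i))$ and increment $\mathcal{E}$ for each $p_j$ with $\|p_i - p_j\| \leq \epsilon$, and upon deletion they symmetrically decrement. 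Finally, I would note that the flag updates along the root-to-leaf path ensure the OR structure of internal flags is preserved, closing the induction.
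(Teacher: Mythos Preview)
Your proposal is correct and takes essentially the same approach as the paper's own proof: both reduce the claim to two leaf-level invariants—the flag of a leaf is set iff $\mathcal{C}\ge 2$ or $\mathcal{E}\ge 1$, together with a characterization of when $\mathcal{E}$ is nonzero—and then verify by a case analysis on insertions and deletions that \cref{proc:update_insertion,proc:update_deletion} preserve them. Your write-up is actually more explicit than the paper's about the OR-propagation to the root and the induction structure; one small refinement is that in your invariant~(b) the count should be restricted to neighbor boxes that themselves have $\mathcal{C}=1$ (the procedures only exchange external-counter contributions between singleton boxes), but this does not affect the argument since any neighbor with $\mathcal{C}\ge 2$ already has its own flag set.
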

\begin{proof}
  We show that after each insertion and deletion, the data structure maintains the following conditions, and then lemma follows.
  \begin{enumerate}
    \item The flag bit of each leaf node of $\tau'(S)$ is set if and only if either its local counter is at least 2, or its external counter is at least 1.
    \item The external counter of a leaf node $\tau'(S, g')$ is nonzero if and only if $g'$ contains only one point $p$, and there exists another $p'$ in another $\epsilon$-box $g''$ such that $|p - p'| \leq \epsilon$.
  \end{enumerate}
  It is easy to check that the first condition is maintained for each insertion and deletion. We show the second condition holds during insertions and deletion. For insertions, we consider the first case where a point $p$ is just inserted into the $\epsilon$-grid $g'$, and $p$ is its only point. The first for-loop in \cref{proc:update_insertion} updates other $\epsilon$-boxes that have only one point to maintain the second condition. We consider the second case where $g'$ already contains $p'$ and $p$ is just inserted, then the external counter in $g'$ should be 0, and the second for-loop in \cref{proc:update_insertion} updates other $\epsilon$-boxes that have only one point using the information of $p'$. This maintains the second condition. For deletions, there are also two cases. First, consider $p$ is the only one point in $g'$ and it is just deleted. We use the first for-loop in \cref{proc:update_deletion} to update the $\epsilon$-boxes that has only one point using the information of $p$ to maintain the second condition. Second, there is another point $p'$ left in $g'$ after deleting $p$. In this case, we start to check the external counter in $g'$. We use the second for-loop in \cref{proc:update_deletion} to check other $\epsilon$-boxes that have only one point using the information of $p'$ and update the corresponding external counter to maintain the second condition.
\end{proof}

\paragraph{Imperfection of the data structures and error analysis.}
Our data structure is not perfect. As indicated by Ambainis~\cite{ambainis07}, there are two possibilities that it will fail. First, the hash table might overflow. Second, it might take more that $\ceil{\log n}$ time to search in a skip list. To resolve the first problem, we fix the number of entries in each bucket to be $\ceil{\log n}$ and treat any overflow as a failure. For the second problem, we stop the subroutine for accessing the skip list after $O(\log n)$ steps, and it causes an error in some cases. The original error analysis can be directly used in our case, as our hash table doesn't change the structure or the hash function, and our skip lists can be viewed as breaking the skip list in~\cite{ambainis07} into several pieces (one for each nonempty $\epsilon$-box), and each insertion/deletion only involves one of them. Hence, the discussion in~\cite[Section 6]{ambainis07} can be directly adapted to our case: 
\begin{lemma}[Adapted from \cite{ambainis07}]
  \label{lemma:dt-error}
  Let $\ket{\psi}$ be the final state of our algorithm (with imperfect data structures) and $\ket{\psi'}$ be the final state with the perfect data structure. Then $\|\ket{\psi} - \ket{\psi'}\| \leq O(1/\sqrt{n})$.
\end{lemma}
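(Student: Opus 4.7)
The plan is to adapt the error analysis of Section~6 in \cite{ambainis07} to our composite data structure via a hybrid argument. At a high level, I would (i) enumerate the failure modes of the imperfect data structure, (ii) bound the probability of each per data-structure operation, and (iii) aggregate the errors over all $T = O(n^{2/3}\log n)$ steps of the quantum walk using a telescoping triangle inequality.

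The failure modes are exactly those identified by Ambainis. First, a hash bucket may overflow, since each bucket is truncated at $\lceil\log n\rceil$ entries. Second, a skip-list access may take more than its allotted $O(\log n)$ steps and be cut off mid-operation. Our hash table uses the identical fixed hash function $h(i) = \lfloor ir/n\rfloor + 1$ as in~\cite{ambainis07}, so the overflow analysis transfers verbatim under the $d$-wise independent hashing family with $d = \lceil 4\log n + 1\rceil$. Our skip lists are partitioned (one per nonempty $\epsilon$-box), but the level assignment $\ell_i$ is drawn from the same $d$-wise independent family $h_1,\ldots,h_{\ell_{\max}}$, and any single insertion or deletion touches exactly one of these lists. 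Hence the per-operation truncation probability is bounded by the same quantity as in the original analysis; call the maximum of the two per-operation failure probabilities $p$, which can be made $\le 1/n^c$ for any constant $c$ by the choice of $d$.

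To aggregate, let $U_t$ and $U'_t$ denote the $t$-th unitary of the ideal and truncated algorithms respectively, and let $\ket{\psi_t}$, $\ket{\psi'_t}$ denote the corresponding states after $t$ steps. Since $U_t$ and $U'_t$ agree except on a subspace whose total probability weight is at most $p$ on any state actually produced by the algorithm, we have $\|(U_t - U'_t)\ket{\psi'_{t-1}}\| \le O(\sqrt{p})$. A triangle inequality then yields
\begin{equation*}
\|\ket{\psi_t} - \ket{\psi'_t}\| \;\le\; \|\ket{\psi_{t-1}} - \ket{\psi'_{t-1}}\| \;+\; O(\sqrt{p}),
\end{equation*}
so summing over $t = 1,\ldots,T$ gives $\|\ket{\psi} - \ket{\psi'}\| \le O(T\sqrt{p})$. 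Taking $c$ large enough (e.g.\ $c = 4$) makes $T\sqrt{p} = O(n^{2/3}\log n \cdot n^{-c/2}) = O(1/\sqrt{n})$, as required.

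The main obstacle is verifying that our augmentations, namely the external counters and the flag-propagation in the radix tree updated by \cref{proc:update_insertion} and \cref{proc:update_deletion}, do not introduce any new failure modes. This requires checking that, conditioned on both the hash-table lookup and the skip-list access succeeding, the entire update is deterministic and runs in time $O((2\sqrt{d}+1)^d \log n) = O(\log n)$ for constant $d$, so the only stochastic failures are those already counted by $p$. Once this is confirmed, the hybrid bound above completes the proof.
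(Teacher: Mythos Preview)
Your proposal is correct and follows essentially the same approach as the paper's own sketch: identify the two failure modes (hash-table overflow and skip-list truncation) inherited from~\cite{ambainis07}, argue that the radix-tree augmentations are deterministic and hence introduce no new stochastic errors, and then invoke Ambainis's error aggregation. In fact you spell out the hybrid/telescoping bound more explicitly than the paper does; the paper's proof is only a sketch that defers the quantitative details entirely to~\cite[Section~6]{ambainis07}.
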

\begin{proof}[Sketch of proof]
There are two places where the data structure may give error: first, the hash table may have overflow, and second, the algorithm cannot find the required element in the skip lists in the desired time. The original proof showed that the probability that any of these errors happens is negligible, and thus the two-norm distance between $\ket{\psi}$ and $\ket{\psi'}$ can be bounded. Here, our data structure combining hash table, skip list, and radix tree, only has errors from hash tables and skip lists. The radix tree which has no error can be viewed as applying additional unitaries on $\ket{\psi}$ and $\ket{\psi'}$, and this does not change the distance between the two states. Since the probability that the errors from hash tables and skip lists happen are negligible by following the same analysis in~\cite{ambainis07}, we can conclude that the two-norm distance between $\ket{\psi}$ and $\ket{\psi'}$ is small.  
\end{proof}

\paragraph{Time complexity for $\eCP$.}
We use the quantum walk framework reviewed in \cref{sec:qw-framework} to solve $\eCP$. We first build the Johnson graph for $\eCP$, which is similar to that for \textsf{ED} in \cref{sec:qw-framework}. The vertices of the Johnson graph are $S \subseteq [n]$ with $|S| = n^{2/3}$ and $S$ is marked if there exist distinct $i, j \in S$ such that $\Delta(p_i, p_j) \leq \epsilon$. 
We use $\ket{S, d(S)}$ to represent the quantum state corresponding to $S$, where $d(S)$ is the data structure of $S$ defined in \cref{sec:radixtree}. 
Consider the three operations:
\begin{itemize}
  \item \textbf{Steup:} with cost $\mathsf{S}$,
  preparing the initial state 
    \begin{align}
      \ket{\pi} = \frac{1}{\sqrt{\binom{n}{n^{2/3}}}}\sum_{S \subseteq [n]: |S|=n^{2/3}}\sqrt{\pi_S}\ket{S, d(S)}.
    \end{align}
  \item \textbf{Update:}  with cost $\mathsf{U}$, applying the transformation 
    \begin{align}
      \ket{S, d(S)}\ket{0} \mapsto \ket{S, d(S)}\sum_{S' \subseteq[n]: |S\cap S'| = n^{2/3}-1}\sqrt{p_{SS'}}\ket{S', d(S')},
    \end{align}
    where $p_{SS'} = \frac{1}{n^{2/3}(n-n^{2/3})}$.
  \item \textbf{Checking:} with cost $\mathsf{C}$, applying the transformation: 
    \begin{align}
      \ket{S, d(S)} \mapsto \begin{cases} -\ket{S, d(S)} \quad &\text{if $S\in M$} \\ \ket{S, d(S)} \quad &\text{otherwise,}\end{cases}
    \end{align}
    where $M$ is the set of marked states.
\end{itemize}

We have the following result.
\begin{theorem}
  \label{thm:ecp1}
  There exists a quantum algorithm that with high probability solves $\eCP$ with time complexity $O(n^{2/3}(d+\log^4(n+L) + d(2\sqrt{d}+1)^d))$.
\end{theorem}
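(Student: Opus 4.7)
The plan is to apply the MNRS quantum walk framework (\cref{lem:framework}) to the Johnson graph already specified in the statement, with vertices being subsets $S \subseteq [n]$ of size $r = n^{2/3}$ and marked vertices being those $S$ containing some distinct $i,j$ with $\Delta(p_i,p_j) \leq \epsilon$. Following the element distinctness analysis reviewed in \cref{sec:qw-framework}, the spectral gap satisfies $\delta \geq 1/r = \Omega(n^{-2/3})$, and whenever $\eCP$ has a solution, the fraction of marked vertices is $\varepsilon \geq \binom{n-2}{r-2}/\binom{n}{r} = \Omega(r^2/n^2) = \Omega(n^{-2/3})$. Thus $1/\sqrt{\varepsilon\delta} = O(n^{2/3})$ and $1/\sqrt{\varepsilon} = O(n^{1/3})$.

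Next I would bound the three costs $\mathsf{S}, \mathsf{U}, \mathsf{C}$ using the data structure developed above. For setup, one prepares the uniform superposition $\ket{\pi}$ over all size-$r$ subsets together with their data structures $d(S)$; this can be done by inserting $r = n^{2/3}$ points into an initially empty structure in the symmetric subspace, giving $\mathsf{S} = O(n^{2/3}(d + \log^4(n+L) + d(2\sqrt{d}+1)^d))$. For the update operation (delete one point of $S$ and insert a new one), the insertion and deletion procedures described above each cost $O(d + \log^4(n+L) + d(2\sqrt{d}+1)^d)$, where the $d(2\sqrt{d}+1)^d$ term comes from scanning $\epsilon$-neighbor boxes in \cref{proc:update_insertion,proc:update_deletion}, the $\log^4(n+L)$ term from the skip-list access bound of~\cite{ambainis07}, and $O(d)$ from computing $\id(p_i)$. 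Hence $\mathsf{U} = O(d + \log^4(n+L) + d(2\sqrt{d}+1)^d)$. For checking, the invariant proved in the preceding lemma ensures that the flag bit at the root of the augmented radix tree is set exactly when $S$ is marked; reading this single bit (and applying a $-1$ phase conditioned on it) costs $\mathsf{C} = O(1)$.

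Substituting these into the MNRS bound gives a running time
\begin{align*}
O\!\left(\mathsf{S} + \tfrac{1}{\sqrt{\varepsilon}}\bigl(\tfrac{1}{\sqrt{\delta}}\mathsf{U} + \mathsf{C}\bigr)\right)
= O\!\left(n^{2/3}(d + \log^4(n+L) + d(2\sqrt{d}+1)^d) + n^{1/3}\cdot n^{1/3}\mathsf{U} + n^{1/3}\right),
\end{align*}
which simplifies to the claimed $O(n^{2/3}(d + \log^4(n+L) + d(2\sqrt{d}+1)^d))$.

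The main subtlety is not the bookkeeping above but ensuring that the quantum walk still behaves correctly despite the imperfect data structure: hash-table overflows and skip-list searches exceeding the $O(\log n)$ cutoff each occur with polynomially small probability per operation, and since the walk invokes $O(n^{2/3})$ updates, one must argue that the accumulated error in the final state is still $o(1)$. I would handle this by appealing to \cref{lemma:dt-error}, which transfers Ambainis's analysis~\cite{ambainis07} to our setting (the radix-tree component is deterministic and thus acts as an extra unitary that preserves two-norm distance), so that the final state is $O(1/\sqrt{n})$-close to the ideal one and the success probability of MNRS is only affected by a negligible amount. Combining this with the standard success-amplification inside the MNRS framework yields the high-probability guarantee.
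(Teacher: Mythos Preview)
Your proposal is correct and follows essentially the same approach as the paper: instantiate the MNRS framework (\cref{lem:framework}) on the Johnson graph with $r=n^{2/3}$, use the same estimates $\varepsilon,\delta=\Omega(n^{-2/3})$, and bound $\mathsf{S},\mathsf{U},\mathsf{C}$ via the insertion/deletion costs of the augmented data structure exactly as the paper does. You even make explicit the appeal to \cref{lemma:dt-error} for the imperfect-data-structure error, which the paper establishes just before the theorem but does not restate inside its proof.
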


\begin{proof}
  The Johnson graph has $\varepsilon \geq 1/n^{2/3}$ and the Markov chain has spectral gap $\delta \geq 1/n^{2/3}$. For the setup operation, $\mathsf{S} = O(n^{2/3}(d+\log^4(n+L) + d(2\sqrt{d}+1)^d))$, since preparing the uniform superposition for all $\ket{S}$ costs $O(\log n)$ Hadamard gates and we need to do $n^{2/3}$ insertions to prepare the data structure. Each insertion costs $O(d+\log^4(n+L) + d(2\sqrt{d}+1)^d)$. For the update operation, we can implement the quantum walk operator as described in~\cite{Jeffery2014}: we use $\ket{S,d(S)}\ket{i,j}$ to represent $\ket{S,d(S)}\ket{S',d(S')}$ where $S'$ is obtained from $S$ by adding index $i$ and removing index $j$. Then the diffusion can be implemented by preparing a uniform superposition of all $i \in S$ and a uniform superposition of all $j \not\in S$, which takes time $O(\log n)$, and the ``SWAP'' operation can be implemented by a unitary that maps $\ket{S, d(S)}\ket{i,j}$ to $\ket{S',d(S')}\ket{j,i}$. In this way, the update operation uses $O(1)$ insertion and deletion to construct $d(S')$ from $d(S)$, and hence $\mathsf{U} = O(d+\log^4(n+L) + d(2\sqrt{d}+1)^d)$. The checking operation can be done in $O(1)$ time with the data structure. Therefore, by \cref{lem:framework}, the time complexity is $O(\mathsf{S} + \frac{1}{\sqrt{\varepsilon}}(\frac{1}{\sqrt{\delta}}\mathsf{U} + \mathsf{C})) = O(n^{2/3}(d+\log^4(n+L) + d(2\sqrt{d}+1)^d))$.
\end{proof}

By \cref{lemma:cp-ecp}, we have the following corollary.
\begin{corollary}
  There exists a quantum algorithm that with high probability solves \textsf{CP} with time complexity $O(n^{2/3}\cdot (d+\log^4(n+L) + d(2\sqrt{d}+1)^d)\cdot (m+\log d))$.
\end{corollary}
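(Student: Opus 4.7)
The plan is to combine the two ingredients stated just above: the binary-search reduction from $\CP$ to $\eCP$ given by \cref{lemma:cp-ecp}, and the quantum walk algorithm for $\eCP$ of \cref{thm:ecp1}. The reduction shows that any oracle $\ora$ for $\eCP$ can be used as a subroutine within an algorithm $A^\ora$ that solves $\CP$ while making $O(m+\log d)$ calls to $\ora$ together with $O(m+\log d)$ additional time for arithmetic on thresholds. Since \cref{thm:ecp1} supplies a quantum algorithm for $\eCP$ with running time $T_{\eCP}=O\bigl(n^{2/3}(d+\log^4(n+L)+d(2\sqrt{d}+1)^d)\bigr)$, substituting this bound for the cost of each oracle call immediately yields the total time
\begin{equation*}
O\bigl(n^{2/3}\cdot (d+\log^4(n+L)+d(2\sqrt{d}+1)^d)\cdot(m+\log d)\bigr),
\end{equation*}
which is exactly the bound claimed in the corollary.

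The only real point to be careful about is success probability amplification. The algorithm of \cref{thm:ecp1} is a bounded-error quantum algorithm, and the binary search in \cref{lemma:cp-ecp} is adaptive: the threshold $\epsilon$ used in round $t$ depends on the outcomes of rounds $1,\dots,t-1$. Hence I would first boost each $\eCP$ call to success probability at least $1-1/\poly(n)$ by $O(\log(m+\log d))=O(\log(m+\log d))$-fold majority repetition, so that a union bound over the $O(m+\log d)$ calls still gives overall constant success. This polylogarithmic boosting factor is absorbed into the stated asymptotic bound (since $d+\log^4(n+L)+d(2\sqrt{d}+1)^d$ already dominates any $\log(m+\log d)$ overhead under the standing assumption $d=O(1)$ and $m=O(\log n)$).

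A small secondary point I would mention explicitly is that the binary search maintained in \cref{lemma:cp-ecp} terminates within $m+\log d$ iterations because the smallest nonzero value of $\|a-b\|^2$ among the input points lies on a discrete grid determined by the $m$-bit coordinate encoding in $d$ dimensions; writing $\Delta(a,b)^2$ as a sum of at most $d$ squared $m$-bit differences shows that the squared distances live on a lattice of spacing at least $2^{-O(m)}$, so $O(m+\log d)$ bisection steps suffice to isolate the minimum.

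I do not expect any genuine obstacle here: with the reduction and the $\eCP$ algorithm both in hand, the corollary is a routine composition, and the only technical care needed is the probability-amplification bookkeeping described above.
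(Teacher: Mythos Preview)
Your proposal is correct and follows exactly the paper's approach: the paper simply states ``By \cref{lemma:cp-ecp}, we have the following corollary'' without further argument, so the intended proof is precisely the composition of \cref{lemma:cp-ecp} with \cref{thm:ecp1} that you spell out. Your additional remarks on success-probability amplification and on why $O(m+\log d)$ bisection steps suffice are details the paper leaves implicit, but they are correct and do not change the route.
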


\begin{remark}
For $d=O(1)$ dimension and $m=O(\log n)$ digits of each coordinate of the points, the running time of the single-shot quantum algorithm is $O(n^{2/3}\cdot \log^5 n)$.
\end{remark}

\subsection{Multiple-trial quantum walks with simple data structure}
\label{sec:multi-shots}

In the previous subsection, we provide a quantum algorithm which solves $\eCP$ in time $O(n^{2/3}(d+\log^4 n + d(2\sqrt{d}+1)^d))$, where the logarithmic cost is mainly from the cost of the skip list. In this section we present a quantum algorithm which only requires the radix tree, and thus improve the running time. 
The caveat is that, with only the radix tree data structure, the insertion would fail if there are more than one $\epsilon$-close pairs. As a result, we need to keep shrinking the size of the problem using~\cite[Algorithm 3]{ambainis07} until there is a unique solution with high probability, and then run the $\tilde{O}(n^{2/3})$ quantum algorithm for this unique-solution case. 

In the following, we first show how to solve the unique-solution $\eCP$, and then show the reduction from the multiple-solution case to the unique-solution case. 
\begin{lemma}
  \label{lem:ecp-single-solution}
  There exists a quantum algorithm that with high probability solves the unique-solution $\eCP$ with time complexity $O(n^{2/3}(\log n + d(2\sqrt{d}+1)^d))$.
\end{lemma}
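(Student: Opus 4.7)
The plan is to instantiate the MNRS quantum walk framework (\cref{lem:framework}) on the Johnson graph whose vertices are $r$-subsets of $[n]$ with $r = n^{2/3}$ and where $S$ is marked if it contains the unique $\epsilon$-close pair, using only the (simplified) augmented radix tree from \cref{sec:radixtree} as the data structure. Under the unique-solution promise, \cref{obs:2step} implies that no $\epsilon$-box ever contains more than two input points: if three points shared a box, they would witness three close pairs. Hence the registers $\mathcal{P}_1, \mathcal{P}_2$ in each leaf suffice to list all points of any box, and we need neither the hash table nor the skip lists from \cref{sec:one-shot}. The flag at the root of the radix tree records whether $S$ contains a close pair, so $\mathsf{C} = O(1)$.

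For the cost analysis I will argue, as in \cref{sec:qw-framework}, that on this Johnson graph $\varepsilon \geq \binom{n-2}{r-2}/\binom{n}{r} = \Theta(n^{-2/3})$ (because exactly one marked pair exists) and $\delta = \Omega(1/r) = \Omega(n^{-2/3})$. An insertion consists of computing an $\epsilon$-box id in $O(d)$, traversing a radix tree of depth $O(d\log(L/\epsilon)) = O(\log n)$, updating local counters and flags along the root-path, and then scanning the at most $(2\sqrt{d}+1)^d$ neighbor boxes with an $O(d)$ distance check each against the stored $\mathcal{P}_1,\mathcal{P}_2$; the same bound holds for deletion. This yields an insertion/deletion cost of $O(\log n + d(2\sqrt{d}+1)^d)$, so $\mathsf{S} = O(r(\log n + d(2\sqrt{d}+1)^d))$ and, since a quantum walk step is one insertion plus one deletion in the standard ``swap'' implementation, $\mathsf{U} = O(\log n + d(2\sqrt{d}+1)^d)$.

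The main obstacle is maintaining history-independence of the data structure as the walk proceeds. I will handle this exactly as in \cref{sec:radixtree}: the shape of the radix tree is uniquely determined by the set of stored box-ids, and the quantum state is a uniform superposition over all valid placements of tree nodes into free memory cells, tracked by a bitmap $\mathcal{B}$. Whenever a new $\epsilon$-box first appears, one round of oblivious amplitude amplification restores the uniform distribution over free cells (whose density stays above $1/3$ by construction). Crucially, because this simpler tree has no randomized hash functions or level assignments, no error analysis analogous to \cref{lemma:dt-error} is required, and the final state is exactly a function of $S$. Plugging these costs into \cref{lem:framework} gives total complexity $O(\mathsf{S} + \sqrt{1/\varepsilon}(\sqrt{1/\delta}\,\mathsf{U} + \mathsf{C})) = O(n^{2/3}(\log n + d(2\sqrt{d}+1)^d))$, as claimed.
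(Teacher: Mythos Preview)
Your proposal is correct and follows essentially the same approach as the paper: instantiate \cref{lem:framework} on the Johnson graph with $r=n^{2/3}$, use the basic radix tree of \cref{sec:radixtree} (with $\mathcal{P}_1,\mathcal{P}_2$ storing at most two points per leaf) so that $\mathsf{C}=O(1)$, $\mathsf{U}=O(\log n + d(2\sqrt{d}+1)^d)$, and $\mathsf{S}=r\cdot\mathsf{U}$, and plug in $\varepsilon=\Theta(n^{-2/3})$, $\delta=\Omega(n^{-2/3})$. Your explicit observation that the unique-solution promise forbids any $\epsilon$-box from holding three points (hence $\mathcal{P}_1,\mathcal{P}_2$ suffice) is a point the paper leaves implicit, and your remark that the absence of hash functions and skip lists removes the need for a \cref{lemma:dt-error}-style error analysis is also accurate.
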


\paragraph{Data structure for unique-solution.}

We use the radix tree $\tau(S)$ for $S$ defined in \cref{sec:radixtree}. In the following, we describe the necessary operations on $\tau(S)$.

\paragraph{Checking for $\epsilon$-close pair.} To check the existence of an $\epsilon$-close pair, we just read the flag bit in the root of $\tau(S)$, which takes $O(1)$ time.

\paragraph{Insertion.} Given $(i, p_i)$, we perform the following steps for insertion. First compute the id, $\id(p_i)$, of the $\epsilon$-box which $p_i$ belongs to. Denote this $\epsilon$-box by $g(\id(p_i))$. Using $\id(p_i)$ as the key, check if this key is already in $\tau(S)$. There are two cases:
\begin{itemize}
  \item $\id(p_i)$ is already in $\tau(S)$: insert $p_i$ into $\tau(S, g(\id(p_i)))$, increase the local counter in $\tau(S, g(\id(p_i)))$ by 1 and also set the flag. Then update the flag and local counter of the nodes along the path from $\tau(S, g(\id(p_i)))$ to the root. 
  \item $\id(p_i)$ is not in $\tau(S)$: create a new leaf node for $\id(p_i)$ and insert it into $\tau(S)$. Insert $p_i$ into this new leaf node, and increase the local counter in $\tau(S, g(\id(p_i)))$ by 1. Then, check the $\epsilon$-neighbors $g'$ of $\tau(S, g(\id(p_i)))$ that contains only one point $p'$ and set both flags if $p_i$ is $\epsilon$-close to $p'$, and in this case, update the flag bit and local counter on the nodes along the paths from $\tau(S, g(\id(p_i)))$ and $g'$.
\end{itemize}

\paragraph{Deletion.} Given $(i, p_i)$, we first compute the id, $\id(p_i)$ of the $\epsilon$-box that $p_i$ belongs to, and locate the corresponding leaf node $\tau(S, g(\id(p_i)))$. Decrease the local counter in $\tau(S, g(\id(p_i)))$ by 1 and update the local counter in the nodes along the path from $\tau(S, g(\id(p_i)))$ to the root. Check the number of points stored in $\tau(S, g(\id(p_i)))$. There are two possibilities:
\begin{itemize}
  \item There are two points in $\tau(S, g(\id(p_i)))$: unset the flag in $\tau(S, g(\id(p_i)))$ and update the flag bit in the nodes along the path to the root and delete $p_i$ from $\tau(S, g(\id(p_i)))$.
  \item $p_i$ is the only point in $\tau(S, g(\id(p_i)))$: check the $\epsilon$-neighbors $g'$ of $\tau(S, g(\id(p_i)))$ that contains only one point $p'$ and unset both flags if $p_i$ is $\epsilon$-close to $p'$, and in this case, update the flag bit on the nodes along the path from $\tau(S, g(\id(p_i)))$ and $g'$ to the root. Delete $p_i$ from $\tau(S, g(\id(p_i)))$ and delete $\tau(S, g(\id(p_i)))$ from $\tau(S)$.
\end{itemize}

As in \cref{sec:one-shot}, we use a bitmap register $\ket{\mathcal{B}}$ to keep track of the free cells in $\tau(S)$. For insertion, we maintain a uniform superposition of all possible free cells to insert a new radix tree node. For deletion, we update the bitmap $\ket{\mathcal{B}}$ accordingly. This ensures the uniqueness of the quantum data structure.

The correctness of the data structure is straightforward, and the time complexity is $O(\log n + d(2\sqrt{d}+1)^d)$ for both insertion and deletion. Also, preparing a uniform superposition for all $i \in S$ costs $O(\log n)$ using the local counter in each node. By a similar analysis of \cref{thm:ecp1}, we prove \cref{lem:ecp-single-solution} as follows.

\begin{proof}[Proof of \cref{lem:ecp-single-solution}]
The algorithm uses the framework in \cref{lem:framework} with the data structure we just described in this subsection, where $\mathsf{U} = O(\log n + d(2\sqrt{d}+1)^d))$,  $\mathsf{C} = O(1)$ and $\mathsf{S} = O(n^{2/3}(\log n + d(2\sqrt{d}+1)^d))$. 
Therefore, the running time of the algorithm is as claimed.
\end{proof}

Next, we show how to reduce multiple-solution $\eCP$ to unique-solution $\eCP$. A high-level overview of Ambainis's reduction in~\cite{ambainis07} is the following. We run the algorithm for unique-solution $\eCP$ several times on some random subsets of the given input. If the given subset contains solutions, then with constant probability there exists a subset which contains exactly one solution. 

\begin{definition}[\cite{ambainis07,itt05}]
Let $\mathcal{F}$ be a family of permutations on $f:[n]\rightarrow [n]$. $\mathcal{F}$ is $\epsilon$-approximate $d$-wise independent if for any $x_1,\dots,x_d\in [n]$ and for all $y_1,\dots,y_d\in [n]$, 
\begin{align}
    &\frac{1-\epsilon}{n\cdot(n-1)\cdot (n-d+1)}\leq ~ \Pr\left[\bigwedge_{i=1}^n f_i(x_i) = y_i\right] \leq ~  \frac{1+\epsilon}{n\cdot(n-1)\cdot (n-d+1)}.
\end{align}
\end{definition}

\begin{lemma}[\cite{ambainis07,itt05}]\label{lem:ehash}
Let $n$ be an even power of a prime number. For any $t\leq n$, $\epsilon>0$, there exists an $\epsilon$-approximate $t$-wise independent family $\mathcal{F}=\{\pi_j| j\in [R]\}$ of permutations $\pi_j: [n]\rightarrow [n]$ such that: 
\begin{itemize}
    \item $R = O\left( \left(n^{t^2}\cdot \epsilon^{-t}\right)^{3+o(1)}   \right)$; 
    \item given $i,j$, $\pi_j(i)$ can be computed in time $O(t\log^2 n)$.
\end{itemize}
\end{lemma}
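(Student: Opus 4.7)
The plan is to reconstruct the construction of~\cite{itt05} by going through an intermediate family of $t$-wise independent functions and then converting to permutations. First, I would take an \emph{exact} $t$-wise independent family of functions $h: [n] \to [N]$ for $N$ a prime power with $N = \mathrm{poly}(n/\epsilon)$. Such a family can be built by evaluating polynomials of degree less than $t$ over $\mathbb{F}_N$; it has size $N^t$, uses $O(t \log N)$ bits of seed, and each value $h(i)$ can be computed in time $O(t \log^2 N) = O(t \log^2 n)$ using Horner's rule together with fast $\mathbb{F}_N$-arithmetic. This is where the assumption that $n$ is an even power of a prime enters, since it lets me work in a convenient subfield.

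Second, I would convert each sampled function $h$ into a permutation $\pi$ via the ``sort-and-rank'' trick: declare $\pi(i)$ to be the rank of $h(i)$ inside $\{h(1), \ldots, h(n)\}$, breaking ties by the natural order on $[n]$. When $N$ is polynomially larger than $n^2/\epsilon$ the values $h(1), \ldots, h(n)$ are pairwise distinct except on an event of probability at most $\epsilon$, and on the good event the induced $\pi$ is genuinely a permutation. A coupling argument then shows that for any fixed inputs $x_1, \ldots, x_t$ and outputs $y_1, \ldots, y_t$, the probability that $\pi(x_i) = y_i$ for all $i$ matches the uniform-over-permutations probability up to an additive $O(\epsilon)$ term coming from conditioning on distinctness; this is exactly the $\epsilon$-approximate $t$-wise independence claim.

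The main obstacle is achieving the $O(t \log^2 n)$ evaluation time for $\pi_j(i)$, since the naive ranking of $h_j$ would take $\Omega(n)$ work. The fix, following~\cite{itt05}, is to replace the single sort step with a short pipeline of $t$-wise independent hashes composed in a Feistel-style manner, so that $\pi_j(i)$ can be computed from a constant number of polynomial evaluations without materializing the full image of $h_j$. The $n^{t^2}\epsilon^{-t}$ size bound then arises from the product of one $N^t$-size family per Feistel layer with $N = \mathrm{poly}(n/\epsilon)$, and the extra $(3+o(1))$ exponent accounts for boosting the probability that each layer simultaneously avoids all pairwise collisions among the $t$ probed points.
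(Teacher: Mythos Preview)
The paper does not prove this lemma at all: it is stated with the citation \cite{ambainis07,itt05} and then used as a black box in Algorithm~\ref{alg:m_to_1} and the proof of Theorem~\ref{thm:ecp2}. So there is no ``paper's own proof'' to compare your proposal against.

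As for the substance of your sketch: your first two paragraphs head in a direction that cannot work, and you correctly identify why. The sort-and-rank trick produces a permutation, but computing a single value $\pi_j(i)$ requires knowing the rank of $h_j(i)$ among all $n$ images, which is $\Omega(n)$ work and destroys the $O(t\log^2 n)$ evaluation bound that is the entire point of the lemma. Your third paragraph pivots to the right construction --- a Feistel network whose round functions are $t$-wise independent polynomials over $\mathbb{F}_q$ with $q^2 = n$ (this is where the ``even power of a prime'' hypothesis is actually used, not in the sort-and-rank step) --- and that is indeed what \cite{itt05} does. The family size comes from the product of the seed spaces across the $O(t)$ Feistel rounds, and the evaluation time is $O(t)$ polynomial evaluations of degree $<t$ over $\mathbb{F}_q$.

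If you want to write this up properly, drop the first two paragraphs entirely and start from the Feistel construction; the sort-and-rank detour is a false start, not an intermediate step.
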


The multiple-solution algorithm from~\cite{ambainis07} is as follows: 
\begin{algorithm}[h]
    \SetKwInOut{Input}{input}\SetKwInOut{Output}{output}
    \Input{Let $(S,\epsilon)$ be the input, and $|S| = n$.} 
    Let $T_1 = S$ and $j=1$\;
    \While {$|T_j|> n^{2/3}$}{
      Run the algorithm described in \cref{lem:ecp-single-solution} on $T_j$, and Measure the final state. If there is a pair with distance less than $\epsilon$, output the pair and stop\label{line:singlesolu} \;
      Let $q_j$ be an even power of a prime with $|T_j|\leq q_j\leq (1+\frac{1}{8})|T_j|$. Select a random permutation $\pi_j$ on $[ q_j ]$ from the $\frac{1}{n}$-approximately $4\log n$-wise independent family of permutations as in \cref{lem:ehash} \label{line:perm}\;
      Let 
            \begin{align}\label{eq:t_j}
                &T_{j+1} := \left\{ \pi_1^{-1}\cdot \pi_2^{-1} \cdots \pi_{j}^{-1}(i), \quad i\in \left[ \left\lceil \frac{4 q_j}{5} \right\rceil \right] \right\}.
            \end{align}
      $j \leftarrow j+1$ \label{line:t_j_1}\;
      }
      If $|T_j|\leq n^{\frac{2}{3}}$, then run Grover's search algorithm on $T_j$ for a pair with distance at most $\epsilon$ \label{line:basecase}\;
    \caption{The algorithm for multiple $\epsilon$-close pair}
    \label{alg:m_to_1}
\end{algorithm}

We have the main result of this subsection:
\begin{theorem}
  \label{thm:ecp2}
  There exists a quantum algorithm that with high probability solves $\eCP$ with time complexity $O(n^{2/3}\cdot (\log n + d(2\sqrt{d}+1)^d)\cdot \log^3 n)=~O(n^{2/3}\cdot \log^4 n)$ for $d=O(1)$.
\end{theorem}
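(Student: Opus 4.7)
The plan is to analyze Algorithm~alg:m_to_1, which reduces multiple-solution $\eCP$ to the unique-solution case (handled by Lemma~lem:ecp-single-solution) via random subsampling. Throughout, assume the input contains at least one $\epsilon$-close pair (otherwise any output of ``no pair'' is vacuously correct, and the base case at line line:basecase verifies this by Grover's algorithm).

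For correctness, let $N_j$ denote the number of $\epsilon$-close pairs contained in $T_j$. By line line:t_j_1 and Lemma~lem:ehash, each element of $T_j$ lies in $T_{j+1}$ with probability $4/5 \pm O(1/n)$, and these survival events are $(4\log n)$-wise nearly independent. Since a pair survives iff both its endpoints do, $\E[N_{j+1}] \approx (16/25)\E[N_j]$, so for some critical round $j^*$ we have $\E[N_{j^*}] = \Theta(1)$. Four-wise near-independence of element survival lets us bound $\Var[N_{j^*}] = O(\E[N_{j^*}])$, because cross terms between pairs sharing no vertex are almost independent while pairs sharing a vertex contribute only an $O(N_{j^*})$ total. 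Paley--Zygmund gives $\Pr[N_{j^*} \geq 1] = \Omega(1)$, Markov gives $\Pr[N_{j^*} \geq 2] \leq \E[N_{j^*}]/2$, and combining these yields $\Pr[N_{j^*} = 1] = \Omega(1)$. Conditioned on this, Lemma~lem:ecp-single-solution finds the unique pair with probability $\geq 2/3$, which line line:singlesolu verifies by computing the distance before halting. Since we do not know $j^*$ in advance, the algorithm probes every round; amplifying by $O(\log n)$ independent restarts of the outer loop boosts the overall success probability to $1 - 1/\poly(n)$.

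For the running time, the while loop executes at most $O(\log n)$ iterations since $|T_{j+1}| \leq (4/5)|T_j| + O(1)$ and it terminates once $|T_j| \leq n^{2/3}$. In iteration $j$, Lemma~lem:ecp-single-solution on $T_j$ costs $O(|T_j|^{2/3}(\log n + d(2\sqrt{d}+1)^d))$ unit-cost oracle queries. A query to $T_j$, however, must evaluate $\pi_1^{-1} \circ \cdots \circ \pi_{j-1}^{-1}$; by Lemma~lem:ehash with $t = 4\log n$, each component evaluation costs $O(t \log^2 n) = O(\log^3 n)$, so an oracle call costs $O(j \log^3 n) = O(\log^4 n)$. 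Folding this factor into each update and summing the resulting costs, and using the geometric sum $\sum_j |T_j|^{2/3} = O(n^{2/3})$, the total time is $O\!\left(n^{2/3}(\log n + d(2\sqrt{d}+1)^d) \log^3 n\right)$. The final Grover search at line line:basecase over the at most $O(n^{4/3})$ candidate pairs runs in $O(n^{2/3} d)$ time, which is dominated. For $d = O(1)$, this simplifies to $O(n^{2/3} \log^4 n)$.

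The main obstacle is the probabilistic isolation of a round in which exactly one pair survives. The permutation family of Lemma~lem:ehash is only approximately $(4\log n)$-wise independent, so both the first-moment estimate $\E[N_{j^*}] = \Theta(1)$ and the second-moment bound used in Paley--Zygmund must carry along a $1/n$ error term that does not pollute the $\Omega(1)$ lower bound on $\Pr[N_{j^*} = 1]$. The remainder of the analysis (geometric decrease of $|T_j|^{2/3}$, composing permutation evaluations, and the verified halting step) is essentially bookkeeping once this probabilistic core is established.
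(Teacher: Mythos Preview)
Your running-time skeleton matches the paper's: geometric decay of $|T_j|$, $O(\log^3 n)$ per permutation evaluation via \cref{lem:ehash}, and a convergent geometric sum. The paper does not re-prove correctness at all --- it simply defers to \cite{ambainis07} --- so you are attempting more than the paper does. That said, there are genuine gaps.

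\textbf{The restart amplification breaks the stated bound.} You write that ``amplifying by $O(\log n)$ independent restarts of the outer loop boosts the overall success probability to $1-1/\poly(n)$.'' Those restarts multiply the total cost by $\log n$, yielding $O\bigl(n^{2/3}(\log n + d(2\sqrt{d}+1)^d)\log^4 n\bigr)=O(n^{2/3}\log^5 n)$ for $d=O(1)$, which exceeds the theorem's $O(n^{2/3}\log^4 n)$. The Ambainis analysis (to which the paper defers) shows that a \emph{single} pass of the outer loop already succeeds with constant probability; no outer amplification is part of the algorithm or the bound.

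\textbf{The Paley--Zygmund plus Markov combination is not sufficient as stated.} From $\E[N_{j^*}]=\Theta(1)$ and $\Var[N_{j^*}]=O(\E[N_{j^*}])$ you get $\Pr[N_{j^*}\ge 1]\ge (\E N_{j^*})^2/\E[N_{j^*}^2]$ and $\Pr[N_{j^*}\ge 2]\le \E[N_{j^*}]/2$, but subtracting these gives a positive lower bound on $\Pr[N_{j^*}=1]$ only if the hidden constant in the variance bound is small enough; otherwise the difference can be negative. Ambainis's actual argument is more delicate than a single-round second-moment calculation and does not rely on isolating one specific $j^*$ in this way.

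\textbf{Minor bookkeeping.} You claim $|T_{j+1}|\le (4/5)|T_j|+O(1)$, but the algorithm draws $\lceil 4q_j/5\rceil$ elements where $q_j$ may be as large as $(1+\tfrac18)|T_j|$; the correct contraction is $|T_{j+1}|\le (9/10)|T_j|$, as the paper computes. Also, you bound a query to $T_j$ by $O(j\log^3 n)=O(\log^4 n)$ but then land on a $\log^3 n$ factor in the final expression; the paper takes $O(\log^3 n)$ per query directly, and either way the geometric sum (even weighted by $j$) still gives $O(n^{2/3})$, so this does not affect the conclusion.
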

\begin{proof}
We prove the running time of the algorithm here. For the correctness, one can check~\cite{ambainis07} for the detail.

By~\cref{eq:t_j}, the size of $T_{j+1}$ will be at most 
\begin{align}
    &\frac{4}{5} \cdot (1+\frac{1}{8}) |T_j| =~ \frac{9}{10}|T_j|. 
\end{align}
Therefore, the while-loop takes at most $O(\log n)$ iterations in the worst case. Let $n_j=|T_j|$ be the size of the instance in the $j$-th iteration. Then, the unique-solution algorithm in \cref{line:singlesolu} runs in $O(n_j^{2/3}\cdot (\log n_j + d(2\sqrt{d}+1)^d))$-time (\cref{lem:ecp-single-solution}), given an $O(1)$-time access to the set $T_j$. However, in~\cref{line:perm} each element of the random permutation can be computed in time $O(\log^3 n)$ according to \cref{lem:ehash} with $t = 4\log n$, which means the unique-solution algorithm will take $O(\log^3 n)$ time for each query to $T_j$. Note that we will not actually compute the whole set $T_{j+1}$, as shown in~\cref{line:t_j_1}, which takes too much time. 
Hence, the running time for the $j$-th iteration is $O(n_j^{2/3}\cdot (\log n_j + d(2\sqrt{d}+1)^d)\cdot \log^3 n)$. And the total running time for the while-loop is
\begin{align}
    &~\sum_{j=1}^{O(\log n)} O(n_j^{2/3}\cdot (\log n_j + d(2\sqrt{d}+1)^d)\cdot \log^3 n)\\
    \leq &~ O(n^{2/3}\cdot (\log n + d(2\sqrt{d}+1)^d)\cdot \log^3 n)\cdot \sum_{j=0}^{O(\log n)} \left(\frac{9}{10}\right)^{2j/3}\\
    = &~ O(n^{2/3}\cdot (\log n + d(2\sqrt{d}+1)^d)\cdot \log^3 n),
\end{align}
where the first inequality follows from $n_j\leq (\frac{9}{10})^{j-1}\cdot n$.
Finally, \cref{line:basecase} runs in time $O(n^{2/3}\log n)$. This completes the proof of the running time.  
\end{proof}

To conclude the quantum algorithms for solving $\CP$ in constant dimension, we have the following corollary that is a direct consequence of either \cref{thm:ecp1} or \cref{thm:ecp2}.
\begin{corollary}\label{cor:main_cp}
  For any $d=O(1)$, there exists a quantum algorithm that, with high probability, solves $\CP_{n,d}$ in time $\wt{O}(n^{2/3})$.
\end{corollary}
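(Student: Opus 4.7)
The plan is to combine the reduction from $\CP$ to $\eCP$ given in \cref{lemma:cp-ecp} with either of the quantum algorithms for $\eCP$ stated in \cref{thm:ecp1} or \cref{thm:ecp2}. Since $d = O(1)$ and we may assume $m = O(\log n)$ (each coordinate is encoded in $O(\log n)$ bits), \cref{lemma:cp-ecp} invokes an $\eCP$ oracle at most $O(m + \log d) = O(\log n)$ times, each time on the full input of size $n$ in dimension $d$.

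Next I would plug in the $\eCP$ upper bound. By \cref{thm:ecp2} (or equivalently \cref{thm:ecp1}), each call to $\eCP$ in constant dimension can be answered in time $\wt{O}(n^{2/3})$. Multiplying by the $O(\log n)$ binary-search overhead of \cref{lemma:cp-ecp} yields total time $O(\log n) \cdot \wt{O}(n^{2/3}) = \wt{O}(n^{2/3})$, absorbing the logarithmic factor into the $\wt{O}$ notation.

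The only real subtlety is error handling: the binary search in \cref{lemma:cp-ecp} is adaptive, so each of the $O(\log n)$ calls to the $\eCP$ subroutine must succeed for the final output to be correct. I would address this by standard success-probability amplification, repeating each call $O(\log \log n)$ times and taking a majority vote to drive the per-call error below $1/\poly(\log n)$, so that a union bound over all $O(\log n)$ calls yields overall success probability at least $2/3$. This adds only a $\polylog(n)$ factor, which is again absorbed into $\wt{O}(n^{2/3})$. I do not expect any genuine obstacle here; the statement is essentially a bookkeeping corollary, and the heavy lifting has already been done in building the history-independent data structures and the quantum walk analysis leading to \cref{thm:ecp1,thm:ecp2}.
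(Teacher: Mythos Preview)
Your proposal is correct and matches the paper's approach exactly: the paper states this corollary as a direct consequence of \cref{thm:ecp1} or \cref{thm:ecp2} (implicitly combined with the binary-search reduction of \cref{lemma:cp-ecp}), and your write-up simply fills in the bookkeeping the paper leaves implicit, including the standard success-amplification step.
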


\subsection{Quantum lower bound for \textsf{CP} in constant dimensions}
\label{sec:cp-lo}
We can easily get an $\Omega(n^{2/3})$ lower bound for the quantum time complexity of \textsf{CP} in constant dimension by reducing the element distinctness problem (\textsf{ED}) to \textsf{CP}. 

\begin{theorem}[Folklore]\label{thm:cp_constant}
The quantum time complexity of $\CP$ is  $\Omega(n^{2/3})$.
\end{theorem}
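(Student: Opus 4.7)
The plan is to give a straightforward reduction from element distinctness ($\ED$) to $\CP$, and then invoke the Aaronson--Shi lower bound of $\Omega(n^{2/3})$ for $\ED$ cited in the preliminaries. Since a quantum query lower bound automatically implies a quantum time lower bound (each query takes $\Omega(1)$ time), this will suffice.

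Concretely, given an $\ED$ instance $f\colon[n]\to[m]$, I would construct a $\CP_{n,1}$ instance by defining points $p_i := f(i) \in \R$ for $i\in[n]$. Then a pair $(i,j)$ with $i\ne j$ achieves $\|p_i - p_j\| = 0$ if and only if $f(i) = f(j)$. Hence, the closest pair has distance $0$ exactly when $f$ is not one-to-one, so any algorithm solving $\CP$ immediately decides $\ED$. The reduction is non-adaptive in the query sense: each access to the $\CP$ oracle $\ora_{\CP}\ket{i}\ket{0} \mapsto \ket{i}\ket{p_i}$ is simulated by one call to the $\ED$ oracle $\ora_f\ket{i}\ket{0}\mapsto \ket{i}\ket{f(i)}$ composed with the (trivial, unitary) embedding $f(i) \mapsto p_i$. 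Thus a $T$-time quantum algorithm for $\CP_{n,1}$ yields a quantum algorithm for $\ED$ on $n$ elements running in time $O(T)$.

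By \cite{AS04}, any bounded-error quantum algorithm for $\ED$ requires $\Omega(n^{2/3})$ queries, hence $\Omega(n^{2/3})$ time. Therefore $T = \Omega(n^{2/3})$, and since $\CP_{n,d}$ for any constant $d\ge 1$ contains $\CP_{n,1}$ as a special case (by padding extra coordinates with zeros, which preserves distances), the same lower bound applies in any constant dimension.

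There is no real obstacle here; the only thing to be careful about is that the reduction preserves both the dimensionality (it stays $O(1)$) and the input encoding, so that the quantum query lower bound for $\ED$ transfers cleanly to a quantum \emph{time} lower bound for $\CP$.
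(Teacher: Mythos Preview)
Your proposal is correct and matches the paper's proof essentially verbatim: both reduce $\mathsf{ED}$ to one-dimensional $\CP$ by setting $p_i = f(i)$ and observing that the closest-pair distance is zero iff $f$ has a collision, then invoke the Aaronson--Shi $\Omega(n^{2/3})$ query lower bound for $\mathsf{ED}$. Your additional remarks about oracle simulation and zero-padding to higher dimensions are fine elaborations but not needed beyond what the paper does.
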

\begin{proof}
We reduce \textsf{ED} to one dimensional \textsf{CP} by mapping the point $i$ with value $f(i)$ in  \textsf{ED}  the point $f(i) \in \mathbb{R}$ in \textsf{CP}. If the closest pair has distance zero, we know there is a collision $f(i)=f(j)$. If the closest pair has distance greater or equal to one, we know there is no collision. Therefore, \textsf{ED}'s  $\Omega(n^{2/3})$ query lower bound by~\cite{AS04} translates into  $\Omega(n^{2/3})$ time lower bound  for \textsf{CP}.
\end{proof}

\section{Bichromatic closest pair in constant dimensions}

Classically, bichromatic closest pair problem is harder than the closest pair problem. In constant dimension, the best algorithms for the closest pair problem are ``nearly linear'', while the algorithm by~\cite{AES91} for bichromatic closest pair problem  is ``barely subquadratic'', running in $O(n^{2-1/\Theta(d)})$-time. In quantum, we found that $\BCP$ is still harder than $\CP$ in constant dimension. In particular, we cannot adapt the quantum algorithm in previous section for solving $\BCP$ because the data structure cannot distinguish the points from two sets efficiently. We can only get a sub-linear time quantum algorithm for $\BCP$ using different approach, which is a quadratic speed-up for the classical algorithm. 

Nevertheless, we show that we can find an approximate solution for $\BCP$ with multiplicative error $1+\xi$ with quantum time complexity $\tilde{\Theta}(n^{2/3})$. The following theorem is a direct consequence of \cref{thm:bcp-up,thm:bcp-lo}.
\begin{theorem}
For any fixed dimension and error $\xi$, there is a quantum algorithm which can find an approximate solution for $\BCP$ with multiplicative error $1+\xi$ in time $\tilde{O}(n^{2/3})$. Moreover, all quantum algorithms which can find an approximate solution for $\BCP$ with arbitrary multiplicative error requires time $\Omega(n^{2/3})$. 
\end{theorem}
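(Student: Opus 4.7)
The plan is to prove the upper and lower bounds separately, invoking the same binary-search plus quantum-walk framework developed in \cref{sec:cp} for the upper bound, and a direct reduction from element distinctness for the lower bound.

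For the upper bound, I would first reduce $(1+\xi)$-$\BCP$ to its decision version $(1+\xi)\text{-}\BCP_{\eps}$: given $\eps > 0$, find a bichromatic pair of distance at most $(1+\xi)\eps$ whenever some pair has distance at most $\eps$. A binary search over $\eps$ (as in \cref{lemma:cp-ecp}) turns any $T(n)$-time algorithm for the decision version into an $\tilde O(T(n))$-time algorithm for the optimization version, so it suffices to solve the decision version in time $\tilde O(\xi^{-d} n^{2/3})$. To do this I would mimic the quantum walk of \cref{sec:one-shot} on the Johnson graph whose vertices are subsets $S\subseteq A\cup B$ of size $n^{2/3}$, marked when $S$ contains a bichromatic pair of distance at most $(1+\xi)\eps$. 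The setup and diffusion costs are unchanged, so $\mathsf{S} = \tilde O(n^{2/3})$ and $\mathsf U = \tilde O(1)$; what must be redesigned is the data structure answering the checking query.

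The key modification is to discretize $\mathbb R^d$ into hypercubes of side $\xi\eps/\sqrt d$ instead of $\eps/\sqrt d$, as in \cref{def:hypergrid}. With this finer grid, any two points that share a box or lie in neighboring boxes are automatically within distance $(1+\xi)\eps$ of each other once a bichromatic $\eps$-close pair exists, while any pair witnessing a marked vertex must lie in one of $O((\sqrt d/\xi)^d) = O(\xi^{-d})$ neighboring boxes. I would augment the radix tree of \cref{sec:radixtree} so that each leaf stores two local counters $\mathcal C_A,\mathcal C_B$ (one per color) together with two bounded-size lists holding representative $A$- and $B$-points in that box, and maintain an external counter that tracks whether the box participates in some bichromatic pair with a neighbor at distance $\leq (1+\xi)\eps$. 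Insertion and deletion then cost $\tilde O(\xi^{-d})$ per call because at most $O(\xi^{-d})$ neighbor boxes must be scanned, and checking is $O(1)$. Plugging $\mathsf U = \tilde O(\xi^{-d})$ into \cref{lem:framework} with $\sqrt{\varepsilon\delta}^{-1} = n^{2/3}$ yields the claimed $\tilde O(\xi^{-d} n^{2/3})$ bound. The main obstacle I expect is the history-independence of the augmented data structure: the order of insertions must not affect the state, which requires carefully choosing a canonical representative for the two stored points in each box and verifying that the external counter updates only depend on the current set, exactly as in the (more delicate) analysis of \cref{sec:one-shot}.

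For the lower bound, I would reduce element distinctness to $(1+\xi)$-$\BCP$ for any $\xi\geq 0$. Given $f:[n]\to[m]$, randomly 2-color the indices and map $i$ to the one-dimensional point $f(i)$ with color assigned by a uniformly random bit. If $f(i)=f(j)$ for some distinct $i,j$, then with probability at least $1/2$ the colliding pair receives opposite colors, producing a bichromatic pair at distance $0$; if $f$ is injective, every bichromatic pair has distance at least $1$. Since any multiplicative $(1+\xi)$ approximation of $0$ is still $0$, the approximate-$\BCP$ oracle distinguishes these two cases, and by boosting the random coloring $O(1)$ times we solve element distinctness. Aaronson--Shi's $\Omega(n^{2/3})$ lower bound \cite{AS04} therefore transfers to $(1+\xi)$-$\BCP$, completing the theorem.
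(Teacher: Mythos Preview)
Your proposal is correct and follows essentially the same strategy as the paper: reduce to the decision version by binary search, solve the decision version by a quantum walk with a finer hypergrid of side proportional to $\xi\eps/\sqrt d$, and augment the radix tree with per-color counters; the lower bound via random bichromatic coloring of an element-distinctness instance is exactly the paper's argument (the paper routes it through $\xiCP$ but the content is identical).

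One structural difference is worth noting. You walk on a single Johnson graph over subsets of $A\cup B$, whereas the paper walks on the \emph{tensor product} $J_A\otimes J_B$ of two Johnson graphs, one per color, with a state $(S_A,S_B)$ marked when $S_A\times S_B$ contains a close bichromatic pair. Both choices give $\varepsilon,\delta\ge n^{-2/3}$ and hence the same $\tilde O(n^{2/3})$ cost; your version is arguably simpler because it avoids analyzing the spectral gap of a product chain. A second difference is that the paper retains the full skip-list machinery of \cref{sec:one-shot} (with separate skip lists for each color), while you propose storing only bounded-size per-color representative lists. For the approximate problem this is justified: by \cref{lemma:represent}, once two $\frac{\xi\eps}{2\sqrt d}$-boxes are $\eps$-neighbors, \emph{every} cross-color pair between them is already $(1+\xi)\eps$-close, so the flag depends only on the per-color counters, not on which particular points are stored. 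The paper's heavier data structure is thus more than is strictly needed here; your simplification is sound, and the history-independence concern you flag is handled exactly as in \cref{sec:radixtree}.
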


Similar to solving \textsf{CP}, we reduce \textsf{BCP} to its decision version of the problem, and then apply quantum algorithms to solve the decision problem. We define the decision problem as $\eBCP$. 
\begin{definition}[$\eBCP$]
In $\eBCP$, we are given two sets $A,B$ of $n$ points $\in \mathbb R^d$ and a distance measure $\Delta$. The goal is to find a pair of points $a\in A$, $b\in B$ such that $\Delta(a,b) \leq \epsilon$ if it exists and returns \textsf{no} if no such pair exists. 
\end{definition}
To address the approximate version of \BCP, we also define the approximation version of $\eBCP$ as follows:
\begin{definition}[$\xieBCP$]
  In $\xieBCP$, we are given two sets $A,B$ of $n$ points $\in \mathbb R^d$, a distance measure $\Delta$, and $\xi$. The goal is to do the following
  \begin{enumerate}
       \item If there exists a pair of points $a\in A$, $b\in B$ such that $\Delta(a,b) \leq \epsilon$, output the pair $(a,b)$.
       \item If for all pairs of points $a\in A$, $b\in B$, $\Delta(a,b) > (1+\xi)\epsilon$,  returns \textsf{no}.
   \end{enumerate}
\end{definition}

Again, we consider $\Delta(a, b) = \norm{a-b}$ as the distance measure in this work. We show that $\BCP$ reduce to $\eBCP$ in time $O(m+\log d)$, where $m$ is the number of digits of each coordinate and $d$ is the dimension.  
\begin{lemma}
  \label{lemma:bcp-ebcp}
Given an oracle $\ora$ for $\xieBCP$, there exists an algorithm $A^{\ora}$  that runs in time and query complexity $O(m+\log d )$ solves the $\xiBCP$. 
\end{lemma}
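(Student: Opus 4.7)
The strategy mirrors the reduction from $\CP$ to $\eCP$ in \cref{lemma:cp-ecp}, adapted to a binary search that interacts correctly with the approximate decision oracle. First I would pick any $a_0\in A$ and $b_0\in B$, compute $u_0=\|a_0-b_0\|$, and take this as an initial upper bound on the optimum distance $d^{*}$. For the lower bound I would start with $\ell_0$ equal to the smallest representable positive distance (bounded below by $2^{-O(m)}$ in terms of the $m$-bit coordinate encoding). Throughout the algorithm I maintain the invariant that $d^{*}\in[\ell,u]$ and, moreover, $u$ is realized by a concrete pair $(a^{*},b^{*})\in A\times B$ that the algorithm has already seen.

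Next I would perform a binary search to shrink the bracket $[\ell,u]$. At each step, choose a probe value $\epsilon$ strictly inside the current bracket, e.g.\ roughly the (geometric) mean of $\ell$ and $u/(1+\xi)$, so that the oracle's ambiguous regime $(\epsilon,(1+\xi)\epsilon]$ does not spill above $u$. Query $\ora(\epsilon)$. If the oracle returns a pair $(a',b')$, then by the definition of $\xieBCP$ we have $\|a'-b'\|\le (1+\xi)\epsilon$, so I update $u\leftarrow \|a'-b'\|$ and record $(a^{*},b^{*})\leftarrow (a',b')$. If the oracle says \textsf{no}, then no pair has distance $\le\epsilon$, so $d^{*}>\epsilon$ and I update $\ell\leftarrow \epsilon$. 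Either way the bracket length shrinks by a constant factor, so after $O(m+\log d)$ queries I reach the stopping condition $u\le (1+\xi)\ell$, at which point the recorded pair satisfies $\|a^{*}-b^{*}\|=u\le (1+\xi)\ell\le (1+\xi)d^{*}$, a valid $(1+\xi)$-approximate bichromatic closest pair. The total cost is $O(m+\log d)$ oracle calls plus an $O(m+\log d)$-bit arithmetic overhead to pick each probe.

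The main obstacle is the oracle's ambiguous regime $d^{*}\in(\epsilon,(1+\xi)\epsilon]$: a \textsf{no} answer only rules out $d^{*}\le\epsilon$ (not $d^{*}\le(1+\xi)\epsilon$), and a \textsf{yes} answer only guarantees that the returned pair has distance at most $(1+\xi)\epsilon$, not $\le\epsilon$. Placing $\epsilon$ conservatively inside the current bracket, so that $(1+\xi)\epsilon<u$, prevents these ambiguous cases from violating either the invariant $d^{*}\in[\ell,u]$ or monotonic shrinkage of $u/\ell$; once this is set up, the analysis is essentially the same as the classical binary search in \cref{lemma:cp-ecp}.
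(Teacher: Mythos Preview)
Your proposal is correct and follows essentially the same binary-search approach as the paper's proof, which is a near-verbatim copy of the reduction in \cref{lemma:cp-ecp}: start from an arbitrary bichromatic pair's distance and bisect. You are in fact more careful than the paper about the $(1+\xi)$-slack in the oracle's promise (maintaining the bracket invariant and choosing the probe so the ambiguous zone stays inside the bracket), a point the paper glosses over entirely.
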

\begin{proof}
Let $(A, B,\delta)$ be an instance of the $\xiBCP$.  We first pick an arbitrary pair $a_0 \in A,b_0\in B$ and computes $\Delta(a_0,b_0)$. Then, we set $\epsilon$ to be $\Delta(a_0,b_0)/2$ and run the oracle $\ora$ to check whether there exists a distinct pair which distance is less than $\Delta(a_0,b_0)/2$ or not. If there exists such a pair, which we denote as $(a_1,b_1)$, then we set $\epsilon=\Delta(a_1,b_1)$ and call $\ora$ to check again. If there is no such a pair, then we set $\epsilon=3\Delta(a_0,b_0)/4$ and call $\ora$. We continuously run this binary search for $m+\log d$ iterations. Finally, the algorithm outputs the bichromatic closest pair.  
\end{proof}

In the subsections, we present a quantum algorithm for solving $\xiBCP$ and a quantum algorithm for exact $\BCP$. To complement the algorithmic results, we also give quantum lower bound for $\BCP$.

\subsection{Quantum algorithm for $\xiBCP$}
\label{sec:bcp-approx}
The quantum algorithm is based on the quantum walk framework on a tensor product of Johnson graphs. To begin with, we define the Johnson graphs $J_A$ and $J_B$ for $A$ and $B$, respectively. The vertices of $J_A$, denoted by $X_A$, is defined as the set $\{S \subseteq A: |S| = n^{2/3}\}$. There is an edge connecting $S$ and $S'$ if and only if $|S \cap S'| = n^{2/3}-1$. The Markov chain $M_A$ is defined on $X_A$ with $p_{SS'} = \frac{1}{n^{2/3}(n-n^{2/3})}$ when $S$ and $S'$ are connected by an edge. The Johnson graph for $J_B$ for $B$ and its corresponding Markov chain can be defined similarly. The \emph{tensor product} $M_A \otimes M_B$ is defined as the Markov chain based on $X_A \times X_B$ defined as
\begin{align}
  X_A \times X_B := \{(S_A, S_B): S_A \in X_A, S_B \in X_B\},
\end{align}
with transition probability
\begin{align}
  p_{(S_A,S_B)(S_A',S_B')} = p_{S_AS_A'}\cdot p_{S_BS_B'}.
\end{align}
A state $(S_A, S_B)$ is marked if there exists a pair $a \in S_A$ and $b \in S_B$ such that $\Delta(a, b) \leq \epsilon$.

Now, we examine the properties of $M_A \otimes M_B$. It is easy to see that $\varepsilon = \frac{\binom{n-1}{n^{2/3}-1}^2}{\binom{n}{n^{2/3}}^2} = \frac{1}{n^{2/3}}$. Let $\delta_A$ and $\delta_B$ be the spectral gap of $M_A$ and $M_B$ respectively. As a result of~\cite[Lemma 21.17]{AB09}, $\delta \geq \min\{\delta_A, \delta_B\} = \frac{1}{n^{2/3}}$. By \cref{lem:framework}, the cost for solving $\xieBCP$ is $O(\mathsf{S} + n^{1/3}(n^{1/3}\mathsf{U} + \mathsf{C}))$, where $\mathsf{S}$, $\mathsf{U}$ and $\mathsf{C}$ are the cost of quantum operations defined in \cref{sec:one-shot}. Before describing the data structure to achieve meaningful $\mathsf{S}, \mathsf{U}$, and $\mathsf{C}$, we first introduce a finer discretization scheme. In \cref{sec:cp}, we used a hypergrid consisting of $\epsilon$-boxes. 
Here, we discretize the space $[0, L]^d$ as a hypergrid consisting of $\frac{\xi\epsilon}{2\sqrt{d}}$-boxes. The following lemma guarantees that distance between a $\frac{\xi\epsilon}{2\sqrt{d}}$-box and its $\epsilon$-neighbor is at most $(1+\xi)\epsilon$.

\begin{lemma}
  \label{lemma:represent}
  Let $g$ and $g'$ be $\frac{\xi\epsilon}{2\sqrt{d}}$-boxes. If $g$ and $g'$ are $\epsilon$-neighbors, then for all $p \in g$ and $p' \in g'$, 
    $\Delta(p, p') \leq (1+\xi)\epsilon$.
\end{lemma}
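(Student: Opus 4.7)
The plan is a straightforward triangle-inequality argument: if we can pin down a pair of ``witness'' points $(q,q') \in g \times g'$ at distance at most $\epsilon$, then any other pair $(p,p') \in g \times g'$ lies within $(1+\xi)\epsilon$ once we account for how far $p,p'$ can be from $q,q'$ inside their respective boxes.

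First I would read off the diameter of a $\frac{\xi\epsilon}{2\sqrt{d}}$-box: instantiating \cref{def:hypergrid} with the parameter $\frac{\xi\epsilon}{2\sqrt{d}}$ in place of $\epsilon$, each such box has side length $\frac{\xi\epsilon}{2d}$ and thus Euclidean diameter (its diagonal) equal to $\frac{\xi\epsilon}{2d}\cdot\sqrt{d}=\frac{\xi\epsilon}{2\sqrt{d}}$. Hence for any $p, q \in g$, $\|p - q\| \leq \frac{\xi\epsilon}{2\sqrt{d}}$, and similarly for $p', q' \in g'$. Next, I would use the $\epsilon$-neighbor condition (\cref{defn:kneighbor}, applied in the analogous way to these finer boxes) to extract representative points $q \in g$ and $q' \in g'$ with $\|q - q'\| \leq \epsilon$; for instance, the corner points of $g$ and $g'$ (the coordinate-wise minima) lie within Euclidean distance $\epsilon$ once the id-distance in \cref{defn:kneighbor} is rescaled by the cell width.

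Combining the above via the triangle inequality gives
\[
\|p - p'\| \;\leq\; \|p - q\| + \|q - q'\| + \|q' - p'\| \;\leq\; \frac{\xi\epsilon}{2\sqrt{d}} + \epsilon + \frac{\xi\epsilon}{2\sqrt{d}} \;=\; \epsilon + \frac{\xi\epsilon}{\sqrt{d}} \;\leq\; (1+\xi)\epsilon,
\]
where the last step uses $d \geq 1$. The only mild subtlety—and really the main place I would be careful—is pinning down what ``$\epsilon$-neighbor'' means for boxes whose side length is no longer $\epsilon/\sqrt{d}$: the cleanest reading is to apply \cref{defn:kneighbor} verbatim to the id-tuples of the finer hypergrid, so that the witness points $q, q'$ above are guaranteed to exist with the required distance bound. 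Everything else is a one-line triangle-inequality computation, so no technical obstacle should arise.
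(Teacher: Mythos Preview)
Your proposal is correct and is essentially the paper's own proof: the paper applies the triangle inequality with $\id(g)$ and $\id(g')$ (treated as corner points of the boxes) in the role of your witness points $q,q'$, obtaining $\Delta(p,p')\le \frac{\xi\epsilon}{2}+\epsilon+\frac{\xi\epsilon}{2}=(1+\xi)\epsilon$. The only cosmetic discrepancy is that the paper reads the box diagonal as $\frac{\xi\epsilon}{2}$ rather than your $\frac{\xi\epsilon}{2\sqrt{d}}$, a harmless notational ambiguity that leaves the final bound unchanged.
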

\begin{proof}
Recall the definition of the $\id$ function in \cref{eq:id_fcn}. $\id(g)$ can be treated as a point, and we can measure the distance between $\id(g)$ and other points. The lemma can be proven via the triangle inequality:
\begin{align}
  \Delta(p, p') \leq \Delta(p, \id(g)) + \Delta(\id(g), \id(g')) + \Delta(p', \id(g') 
\leq \frac{\xi\epsilon}{2} + \epsilon +  \frac{\xi\epsilon}{2} 
\leq (1+\xi)\epsilon.
\end{align}
\end{proof}

In our algorithm, we need to search for all $\epsilon$-neighbors that contain the other color to report an $\epsilon$-close pair (with an multiplicative error $\xi$). It's easy to see that the number of neighbors of a box is bounded in terms of $d$ and $\xi$:

\begin{claim}\label{cla:n_neighbor_bcp}
For each $\frac{\xi\epsilon}{2\sqrt{d}}$-box, the number of $\epsilon$-neighbors is at most $\bigl(4\sqrt{d}/\xi+1\bigr)^d$.
\end{claim}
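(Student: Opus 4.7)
The plan is a direct coordinate-wise counting argument based on the definition of $\epsilon$-neighbor (\cref{defn:kneighbor}). First I would fix the geometric convention: a $\tfrac{\xi\epsilon}{2\sqrt{d}}$-box here has side length $w' := \tfrac{\xi\epsilon}{2\sqrt{d}}$ in each dimension (this is consistent with the bound $\Delta(p,\id(g)) \leq \xi\epsilon/2$ used inside the proof of \cref{lemma:represent}, since $\xi\epsilon/2 = w'\sqrt{d}$ is the diagonal of such a box). Consequently, the identifier of a box has the form $\id(g) = (x_1,\ldots,x_d)$ where each coordinate $x_i$ ranges over integer multiples of $w'$.

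Next, fix a box $g$ with $\id(g) = (x_1,\ldots,x_d)$ and consider an $\epsilon$-neighbor $g'$ with $\id(g') = (x'_1,\ldots,x'_d)$. By \cref{defn:kneighbor}, we have
\begin{align*}
    \sqrt{\sum_{i=1}^{d}(x_i-x'_i)^2}\;\leq\;\epsilon,
\end{align*}
which in particular forces $|x_i - x'_i|\leq \epsilon$ for every coordinate $i\in[d]$. Since each admissible $x'_i$ must lie on the grid $w'\Z$, the number of choices for $x'_i$ in the interval $[x_i-\epsilon,\,x_i+\epsilon]$ is at most
\begin{align*}
    \left\lfloor \frac{2\epsilon}{w'} \right\rfloor + 1 \;=\; \left\lfloor \frac{4\sqrt{d}}{\xi} \right\rfloor + 1 \;\leq\; \frac{4\sqrt{d}}{\xi} + 1.
\end{align*}

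Finally, taking the product over the $d$ independent coordinates yields the claimed bound $(4\sqrt{d}/\xi+1)^d$. This is essentially the same packing argument that gave the $(2\sqrt{d}+1)^d$ bound for the coarser $\epsilon$-grid used in the previous section, with the side length rescaled from $\epsilon/\sqrt{d}$ down to $\xi\epsilon/(2\sqrt{d})$. There is no real obstacle; the only point requiring care is to not confuse the two naming conventions for boxes (diagonal length versus side length) that appear in the paper, which is why I begin by pinning down $w'$ explicitly.
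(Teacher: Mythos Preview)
Your proposal is correct and is precisely the intended argument. The paper does not actually prove this claim (it is introduced with ``it's easy to see'' and then stated without justification), and your coordinate-wise counting---bounding $|x_i-x'_i|\le\epsilon$ and counting multiples of the side length $w'=\xi\epsilon/(2\sqrt{d})$ in that range---is exactly the same packing argument that the paper implicitly invokes for the earlier $(2\sqrt{d}+1)^d$ bound on the coarser grid. Your care in fixing the side-length convention (via consistency with \cref{lemma:represent}) is well placed, since the paper's naming is indeed ambiguous between diagonal and side length at this point.
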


Based on this finer discretization scheme, we use the data structure defined in \cref{sec:one-shot} but with simple modifications on the radix tree. Instead of using $\mathcal{L}_1, \ldots, \mathcal{L}_{\ceil{\log n}}$ as the start entry of the skip list, we use $\ceil{\log n}$ pointers for both sets $A$ and $B$. We also need local counters $\mathcal{C}^A$ and $\mathcal{C}^B$ for both colors. Now, each node in the radix tree has the following registers:
\begin{align}
  &\mathcal{D} \times \mathcal{M}_1 \times \mathcal{M}_2 \times \mathcal{M}_3 \times \mathcal{E}^A \times \mathcal{E}^B\times \mathcal{C}^A \times \mathcal{C}^B \times \nonumber\\
  &\mathcal{F} \times \mathcal{L}_1^A \times \cdots \times \mathcal{L}_{\ceil{\log n}}^A \times \mathcal{L}_1^B \times \cdots \times \mathcal{L}_{\ceil{\log n}}^B.
\end{align}
The points in $A$ (or $B$, respectively) is organized by the skip list for $A$ (or $B$, respectively). The insertion and deletion operations are similar to the data structure in \cref{sec:one-shot}, but in the procedure for updating the local and external counters and checking $\epsilon$-neighbors, we need to consider points of the other color. We formally describe the two procedures as follows.

\paragraph{Insertion.} 
Given a point $(i, p_i, x)$, where $x\in \{A,B\}$ denotes the color. We perform the insertion with the following steps:
\begin{enumerate}
  \item Insert this tuple into the hash table corresponding to $x$.
  \item Compute the id, $\id(p_i)$, of the $\frac{\xi\epsilon}{\sqrt{d}}$-box which $p_i$ belongs to and denote it by $g(\id(p_i))$.
  \item Using $\id(p_i)$ as the key, check if this key is already in $\tau'(S)$, if so, insert $i$ into the skip list for color $x$ corresponding to $g(\id(p_i))$; otherwise, first create a uniform superposition of the addresses of all free cells into another register, then create a new tree node in the cell determined by this address register and insert it into the tree. The pointer for the start entry of the skip list is initially set to 0. Insert $i$ into this skip list. Let $\tau'(S, g(\id(p_i))$ denote the leaf node in $\tau'(S)$ corresponding to $g(\id(p_i))$.
  \item Increase the local counter $\mathcal{C}^x$ in $\tau'(S, g(\id(p_i)))$ by 1.
  \item Use \cref{proc:update_insertion_ebcp} to update the external counters $\mathcal{E}^x, \mathcal{E}^{\bar{x}}$ (here $\bar{x}$ denotes the other color than $x$) and flags $\mathcal{F}$ in $\tau'(S, g(\id(p_i)))$, the leaf nodes which are corresponding to the $\epsilon$-neighbors of $g(\id(p_i))$, and their parent nodes.
\end{enumerate}

Note that the first step takes at most $O(\log n)$ time. The second step can be done in $O(d)$ time. In \cref{line:bneighbors,line:bneighbors2}, the number of $\epsilon$-neighbors to check is at most $(\frac{4\sqrt{d}}{\xi}+1)^d$ by \cref{cla:n_neighbor_bcp}.

\begin{procedure}
  \SetKwInOut{Input}{input}\SetKwInOut{Output}{output}
  \Input{$(i, p_i,x)$, the leaf node in $\tau'(S)$ corresponding to $g(\id(p_i))$, denoted by $\tau'(S, g(\id(p_i)))$.}
  Let $\bar{x}\in \{A,B\}$ and $\bar{x}\neq x$\;
  \uIf{$\mathcal{C}^x=1$ in $\tau'(S, \id(p_i))$ and $\mathcal{C}^{\bar{x}}=0$}{
    \For{all $\epsilon$-neighbor $g'$ (see \cref{defn:kneighbor}) of $g(\id(p_i))$ where $\mathcal{C}^{\bar{x}} \geq 1$ in $\tau'(S, g')$ \label{line:bneighbors}}
     {
      Increase $\mathcal{E}^x$ of $\tau'(S, g')$ by 1\;
      Increase $\mathcal{E}^{\bar{x}}$ of $\tau'(S, g(\id(p_i)))$ by 1\;
      \If{$\mathcal{E}^x$ in $\tau'(S, g')$ was increased from 0 to 1} {
        Set the flag $\mathcal{F}$ in $\tau'(S, g')$ \;
        Update the flags $\mathcal{F}$  in the nodes along the path from $\tau'(S, g')$ to the root of $\tau'(S)$ \;
      }
    }
    \If{$\mathcal{E}^{\bar{x}} \geq 1$ in $\tau'(S, g(\id(p_i)))$} {
      Set the flag $\mathcal{F}$ in $\tau'(S, g(\id(p_i)))$ \;
      Update the flags $\mathcal{F}$ in the nodes along the path from $\tau'(S, \id(p_i))$ to the root of $\tau'(S)$ \;
    }
  }
  \ElseIf{$\mathcal{C}^x=1$ and $\mathcal{C}^{\bar{x}}\geq 1$ in $\tau'(S, g(\id(p_i)))$}{
    Set the flag $\mathcal{F}$ in $\tau'(S, g(\id(p_i)))$ \;
    Update the flags $\mathcal{F}$ in the nodes along the path from $\tau'(S, g(\id(p_i)))$ to the root of $\tau'(S)$ \;
    Set $\mathcal{E}^{\bar{x}}=0$ in $\tau'(S, \id(p_i))$ \;
    \For{all $g'$ that is an $\epsilon$-neighbor of $g(\id(p_i))$ where the the local counter $\mathcal{C}^{\bar{x}} \geq 1$ in $\tau'(S, g')$ \label{line:bneighbors2}} {
      Decrease $\mathcal{E}^x$ of $\tau'(S, g')$ by 1\;
      \If{$\mathcal{E}^x$ in $\tau'(S, g')$ was decreased from 1 to 0} {
        Unset the flag $\mathcal{F}$ in $\tau'(S, g')$ \;
        Update the flags $\mathcal{F}$  in the nodes along the path from $\tau'(S, g')$ to the root of $\tau'(S)$ \;
      }
    }
  }
  \caption{Updating nodes for insertion for the bichromatic case.()}
  \label[procedure]{proc:update_insertion_ebcp}
\end{procedure}

\paragraph{Deletion.}
Given $(i, p_i,x)$, we perform the following steps to delete this tuple from our data structure.
\begin{enumerate}
  \item Compute the id, $\id(p_i)$, of the $\frac{\epsilon\xi}{\sqrt{d}}$-box which $p_i$ belongs to and denote it by $g(\id(p_i))$.
  \item Using $\id(p_i)$ as the key, we find the leaf node in $\tau'(S)$ that is corresponding to $g(\id(p_i))$.
  \item Remove $i$ from the skip list for color $x$, and decrease the local counter $\mathcal{C}^{x}$ in $\tau'(S, g(\id(p_i)))$ by 1.
  \item Use \cref{proc:update_deletion} to update the external counters $\mathcal{E}^x$ and $\mathcal{E}^{\bar{x}}$ (here $\bar{x}$ denote the other color than $x$) and flags $\mathcal{F}$ in $\tau'(S, g(\id(p_i)))$ 
  as well as in leaf nodes corresponding to the $\epsilon$-neighbors of $g(\id(p_i))$.
  \item If both local counters $\mathcal{C}^x, \mathcal{C}^{\bar{x}}$ in this leaf node are 0, remove $\tau'(S, g(\id(p_i)))$ from $\tau'(S)$, and update the bitmap $\mathcal{B}$ in $\tau'(S)$ that keeps track of all free memory cells. 
  \item Remove $(i, p_i,x)$ from the hash table.
\end{enumerate}

Note that the first step can be done in $O(d)$ time. The second step can be done in $O(\log n)$ time. \cref{proc:update_deletion_ebcp} has the same time complexity with \cref{proc:update_insertion_ebcp}. Hence, the cost for the deletion procedure is the same with that for insertion.

\begin{procedure}
  \SetKwInOut{Input}{input}\SetKwInOut{Output}{output}
  \Input{$(i, p_i, x)$ from $A$, the leaf node in $\tau'(S)$ corresponding to $g(\id(p_i))$, which we denote as $\tau'(S, g(\id(p_i)))$.}
  Let $\bar{x}\in \{A,B\}$ and $\bar{x}\neq x$\;
  \uIf{$\mathcal{C}^x$ and $\mathcal{C}^{\bar{x}}$ in $\tau'(S, \id(p_i))$  $=0$}{
    Unset the flag $\mathcal{F}$ in $\tau'(S, g(\id(p_i)))$ \;
    Update the flags $\mathcal{F}$ in the nodes along the path from $\tau'(S, \id(p_i))$ to the root of $\tau'(S)$ \;
    Set $\mathcal{E}^{x} = 0$ and $\mathcal{E}^{\bar{x}} = 0$ in $\tau'(S, \id(p_i))$ \;
    \For{all $g'$ that is an $\epsilon$-neighbor (see \cref{defn:kneighbor}) of $g(\id(p_i))$ where the local counter $\mathcal{C}^{\bar{x}} \geq 1$ in $\tau'(S, g')$ \label{line:bneighbors3}} {
      Decrease $\mathcal{E}^x$ of $\tau'(S, g')$ by 1\;
      \If{$\mathcal{E}^x$ in $\tau'(S, g')$ was decreased from 1 to 0} {
        Unset the flag $\mathcal{F}$ in $\tau'(S, g')$ \;
        Update the flags $\mathcal{F}$  in the nodes along the path from $\tau'(S, g')$ to the root of $\tau'(S)$ \;
      }
    }
  }
  \ElseIf{$\mathcal{C}^x=0$ and $\mathcal{C}^{\bar{x}}\geq 1$}{
    \For{all $g'$ that is an $\epsilon$-neighbor of $g(\id(p_i))$ where the local counter $\mathcal{C}^x \geq 1$ in $\tau'(S, g')$ \label{line:bneighbors4}} {
      Increase $\mathcal{E}^{\bar{x}}$ of $\tau'(S, g')$ by 1\;
      Increase $\mathcal{E}^x$ of $\tau'(S, g(\id(p_i)))$ by 1\;
      \If{$\mathcal{E}^{\bar{x}}$ in $\tau'(S, g')$ was increased from 0 to 1} {
        Set the flag $\mathcal{F}$ in $\tau'(S, g')$ \;
        Update the flags $\mathcal{F}$ in the nodes along the path from $\tau'(S, g')$ to the root of $\tau'(S)$ \;
      }
    }
    \If{$\mathcal{E}^x = 0$ in $\tau'(S, g(\id(p_i)))$} {
      Unset the flag $\mathcal{F}$ in $\tau'(S, g(\id(p_i)))$ \;
      Update the flags $\mathcal{F}$ in the nodes along the path from $\tau'(S, \id(p_i))$ to the root of $\tau'(S)$ \;
    }
  }
  \caption{Updating nodes for deletion for the bichromatic case.()}
  \label[procedure]{proc:update_deletion_ebcp}
\end{procedure}

\paragraph{Checking for $(1+\xi)\epsilon$-close pairs.} To check the existence of an $(1+\xi)\epsilon$-close pair, we just read the flag in the root of the radix tree. If the flag is set, there is at most one $\epsilon$-close pair in $S$, and no such pairs otherwise. This operation takes $O(1)$ time. 

\paragraph{Finding a $(1+\xi)\epsilon$-close pair.} We just read the flag in the root of the radix tree and then go to a leaf which flag is $1$. Check the local counters of the node. If both local counters are at least $1$, output the first elements in skip lists for $A$ and the first element in the skip list for $B$. Otherwise, check the external counters. Suppose $\mathcal{E}^A$ is non-zero. Then we find the $\epsilon$-neighbor of the current node whose $\mathcal{C}^B>0$ and output the first point in the skip list of $A$ of the current node and the first element in the skip list of $B$ of the $\epsilon$-neighbor. 

We have the following result.
\begin{theorem}
  \label{thm:ebcp}
  For any fixed dimension and fixed $\xi$, there exists a quantum algorithm that, with high probability, can solve $\xieBCP$ in time $O(n^{2/3}(d+\log^4(n+L) + d(\frac{4\sqrt{d}}{\xi}+1)^d))$.
\end{theorem}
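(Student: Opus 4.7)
The plan is to apply the quantum walk framework of \cref{lem:framework} directly to the tensor-product Markov chain $M_A \otimes M_B$ defined above, using the augmented color-aware data structure constructed in this subsection. We have already observed that $\varepsilon \geq n^{-2/3}$ and $\delta \geq n^{-2/3}$, so the overall cost is $O(\mathsf{S} + n^{1/3}(n^{1/3}\mathsf{U} + \mathsf{C}))$. The argument therefore reduces to bounding $\mathsf{S}$, $\mathsf{U}$, and $\mathsf{C}$, and verifying correctness of the bichromatic counter bookkeeping.

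First I would argue the setup cost. Preparing the uniform superposition over subsets $(S_A,S_B)$ with $|S_A|=|S_B|=n^{2/3}$ together with their data structure $d(S_A,S_B)$ is done by $n^{2/3}$ successive insertions from $A$ and $n^{2/3}$ from $B$. Each insertion costs $O(\log n)$ for the hash table, $O(d)$ for computing $\id(p_i)$, $O(\log^4(n+L))$ with high probability for the skip list manipulation (by the analysis of \cite{ambainis07} carried over as in \cref{lemma:dt-error}), and $O(d(4\sqrt{d}/\xi+1)^d)$ for the neighbor sweep by \cref{cla:n_neighbor_bcp}. Summing over $n^{2/3}$ insertions yields $\mathsf{S} = O(n^{2/3}(d + \log^4(n+L) + d(4\sqrt{d}/\xi+1)^d))$. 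For the update operation I would implement a walk step exactly as in \cref{thm:ecp1}, representing $\ket{S_A,d(S_A)}\ket{S_B,d(S_B)}\ket{i_A,j_A,i_B,j_B}$ and performing a SWAP-style transformation on one coordinate at a time; this amounts to a constant number of insertions and deletions, each of the same cost as above, giving $\mathsf{U} = O(d + \log^4(n+L) + d(4\sqrt{d}/\xi+1)^d)$. The checking cost $\mathsf{C} = O(1)$ follows from simply reading the root flag of the radix tree.

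Next I would verify correctness, which is the place where the bichromatic nature requires care. The goal is to show that the root flag of $\tau'(S_A \cup S_B)$ is set if and only if there exist $a \in S_A,\ b \in S_B$ with $\Delta(a,b)\le (1+\xi)\epsilon$. I would establish the invariant that after every insertion or deletion: (i) a leaf flag is set iff its local counters satisfy $\mathcal{C}^A\ge 1 \wedge \mathcal{C}^B \ge 1$, or one of $\mathcal{E}^A,\mathcal{E}^B$ is nonzero; (ii) $\mathcal{E}^x$ is nonzero iff the leaf contains a point of color $\bar{x}$ that is $\epsilon$-close (in box distance) to a point of color $x$ in some other box. The ``if and only if'' statement on $(1+\xi)\epsilon$-close pairs then follows from \cref{lemma:represent}, and the invariant itself is maintained by a case analysis on the branches of \cref{proc:update_insertion_ebcp} and \cref{proc:update_deletion_ebcp}, which differ from the monochromatic case only in that cross-color contributions are tracked through $\mathcal{E}^x$ and $\mathcal{E}^{\bar{x}}$. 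History-independence is inherited from the monochromatic construction, since the color-indexed skip lists are stored through pointers to color-specific hash tables and the radix-tree shape is determined only by the multiset of occupied box ids.

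The main obstacle I expect is precisely this correctness verification: the bookkeeping must correctly handle the boundary transitions where a box goes from having only one color to having both (or the reverse after deletion), and must avoid double-counting contributions to $\mathcal{F}$ when an insertion simultaneously creates an intra-box bichromatic pair and cross-box $\epsilon$-neighbor pairs. A secondary technical point is bounding the error from the imperfect skip lists; however, since the hash table and skip list are color-partitioned and each insertion/deletion touches only one partition, the argument of \cref{lemma:dt-error} applies verbatim, yielding a final state within $O(1/\sqrt{n})$ of the ideal one. Combining these ingredients with \cref{lem:framework} gives the claimed time bound.
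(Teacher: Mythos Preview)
Your proposal is correct and follows essentially the same approach as the paper: apply \cref{lem:framework} to the tensor-product Johnson walk with $\varepsilon,\delta \ge n^{-2/3}$, bound $\mathsf{S},\mathsf{U},\mathsf{C}$ via the color-aware radix-tree/skip-list data structure with the $(4\sqrt{d}/\xi+1)^d$ neighbor sweep from \cref{cla:n_neighbor_bcp}, and invoke \cref{lemma:represent} for the approximation guarantee. One small wording issue: the root flag is not literally ``set iff some bichromatic pair has $\Delta(a,b)\le(1+\xi)\epsilon$''; rather, flag set implies $\Delta(a,b)\le(1+\xi)\epsilon$ while $\Delta(a,b)\le\epsilon$ implies flag set, which is exactly the promise-gap needed for $\xieBCP$---but this does not affect the argument.
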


\begin{proof}
  The proof closely follows the analysis for \cref{thm:ecp1}, and the correctness of the data structure and the time complexity of its operations follow from the discussion in \cref{sec:one-shot}. Note that our algorithm will output a pair which belong to the same $\frac{\xi\epsilon}{2\sqrt{d}}$-box or two of them that are $\epsilon$-neighbors. Based on \cref{lemma:represent}, two points which corresponding hyercubes are $\epsilon$-neighbors have distance at most $(1+\xi)\epsilon$. Therefore, our algorithm could output a pair of points which distance is at most $(1+\xi)\epsilon$. Another difference is that here we need to search at most $(4\sqrt{d}/\xi+1)^d$ neighbors during insertions and deletions. As a result, $\mathsf{U} = O(d+\log^4(n+L) + d(4\sqrt{d}/\xi+1)^d)$, and $\mathsf{S}=O(n^{2/3}(d+\log^4(n+L) + d(4\sqrt{d}/\xi+1)^d)$. Again, $\mathsf{C} = O(1)$, $\delta \geq 1/n^{2/3}$, and $\varepsilon \geq 1/n^{2/3}$. Therefore, by \cref{lem:framework}, the total cost is $O(\mathsf{S} + \frac{1}{\sqrt{\varepsilon}}(\frac{1}{\sqrt{\delta}}\mathsf{U} + \mathsf{C})) = O(n^{2/3}(d+\log^4(n+L) + (4\sqrt{d}/\xi+1)^d))$.
\end{proof}

By \cref{lemma:bcp-ebcp} and the above \cref{thm:ebcp}, we have the following theorem:
\begin{theorem}
  \label{thm:bcp-up}
  For an fixed dimension and fixed $\xi$, there exists a quantum algorithm that, with high probability, can solve $\xiBCP$ in time $\tilde{O}(n^{2/3})$.
\end{theorem}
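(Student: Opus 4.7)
The plan is to combine the two results already in hand: the binary search reduction of \cref{lemma:bcp-ebcp}, which turns an exact $\xiBCP$ instance into $O(m+\log d)$ adaptive calls to a $\xieBCP$ oracle, and the quantum walk algorithm of \cref{thm:ebcp}, which solves each such $\xieBCP$ instance in time $O\bigl(n^{2/3}(d+\log^4(n+L)+d(4\sqrt{d}/\xi+1)^d)\bigr)$. Once we fix the dimension $d$ and the approximation parameter $\xi$, the neighborhood factor $(4\sqrt{d}/\xi+1)^d$ collapses to a constant, so the per-call cost is $O(n^{2/3}\cdot\log^4(n+L))$.

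Concretely, I will proceed as follows. First, given an input instance of $\xiBCP$ with point sets $A,B\subseteq\R^d$ encoded using $m=O(\log n)$ bits per coordinate (so $L=\poly(n)$), I compute an initial distance $\Delta(a_0,b_0)$ from an arbitrary pair. Then I run the binary search described in \cref{lemma:bcp-ebcp} on the threshold $\epsilon$, which produces at most $O(m+\log d)=O(\log n)$ adaptive threshold values; for each threshold, I invoke the $\xieBCP$ algorithm of \cref{thm:ebcp} as the oracle. Each call costs $\tilde{O}(n^{2/3})$ since $d$ and $\xi$ are fixed constants. Summing over the $O(\log n)$ calls and amplifying each call's success probability to $1-1/\polylog(n)$ by $O(\log\log n)$ repetitions yields total running time $\tilde{O}(n^{2/3})$ with overall success probability at least $2/3$ by a union bound.

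There is essentially no obstacle here beyond bookkeeping: the correctness of the binary search is already asserted by \cref{lemma:bcp-ebcp}, and the correctness and runtime of the $\xieBCP$ subroutine are given by \cref{thm:ebcp}. The only mild subtlety worth noting is that the oracle used by \cref{lemma:bcp-ebcp} is for $\xieBCP$ rather than exact $\eBCP$, so the pair $(a,b)$ returned at threshold $\epsilon$ may have $\Delta(a,b)\le(1+\xi)\epsilon$; however, since we always update $\epsilon$ to the actually achieved distance $\Delta(a_1,b_1)$ of the returned pair before the next query, the final output's distance is at most $(1+\xi)$ times the true bichromatic minimum, which is exactly the $\xiBCP$ guarantee. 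Thus the theorem follows immediately from the two lemmas combined with $d,\xi=O(1)$.
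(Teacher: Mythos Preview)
Your proposal is correct and matches the paper's approach exactly: the paper simply states that the theorem follows from \cref{lemma:bcp-ebcp} and \cref{thm:ebcp}, and you have spelled out that combination with a bit more care (including the success-probability amplification and the observation about updating $\epsilon$ to the achieved distance, neither of which the paper makes explicit).
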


\subsection{Quantum algorithm for solving $\BCP$ exactly}
\label{sec:bcp-exact}
In this subsection, we present a quantum algorithm for solving $\BCP$ exactly. The main idea of this algorithm is to partition $A$ into smaller subsets. Then we build data structures which support nearest-neighbor search on all subsets in superposition. We use the quantum minimum finding algorithm to find the smallest distances from $B$ to each subset, among which we use the quantum minimum finding algorithm again to find the smallest distance.   

Unlike the data structure for solving $\CP$, the data structure for $\BCP$ does not have to be uniquely represented, as no insertion and deletion are performed in this algorithm. The data structure can have expected running time instead of the worst-case running time. The total worst-case running time can be bounded by standard techniques. The nearest-neighbor search data structure we use is from~\cite{Clarkson88}, and is reformulated in the following lemma.

\begin{lemma}[\cite{Clarkson88}]
  \label{lemma:ds-bcp-nns}
  For any fixed dimension, there exists a data structure for $n$ points in $\R^d$ that can be built in expected time complexity $O(n^{\ceil{d/2} + \delta})$ for arbitrarily small $\delta$ and the nearest-neighbor search can be performed in worst-case time complexity $O(\log n)$.
\end{lemma}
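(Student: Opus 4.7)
The plan is to reduce nearest-neighbor search to point location in the Voronoi diagram of the $n$ input points. The key observation is that for any query $q\in\R^d$, its nearest neighbor among the stored points is precisely the site whose Voronoi cell contains $q$; hence a nearest-neighbor query reduces to locating $q$ in this subdivision of $\R^d$.

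First, I would construct the Voronoi diagram (equivalently, its dual Delaunay triangulation) of the $n$ points. Via the standard lifting to a paraboloid in $\R^{d+1}$ and the Upper Bound Theorem for convex polytopes, this diagram has worst-case combinatorial complexity $\Theta\bigl(n^{\lceil d/2\rceil}\bigr)$. The randomized incremental construction of Clarkson and Shor builds this diagram in expected time $O\bigl(n^{\lceil d/2\rceil+\delta}\bigr)$ for any arbitrarily small $\delta>0$, where the $n^\delta$ slack absorbs polylogarithmic overhead from the conflict-graph bookkeeping.

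Next, I would install a hierarchical point-location structure on top of the diagram. Sample the cells in a geometrically decreasing sequence of random subsets and build successively coarser subdivisions $\Pi_0 \supset \Pi_1 \supset \cdots \supset \Pi_{\log n}$, maintaining for each cell of $\Pi_{i+1}$ pointers to the cells of $\Pi_i$ it overlaps. To answer a query, descend top-down: at each level, the cell containing $q$ is found by scanning the $O(1)$ expected children of the cell found at the previous level. With an $\epsilon$-net / cutting-based refinement of the sampling (so that overlap counts are bounded in the worst case rather than just in expectation), the depth is $O(\log n)$ and the per-level work is $O(1)$ for every query, yielding worst-case $O(\log n)$ query time; the preprocessing cost is dominated by the Voronoi construction itself, i.e.\ $O\bigl(n^{\lceil d/2\rceil+\delta}\bigr)$ in expectation.

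The main obstacle is precisely the conversion from expected query time (which random sampling gives for free) to worst-case query time: a naive hierarchy might allow a few cells to have an unusually large number of children, blowing up the time for some queries. I would address this by replacing the pure random sample with a $(1/r)$-cutting of the arrangement of bisectors at each level, which guarantees that every cell is intersected by only $O(1)$ cells of the next finer level in the worst case. This is essentially the route taken by Clarkson in \cite{Clarkson88}.
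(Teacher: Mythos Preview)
The paper does not prove this lemma; it is quoted verbatim from \cite{Clarkson88} and accompanied only by the single sentence ``This data structure is based on the Voronoi diagram and its triangulation in higher dimensions.'' Your sketch is a reasonable high-level account of Clarkson's construction and is consistent with that one-line description, so there is nothing to compare against here beyond noting that the paper treats the result as a black box.
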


This data structure is based on the Voronoi diagram and its triangulation in higher dimensions. 
Using this data structure, we have a quantum algorithm for solving $\BCP$ exactly, which yields the following theorem.

\begin{theorem}
  \label{thm:bcp-exact}
  There exists a quantum algorithm that, with high probability, solves $\BCP$ for dimension $d$ with time complexity $\tilde{O}\left(n^{1-\frac{1}{2d}+\delta}\right)$ for arbitrarily small $\delta$.
\end{theorem}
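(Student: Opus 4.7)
The plan is to combine Clarkson's nearest-neighbor data structure (\cref{lemma:ds-bcp-nns}) with a two-level application of quantum minimum finding (\cref{thm:mim}), in the BDH01 style that avoids the need for history-independence. I would partition $A$ arbitrarily into $k = \lceil n/s \rceil$ subsets $A_1, \ldots, A_k$ of size at most $s$, where $s$ is a parameter to be optimized below. Using the identity $\BCP(A,B) = \min_{i \in [k]} \min_{b \in B} \min_{a \in A_i} \norm{a - b}$, I would compute the answer by two nested calls to the quantum minimum-finding subroutine: the outer one over $i \in [k]$, and, for each fixed $i$, the inner one over $b \in B$.

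Each inner iteration (with $i$ fixed) begins by building Clarkson's nearest-neighbor data structure on $A_i$ in expected time $O(s^{\lceil d/2 \rceil + \delta'})$; the resulting structure answers each nearest-neighbor query in worst-case $O(\log s)$ time. The inner quantum minimum finding over $B$ therefore costs $\tilde{O}(\sqrt{n}\,\log s)$, and one outer iteration costs $O(s^{\lceil d/2 \rceil + \delta'} + \sqrt{n}\,\log s)$ in total. Multiplying by the $\sqrt{k} = \sqrt{n/s}$ cost of the outer minimum finding yields
\begin{align}
    \tilde{O}\bigl(\sqrt{n/s}\,(s^{\lceil d/2 \rceil + \delta'} + \sqrt{n})\bigr) = \tilde{O}\bigl(n^{1/2}\,s^{\lceil d/2 \rceil - 1/2 + \delta'} + n/\sqrt{s}\bigr).
\end{align}
Balancing the two terms by choosing $s \approx n^{1/(2\lceil d/2 \rceil)}$ gives the claimed $\tilde{O}(n^{1 - 1/(2d) + \delta})$, where the parameter $\delta'$ in Clarkson's construction and any slack between $\lceil d/2\rceil$ and $d/2$ are folded into the final $\delta$.

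The main obstacle I anticipate is that Clarkson's construction comes with only an \emph{expected}-time guarantee, while quantum minimum finding requires a worst-case bound on the subroutine being minimized, since otherwise slow branches of the superposition corrupt the amplitudes. I would handle this in the standard way: truncate every construction at $C \cdot s^{\lceil d/2 \rceil + \delta'}$ steps for a sufficiently large constant $C$ and declare failure when this budget is exceeded. Markov's inequality bounds the per-branch failure probability by a small constant, and $O(\log n)$-fold amplification pushes it below $1/\poly(n)$, so that a union bound over the branches touched by the $\tilde{O}(\sqrt{k n})$ queries of the nested minimum-finding procedures keeps the overall error probability below $1/3$. A secondary issue, correctly composing the two bounded-error levels of minimum finding, is handled by the usual amplification wrapper around the inner call, contributing only polylogarithmic overhead that is absorbed into $\tilde{O}$. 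Note that history independence plays no role here: since each data structure is built, queried via the inner minimum finding, and then discarded without any insertions or deletions, Clarkson's (non-uniquely-represented) construction suffices, exactly as in the BDH01 paradigm for element distinctness.
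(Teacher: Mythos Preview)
Your proposal is correct and essentially identical to the paper's own proof: both partition $A$ into blocks of size $r$ (your $s$), build Clarkson's nearest-neighbor structure on each block, run an inner quantum minimum finding over $B$ and an outer one over the blocks, truncate the expected-time construction via Markov's inequality with $O(\log n)$-fold repetition, and balance the two terms at $r \approx n^{1/(2\lceil d/2\rceil)}$ to get $\tilde{O}(n^{1-1/(2d)+\delta})$. The only cosmetic difference is that the paper writes the optimal block size as $r = n^{1/d}/(d-1)^{2/d}$, which is the same choice up to constants.
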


\begin{proof}
  We first partition $A$ into $\ceil{n/r}$ subsets $S_1, \dots, S_{\ceil{n/r}}$, where $|S_i| = r$ for $i \in \bigl[\ceil{n/r}\bigr]$. (The value of $r$ will be determined later). For all $i \in \bigl[\ceil{n/r}\bigr]$, we can find a closest pair between $S_i$ and $B$ as follows. First, a data structure as in \cref{lemma:ds-bcp-nns} for $S_i$ is built in expected time $O\bigl(r^{\ceil{d/2} + \delta}\bigr)$, which supports nearest-neighbor search in time $O(\log n)$. Then, we use the quantum minimum finding subroutine (\cref{thm:mim}) which uses the distance reported by the nearest-neighbor search as the oracle. The closest pair between $S_i$ and $B$ can be found in time complexity $\tilde{O}(\sqrt{n})$. Note that the time complexity for building the data structure is not bounded for the worst case. However, using Markov's inequality, we know that with high probability, say, at least $9/10$, the time complexity is bounded by $O\bigl(r^{\ceil{d/2} + \delta}\bigr)$. Hence, fixing a constant $c \geq 10$, and stop the data structure construction after $c \cdot r^{\ceil{d/2} + \delta}$ steps. With at most $1/10$ probability, the construction will fail and this event can be detected by checking the solution returned by the quantum minimum finding subroutine. We run $O(\log n)$ instances of above procedure in parallel and use take the quantum minimum of all the $O(\log n)$ results. The probability that all these instances fail is at most $(1/10)^{O(\log n)} = O(1/n)$. We refer to the above procedure as the ``inner search'', and its time complexity is $O\bigl(r^{\ceil{d/2} + \delta} + \sqrt{n}\bigr)$.
  
  Next, we use the distance of the output of the inner search as the oracle and perform another quantum minimum finding subroutine for $i \in \bigl[\ceil{n/r}\bigr]$. We refer to this procedure as the ``outer search''. The probability that the closest pair between $A$ and $B$ lies in $S_i$ and $B$ is $r/n$. As a result, the number of the oracle queries for the quantum minimum finding subroutine is $\tilde{O}(\sqrt{n/r})$. The time complexity for each query is $O\bigl(r^{\ceil{d/2} + \delta} + \sqrt{n}\bigr)$. Therefore, the total time complexity is $\wt{O}\bigl((r^{\ceil{d/2} + \delta} + \sqrt{n})\cdot \sqrt{n/r}\bigr)$. A simple calculation shows that this achieves the minimum (ignoring the $\delta$ term in the exponent) when $r = n^{1/d}/(d-1)^{2/d}$, which yields the total time complexity
  \begin{align}
    \tilde{O}\left(n^{1-\frac{1}{2d} + \delta}\right).
  \end{align}
  The failure probability for each query is at most $O(1/n)$. Therefore, the total failure probability is at most $O(\sqrt{n/r}/n)=O(n^{-(1/2-1/2d)})$ for $d>1$, which can be smaller than any constant.
\end{proof}

\subsection{Quantum lower bound for $\BCP$ in constant dimensions}
Now, we give a lower bound for $\xiBCP$, which trivially holds for $\BCP$.
\begin{theorem}
\label{thm:bcp-lo}
The quantum query complexity for solving $\BCP$ is $\Omega(n^{2/3})$. Furthermore, the quantum query complexity for solving $\xiBCP$ with an arbitrary $\xi$ is also $\Omega(n^{2/3})$.
\end{theorem}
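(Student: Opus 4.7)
The plan is to prove the lower bound by a query-preserving reduction from element distinctness ($\ED$) to (approximate) $\BCP$ in one dimension, and then to invoke the Aaronson--Shi $\Omega(n^{2/3})$ quantum query lower bound for $\ED$~\cite{AS04} (exactly as in the proof of \cref{thm:cp_constant}, but routed through a bichromatic construction so that the two output sets genuinely play different roles).

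First I would describe the reduction. Given an $\ED$ instance $f:[n]\to[m]$ accessed via an oracle $\ora_f$, pick a uniformly random partition $[n] = I_A \,\dot\cup\, I_B$ with $|I_A|=|I_B|=n/2$ (this can be fixed using $O(n)$ random bits that are independent of the $\ED$ oracle, so it does not contribute any queries). Define the one-dimensional $\BCP$ instance $A = \{f(i) : i \in I_A\}$ and $B = \{f(j) : j \in I_B\}$. Each query to the $\BCP$ oracle for $A\cup B$ on an index can be simulated by a single query to $\ora_f$ on the corresponding $i\in[n]$, so any $T$-query quantum algorithm for $\BCP$ induces a $T$-query quantum algorithm for $\ED$ in this distribution.

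Next I would analyze correctness. If $f$ is injective, then all pairs in $A\times B$ have distance at least $1$ (since the image of $f$ consists of distinct integers). If $f$ has a collision $f(i)=f(j)$ with $i\neq j$, then with probability at least $1/2$ over the random partition, $\{i,j\}$ straddles $(I_A,I_B)$, and the minimum $\BCP$ distance is $0$. Hence any algorithm that solves $\BCP$ with bounded error and returns (the distance of) the closest bichromatic pair solves $\ED$ with constant success probability, forcing $T = \Omega(n^{2/3})$. For $\xiBCP$ with any fixed $\xi$, observe that $(1+\xi)\cdot 0 = 0$, so in the collision case the approximate algorithm is still required to output a pair of distance exactly $0$, while in the no-collision case every pair has distance $\geq 1 > 0$; the gap is therefore preserved under arbitrary multiplicative approximation.

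I do not expect a genuine obstacle here: the reduction is essentially folklore, and the only place to be careful is the approximation argument, which succeeds because the distance gap is between $0$ and $1$ and multiplicative error has no effect at zero. Standard amplification (repeating the random partition $O(1)$ times and taking the minimum through quantum minimum finding, which costs only a constant factor in queries) boosts the $1/2$ catch probability to $2/3$, so the $\Omega(n^{2/3})$ bound for $\ED$ transfers cleanly to both $\BCP$ and $\xiBCP$.
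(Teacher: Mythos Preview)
Your proposal is correct and takes essentially the same approach as the paper: reduce $\mathsf{ED}$ to (approximate) $\BCP$ via a uniformly random balanced bipartition of the indices, observe that a collision straddles the bipartition with probability about $1/2$, and note that multiplicative error is harmless because the target distance is $0$. The paper phrases it as $\mathsf{ED} \to \xiCP \to \xiBCP$ (first invoking \cref{thm:cp_constant} and then randomly splitting the $\CP$ instance), whereas you go directly $\mathsf{ED} \to \xiBCP$, but the argument and the key observations are identical.
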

\begin{proof}
Recall that we have shown in \cref{sec:cp-lo} that \textsf{ED} reduces to \textsf{CP} by viewing \textsf{ED} as one-dimensional $\CP$ with the minimum distance $0$. It is not hard to see that \textsf{ED} also reduces to approximate $\CP$ with multiplicative error $1+\xi$ since $0$ times $1+\xi$ is still $0$. For simplicity, we denote  approximate $\CP$ with multiplicative error $1+\xi$ as $\xiCP$. Given a set $S$ as a $\xiCP$ instance, we choose $A,B\subset S$ uniformly at random such that $A= S\setminus B$ and $|A|=|B|$. Then, with $1/2$ probability, a closest pair in $S$ has one point in $A$ and another in $B$. Therefore, if $(a,b)$ be a valid solution for $\xiBCP$ on $(A,B)$,
$(a,b)$ is also a a valid solution for $\xiCP$ on $S$ with probability $1/2$. 

It is obvious that following the same proof, $\CP$ reduces to $\BCP$. Hence, the quantum query complexity for $\BCP$ and $\xiBCP$ are both $\Omega(n^{2/3})$. This completes the proof.
\end{proof}

\section{Orthogonal vectors in constant dimensions}
\begin{theorem}\label{thm:ov_constant}
The time complexity of $\OV_{n,d}$ (\cref{def:OV}) in quantum query model is $\Theta(\sqrt{n})$ when the dimension $d$ is constant .
\end{theorem}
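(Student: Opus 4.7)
The plan is to prove matching $O(\sqrt{n})$ upper and $\Omega(\sqrt{n})$ lower bounds, handling the two directions separately.

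For the upper bound, the key observation is that in constant dimension $d$, there are only $2^d = O(1)$ distinct vectors in $\{0,1\}^d$. Let $T = \{0,1\}^d$ and enumerate the pairs $(t_1,t_2) \in T \times T$ with $\langle t_1, t_2\rangle = 0$; there are at most $4^d = O(1)$ such pairs. For each such orthogonal pair of \emph{types}, I would use Grover search (\cref{thm:grover_algo}) twice: once on $A$ to locate an index $i$ with $a_i = t_1$, and once on $B$ to locate an index $j$ with $b_j = t_2$. Each Grover call takes $O(\sqrt{n})$ queries to the input oracle (and $O(d) = O(1)$ time per query to compare the returned vector with the fixed target $t_1$ or $t_2$). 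If both searches succeed for some orthogonal $(t_1,t_2)$, output the pair $(a_i, b_j)$; otherwise report that no orthogonal pair exists. The total cost is $O(4^d \cdot \sqrt{n}) = O(\sqrt{n})$ for constant $d$. Correctness: an orthogonal pair $(a_i, b_j)$ exists iff some orthogonal type-pair $(t_1, t_2)$ is simultaneously realized by some element of $A$ and some element of $B$. To achieve a constant overall success probability, I would amplify each Grover call to success probability $1 - 1/(10 \cdot 4^d)$ using $O(\log(4^d)) = O(1)$ repetitions, then union bound over the $O(4^d)$ type-pairs.

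For the lower bound, I would reduce unstructured search on $n$ bits to $\OV_{n,1}$. Given an oracle for $x \in \{0,1\}^n$ (the search instance), construct the $\OV_{n,1}$ instance with $A = \{a_1, \dots, a_n\}$ where $a_i := x_i \in \{0,1\}$ and $B = \{1, 1, \dots, 1\}$ (the all-ones set of size $n$). Then $\langle a_i, b_j\rangle = x_i \cdot 1 = x_i$, so an orthogonal pair exists iff some $x_i = 0$. Thus the (negated) unstructured search problem reduces to $\OV_{n,1}$ with one query to the $x$-oracle per simulated $\OV$-query, so any $o(\sqrt{n})$-query algorithm for $\OV_{n,1}$ would contradict the BBBV bound~\cite{BBBV97}. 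This yields $\Omega(\sqrt{n})$ for $\OV_{n,d}$ for every $d \ge 1$.

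Neither step involves a real obstacle: the upper bound is essentially ``brute-force over $O(1)$ type-pairs with Grover inside,'' and the lower bound is a one-line reduction from search. The only mild point worth stating carefully in the write-up is that the Grover amplification and union bound justify a single $O(\sqrt{n})$-time algorithm with constant overall success probability, and that the reduction in the lower bound is query-preserving so the time lower bound follows from the query lower bound in this constant-dimensional regime.
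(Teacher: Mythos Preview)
Your proposal is correct and follows essentially the same approach as the paper: for the upper bound, both exploit that there are only $2^d=O(1)$ possible vector values and use Grover search over these types; for the lower bound, both reduce unstructured search to $\OV$. The only cosmetic differences are that the paper first Grover-searches each of the $2\cdot 2^d$ types once and then scans the $2^{2d}$ pairs (rather than running two Grover searches per orthogonal type-pair as you do), and the paper's reduction uses $d=2$ rather than your slightly simpler $d=1$; neither difference affects the $\Theta(\sqrt{n})$ bound.
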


\begin{proof}

We show  lower and upper bounds for $\OV_{n,d}$:

\paragraph*{Lower bound} We reduce the search problem to an instance of 2-dimensional $\OV$.  Let all vectors in $A$ be $(0,1)$. We map an element of the search instance with value $0$ as a vector in $B$ with value $(0,1)$ in $\OV_{n,2}$, and $1$ as $(1,0)$. An orthogonal pair must contain the vector in $B$ with value $(1,0)$ in this construction. Therefore,  if we find an orthogonal pair, we find the corresponding marked (value $1$) element in the search instance.  The $\Omega(\sqrt{n})$ lower bound of Grover's search algorithm gives an $\Omega(\sqrt{n})$ lower bound to $\OV_{n,d}$.

\paragraph*{Upper bound} The vectors only have $2^d$ possible values, $ \zo^d$, in the $d$-dimensional \textsf{OV}. For a particular value $v\in \zo^d$, we can use Grover search to check whether there exist vector $a \in A$  such that $a=v$ in time $O(\sqrt{n})$, and similarly for vectors in $B$. Therefore we can, for all $v\in \zo^d$,  check whether there exist $a\in A$ such that $ a=v$ and $b\in B$ such that $ b=v$ in $O(2^{d+1} \sqrt  n)$ time, recording the results as two $2^{d}$ bit strings $S_A$ and $S_B$. Then we check all $2^{2d}$ pairs of values $(v,w)$ whether $\ip{v}{w}=0$ , $S_A(v)=1$, and $S_B(w)=1$.  When we found such a pair $(v,w)$, we use Grover's search algorithm again to output a corresponding pair of vectors. The total running time is $O(2^{d+1} \sqrt n +2^{2d}+2\sqrt n)=\wt{O}(\sqrt n)$.
\end{proof}

\section{Acknowledgements}
We would like to thank Lijie Chen and Pasin Manurangsi for helpful discussion. We would like to thank anonymous reviewers for their valuable suggestions on this paper.

\bibliographystyle{alpha}
\bibliography{ref}

\appendix
\end{document}